\journal{ArXiv Preprints}
\pgfplotsset{compat=newest}
\newtheorem{theorem}{Theorem}
\newtheorem{lemma}[theorem]{Lemma}
\newtheorem{proposition}[theorem]{Proposition}
\newdefinition{remark}{Remark}
\newproof{proof}{Proof}
\newdefinition{definition}{Definition}
\newdefinition{problem}[definition]{Problem}
\begin{document}

\begin{frontmatter}

\title{The Minimum Edge Compact Spanner Network Design Problem}

\author[rvt]{Tathagata Mukherjee\corref{cor1} }
\ead{tathagata@intelligentrobotics.org}
\author[focal]{Alexander Veremyev }
\ead{averemyev@gmail.com}
\author[els]{Piyush Kumar}
\ead{piyush@compgeom.com}
\author[elp]{Eduardo Pasiliao Jr}
\ead{elpasiliao@gmail.com}
\cortext[cor1]{Corresponding author}
\address[rvt]{Intelligent Robotics Inc.}
\address[focal]{University of Central Florida}
\address[els]{Compgeom Inc.}
\address[elp]{Air Force research Labs.}






\begin{abstract}
In this paper we introduce and study the Minimum Edge Compact Spanner~(MECS) problem. We prove hardness results
related to the problem, design exact and greedy
algorithms for solving the problem, and show related experimental
results.
The MECS problem looks for sparse subgraphs of an input graph, such that the average shortest path distance 
is preserved to a constant factor. 
Average distance is a measure of the ease of communication over the network.
As a 
result such problems have applications in areas where one wants to substitute a dense graph with a sparse subgraph while maintaining 
a low cost of communication. 

\end{abstract}

\begin{keyword}
Algorithms, Complexity, Theory, Graph Theory, Integer Programming, Greedy
Algorithms
\end{keyword}

\end{frontmatter}


\section{Introduction}


In this paper we define and study the minimum edge 
compact spanner~(MECS) problem.
The MECS problem is based on the idea
of average distance or average path length~(APL) in a directed or undirected graph. 

The average distance, also called the average all-pairs-shortest-path distance or APL, 
denoted by $\mu$, is defined as the average of the shortest path distance between all the 
vertex pairs.  This metric can be used to measure efficiency of information
flow over a network~[\cite{ellens2013graph}]. The meaning of the
APL as a measure of robustness of a communication network, follows from
the fact that the shorter the APL, the more robust the network. As a result, the
APL is one of the three most robust measures of network topology, 
along with its clustering coefficient and degree distribution~[\cite{ellens2013graph}]. Examples of 
the use of APL in a network, include the average number of clicks
used to go from one website to another and the average number of hops
that one might go through, to get in touch with a complete stranger in a social
network. 

The APL has important implications in network design. In a real 
network, like the World Wide Web, a short APL facilitates the quick transfer 
of information from one node to another and hence reduces costs. The efficiency of mass 
transfer in a metabolic network can be judged by studying its APL~[\cite{jeong2000large}]. A power 
grid network will have less losses if its APL is minimized. Most real networks 
have a very short APL leading to the concept of a small world, where everyone is 
connected to everyone else through a very short path. As a result, most models of real networks 
are created with this condition in mind. One of the first models which tried to explain real 
networks was the random network model~[\cite{kirkby1976tests}]. It was later followed by the 
models of Watts and Strogatz~[\cite{watts1998collective}] and the random graph model of networks
~[\cite{newman2002random}]. Still later there were scale-free networks starting with the BA model 
~[\cite{adamic2000power}]. All these models had one thing in common: they all predicted networks with 
very short average path length~(APL). 

The average path length is different from the \emph{diameter} of a graph. 
The \emph{diameter} is defined as the \emph{longest} shortest path between
any two nodes in a graph~[\cite{bondy1976graph}]. It is easy to see that the APL is bounded 
above by the diameter. However, in most cases it is much shorter than the diameter.
Thus the APL can be used to measure the average performance of the 
network whereas the diameter is used to measure the worst case behavior. Moreover,
the APL is an upper bound to the independence number of a 
graph~[\cite{chung1988average}].


Given a graph $G$, the APL can change if the vertex set or the edges of 
the graph change. 
In this work, we assume that the edge set changes, because of \emph{edge deletions}. 
Such deletions
have the potential to increase the APL of the graph. Our goal is to
find subgraphs of the original graph, obtained through the deletion of edges, such that
the APL does not increase too much with respect to the APL of the original graph. 
In essence, we are looking at the following problem:
given a graph $G$, we want to find a sparse subgraph $G_s$, such that the average
path length of $G_s$ is bounded above by a constant. We choose this constant to be a
constant multiple of the average path length of the original graph $G$.

A related problem that has been studied extensively is that of a spanner~
[\cite{althofer1993sparse}]. Spanners are sparse subgraphs of a graph $G$,
such that the shortest path distance between every pair of nodes is preserved
up to a given factor. Next we formally define a spanner but before moving forward
we note that given a graph $G=(V,E)$, we can define a metric space using the all pairs
shortest path distances in the graph. Thus we define a spanner using the underlying 
metric space. Formally, a spanner is defined as follows:

\begin{definition}[Spanner] \label{spannerdef} 
Let $(V, d)$ be a finite metric space. An undirected graph $G=(V,E)$ is a
\emph{$t$-spanner} for $V$ if for every pair of vertices $x,y \in V$ we have that 
$d_{G(x,y)} \leq t \cdot d(x,y)$, where $d_{G(x,y)}$ is the length of the shortest path from 
$x$ to $y$ in $G$ (where the length of an edge $\{u,v\} \in E$ is $d(u,v)$). $t$ is called the
\emph{stretch} of the spanner.
\end{definition}

We note that a spanner preserves the following, to a constant factor, given by the stretch $t$:
(1) the shortest path distance between every vertex pair is preserved to a constant factor
(2) the diameter of the graph is preserved to a constant factor and (3) the APL is preserved 
to a constant factor. Thus if we want to preserve the APL of a graph while sparsifying it,
we can use any $t$-spanner algorithm for doing the same.

However, for a communication network with a small APL, we can ensure communication efficiency of a 
subnetwork by bounding its APL by a small constant. We do not need to ensure that all the shortest 
path distances or the diameter are preserved. Therefore using standard $t$-spanner algorithms for 
computing such sparse subgraphs is an overreach. This gives us the motivation 
to define a new problem, that we call the \emph{Minimum Edge Compact Spanner problem}. 
The goal of this problem is to find sparse subgraphs of the original input graph such that their APL is bounded 
above by a small constant. In general this constant can be any design parameter of the underlying network. For our 
purposes, we consider it to be a multiple of the APL of the original network.

Informally, the MECS problem asks the following question: given a graph $G$, is there a \emph{sparse} subgraph 
$G_s$ such that the APL is preserved up to a constant factor. Formally, let $G=(V,E,W)$ be a graph with vertex 
set $V=\{1, \dots, n\}$ and edge set $E\subset V\times V$ and a weight function $W:V \times V \rightarrow \mathbb{R}$. 
Although we consider the weights to be real valued, the  MIP implementations are not able to handle 
arbitrary weights on the edges of the graphs. Moreover, we prove that the underlying problem is $NP$-complete in the simpler
case when the underlying weights are integers, which in turn proves the $NP$-hardness of the original problem. 
To formalize the settings, let the average distance in the graph $G$ be denoted by $\mu_G$. For the weighted graph we 
define $\mu_G$ as follows: 

\begin{definition}
Let $v_i,v_j \in V$. Let $d(v_i,v_j)$ denote the weight of the shortest path
between the vertices $v_i$ and $v_j$. Then the average distance is defined as
$\mu_G = \frac{\sum_{i \neq j} d(v_i,v_j)}{n(n-1)}$, where $n = |V|$
\end{definition}

Let $\mu_{G_s} \geq \mu_G$ be the average distance of the sparse subgraph $G_s$.
Now the problem can be stated as follows: given a positive constant $t > 1$ find a subset 
$E_s \subseteq E$ with the \emph{minimum} number of edges~(or minimum total edge weight) 
such that the average distance $\mu_{G_s}$ of the sub-graph $G_s=(V,E_s,W_s)$, where 
$W_s$ is the restriction of $W$ to $G_s$, satisfies $\mu_{G_s} \leq t\mu_G$ or more generally 
$\mu_{G_s} \leq C$ for some constant $C$. Thus the basic optimization problem can be posed as 
the following integer program:
\begin{align*}
\textrm{minimize} \sum_{e \in E} x_e \cdot W_e \\ \textrm{subject to} \ \mu_{G_s} \leq t \mu_G \\ x_e \in \{ 0,1 \}
\end{align*}
The variable $x_e$ is an indicator variable and takes a value of $x_e=1$ if $e \in E_s$ and $x_e=0$
otherwise. We note that the condition $\mu_{G_s} \leq t \mu_G$ encodes the requirement that the resulting
subgraph needs to be connected. For if not then this condition is violated as in that case
$\mu_{G_s} = \infty$. We also note that, more generally we can pose the following problem:
\begin{align*}
\textrm{minimize} \sum_{e \in E} x_e \cdot W_e \\ \textrm{subject to} \ \mu_{G_s} \leq C \\ x_e \in \{ 0,1 \}
\end{align*}
$C$ is a constant that can depend on the average path length of the input graph, either
in an additive or a multiplicative way. Formally we pose the \emph{MECS} problem as follows:

\begin{definition}[Minimum Edge Compact Spanner (MECS) Problem]
Given a graph $G=(V,E,W)$ with the average distance $\mu_G$ and a positive constant $t > 1$ find a 
subset $E_s \subseteq E$ with the minimum number of edges such that the average
distance $\mu_{G_s}$ of the graph $G_s=(V,E_s,W_s)$ satisfies $\mu_{G_s} \leq C$ where $C$ is a 
constant or we may set $C = t\mu_G$ or $C = \mu_G + \delta$ for some small $\delta > 0$.
We also note that $W_s$ is a restriction of $W:V \times V \rightarrow \mathbb{R}$ to $G_s$.
\end{definition}

As mentioned before, we are the first to introduce the MECS problem. Going forward we first review some
of the work on related problems and then we establish the complexity of the MECS problem. More precisely, 
we show that the MECS problem is $NP$-hard. Then we study exact algorithms for the MECS problem based on
integer programs. Finally we study greedy algorithms for this problem and compare the performance of the exact
algorithms with them.

\section{Previous Work}


The problem of sparse spanners for general graphs was introduced in~[\cite{peleg1989graph,peleg1987optimal}]. It was proved that the 
problem is $NP$-complete and since then it has been studied extensively. Several different variations of the sparse spanner problem 
has also been studied and several results on the complexity of such problems is also available. For example, one such variation is 
the minimum stretch spanning tree problem. In [\cite{cai1995tree}] it was proved that for a given graph $G$, the problem of 
deciding whether $G$ has a tree $t$-spanner is $NP$-complete for any fixed $t \geq 4$ and is linear time solvable for $t = 1, 2$. For 
the problem of \emph{tree} $t$-spanners, an $O(\log n)$ approximation algorithm is also known for finding the smallest value of $t$ 
for which such a spanner exists~[\cite{emek2004approximating}]. The problem of approximability of sparse spanners for general graphs
was studied in~[\cite{elkin2000strong,elkin2000hardness,elkin2005approximating}].

Among the several different variations of the sparse spanner problem, the ones that we are interested in, relate to the reliability 
of the underlying sparse spanner. One variation, that has been widely studied, is that of a \emph{Fault tolerant spanner}. 
These were first studied by~[\cite{levcopoulos2002improved}]. The first results on fault tolerant spanners for 
general graphs was given in~[\cite{chechik2010fault}], which were later improved by~[\cite{dinitz2011fault}]. 
Subsequently fault tolerant spanners have been studied extensively~[\cite{solomon2012fault,solomon2014hierarchical,
chan2012sparse}] and several different constructions for fault tolerant spanners are known in Euclidean as well as 
doubling spaces~[\cite{solomon2012fault,dinitz2011fault}]. As in the case of a simple $t$-spanner, we are interested 
in finding subgraphs that satisfy the $t$-spanner property. However, we have the added constraint that the resulting
subgraph should be tolerant to vertex faults, that is the subgraph should still be a $t$-spanner even if a certain predetermined
number of vertices fail. Thus if the resulting subgraph is tolerant to $k$ vertex failures, it is called the $k$-fault
tolerant $t$-spanner. Another variation of spanners is called the \emph{robust spanner}~[\cite{bose2013robust}]. Here if a subset 
$S \subseteq V$ of vertices fail, then a super set $S^* \supseteq S$ are affected and the rest of the graph with $V \backslash S^*$ 
vertices is still a $t$-spanner.

A related class of problems is that of \emph{Network Design} and have been studied for a long time. The sparse spanner problem is 
a special instance of a network design problem. The general description of a network design problem is as follows: given a set 
of nodes (offices, switches), possible links, costs for each link, and either the number of permitted link or node failures between 
each pair of nodes, we want to design a cost effective communication network. Some of the well known network design 
problems are the minimum spanning tree problem~[\cite{cormen2009introduction}], the Steiner tree problem~[\cite{hwang1992steiner}],
the survival network design problem~(SNDP)~[\cite{botton2013benders}], the uniform buy-at-bulk network design problem
~[\cite{awerbuch1997buy}], and the traveling salesman problem~[\cite{cormen2009introduction}]. Most of these problems
attempt to find subgraphs of an input graph with various \emph{spanning} properties~[\cite{ma2016minimum,golden2005heuristic,melkonian2005primal}]. 
For example, in the Steiner tree problem, we are given a graph with vertex set $V$ and a set of terminal nodes $T$. 
The goal is to find a minimum cost tree in the input graph that connects all the terminals. In the survival network design 
problem~(SNDP) we are given, for every pair of vertices $i,j$ in the input graph, an integer $r_{ij}$. The goal is to find 
a subgraph in which there are at least $r_{ij}$ edge disjoint paths between the vertices $i$ and $j$. This is very similar 
to the problem of a finding a fault tolerant spanner, the difference being that in the case of fault tolerant spanners we 
consider vertex faults whereas in the case of SNDP we consider path faults. Interested readers may refer to~[\cite{gupta2011approximation}] 
for an excellent survey on approximation algorithms for network design problems.

One of the earliest results on the hardness of network design problems was presented in~[\cite{johnson1978complexity}]. 
They proved that the problem of finding a subgraph with minimum APL under weight constraints is $NP$-complete. A study on the worst case behavior 
of heuristics for this problem was studied by Wong in~[\cite{wong1980worst}]. 
Subsequently many variations of network design problems have been studied in the literature. Most of the network design problems are 
known to be $NP$-complete. A comprehensive survey of approximation algorithms for the variants of the network design problem can 
be found in~[\cite{gupta2011approximation,chuzhoy2008approximability}]. To the best of our knowledge the Minimum Edge Compact 
Spanner~(MECS) network design problem has never been studied before. There are two problems that are close to the one that we study.
The first is that of the $\epsilon$-slack spanner~[\cite{chan2006spanners}]. This paper studied the problem of \emph{spanners with slack}. 
They are defined as follows:

\begin{definition}[$\epsilon$-slack spanner] 
Suppose that $M=(V,d)$ is a metric. Then $H_{\epsilon}=(V,E_H)$ is an $\epsilon -$slack spanner for the metric $M=(V,d)$ 
if for given $0 < \epsilon < 1$, for any vertex $v \in V$ the furthest $(1-\epsilon) \cdot n$ vertices $w$, have the property 
that $d(v,w) \leq d_H (v,w) \leq D \cdot d(v,w)$ for some constant $D$. There is no constraint on the distances to the nearest
$\epsilon \cdot n$ vertices, where $n = |V|$
\end{definition}
Note that $d_H$ is the shortest path metric on the graph $H_{\epsilon}$. The resulting spanner uses the furthest $(1-\epsilon) \cdot n$ vertices. 
The $\epsilon \cdot n$ vertices that are nearby are not required to satisfy the constraint on the path length. 
In the paper Chan et al. gave a construction for these structures and showed that they distort the average path length by
a constant factor. We note from the definition of the problem, that this is not the MECS problem. In fact the MECS
problem is a more general version of the $\epsilon$-slack spanner problem. On the other hand the problem studied in [\cite{johnson1978complexity}]
looks for subgraphs that minimize the APL subject to constraints on the weight of the
underlying graph. 

The most common approach for distance-based network design problems which takes into account the distances between nodes in 
networks is the flow-based method~[\cite{Botton11,Gouveia08,Pirkul03}]. Other methods based on the MIP based approach include the 
path based approach~[\cite{veremyev_cnp_2015}]. However, one of the major problems of these exact methods for solving $NP$-complete
problems is the time it takes to solve problems of moderate sizes. As mentioned before the MECS problem has not been studied 
before and as such we do not know of any work that studies this problem from the perspective of exact algorithms.

\section{Hardness.}

Now we are ready to state and prove hardness results for the MECS problem. More precisely, we prove that the MECS problem
is $NP$-complete. We prove NP-completeness in the simpler case when the edge weights are integers. From the perspective 
of complexity, it makes sense to just show that even if we restrict ourselves to the simpler case where the edge weights of the 
underlying graph are integers, the problem is still $NP$-complete.
We assume that we are given a graph $G=(V,E,W)$ where $W$ is a function from $E$ to $\mathbb{N}$.
Thus, given an edge $e \in E$ we have an integer weight $W(e)$ associated with $e$. As before we denote 
the average path length~(APL) in the graph $G$ as $\mu_G$ and the MECS problem aims to find a subgraph 
$G_s \subseteq G$  such that $\mu_{G_s} \leq t \cdot \mu_G$ for some $t \geq 1$ 
and has the minimum total weight among all such graphs.  We note that for $t=1$ we return the graph $G$ as the solution. 
So we consider the problem for $t > 1$. We also note that the solution to the problem has to be a connected graph. 
For if the graph is not connected the average distance, by definition, is infinite.

Our proof derives from the construction used by [\cite{johnson1978complexity}] for proving the $NP$-completeness 
of the classical network design problem where the goal is to minimize the average path length of a graph subject 
to constraints on its weight. Thus the problem considered in [\cite{johnson1978complexity}] is a complementary one 
and hence the underlying decision problems take the same form. This gives
us the opportunity to use their constructions for proving the $NP$-completeness of the MECS problem as well.
In some sense we reinvented the constructions and then realized that they have been used before. As a result
some of the technicalities of the proofs are a bit different and hence we have decided to present them in
detail here.

In order to prove that the MECS problem is $NP$-hard we consider the decision version of the problem and 
show that it is $NP$-complete. Let $t > 1$ and let us write $t \cdot \mu_G = c$. Let the spanning subgraph 
under the constraint of APL be denoted by $G_s$. Then the decision version of the MECS problem is as follows: 
Does there exist a subgraph $G_s=(V,E_s,W_s)$, $E_s \subseteq E$ and $W_s$ is a restriction of $W$ to $G_s$ 
such that:

\begin{equation}
\sum_{e \in E_s} W_e \leq r \textrm{ and} \ \mu_{G_s} \leq c, \ c \ \textrm{finite}
\end{equation}

Instead of writing $\mu_{G_s}$ in the above equation, we could simply replace it with the sum of the
all pairs shortest path weights. This will have the effect of changing the constant $c$ to $C$, where
$c = \frac{C}{n(n-1)}$ where $n=|V|$.

\begin{equation} \label{desmecs} 
\sum_{e \in E_s} W_e \leq r \textrm{ and} \sum_{(u,v):u,v \in V;u \neq v} d(u,v) \leq C, \ C \ \textrm{finite}
\end{equation}

It is easy to see that the problem is in $NP$, for given a graph it is easy to check whether it is a feasible solution of
problem [\ref{desmecs}]. Now we proceed to prove that this problem is $NP$-complete. 
We establish the $NP$-completeness in two ways. The first one uses a reduction from the Subset Sum~[\cite{cormen2009introduction}] 
problem. This is a standard reduction and the proof is short and concise. However, it still leaves the possibility 
of having special and simple instances of the problem which are not $NP$-hard. This is where our second proof 
comes in. The second proof shows that the problem is $NP$-hard even if we restrict ourselves to finding spanning trees
having the MECS property. This is a much more stronger result  and goes on to show that the problem is hard
even in the simplest of all cases, namely when we are looking for spanning trees. We start with the reduction from 
\emph{Subset Sum}~[\cite{cormen2009introduction}], which is a known $NP$-complete problem. We state the subset sum problem 
below:

\begin{definition}[Subset Sum Problem] \label{ss}
Given a set $U$ of integers $\{a_1,a_2, \ldots, a_k \}$ and a target integer $b$, the subset sum problem asks the 
following question: is there a subset $S$ of $\{ 1,2, \ldots, k \}$ such that $\sum_{i \in S} a_i = b$ 
\end{definition}

This problem [SP13] from [\cite{garey1979computers}] is known to be $NP$-complete.

\subsection{NP Completeness} 
 
In order to prove $NP$-completeness we use a gadget that given an instance of the subset sum problem creates an instance of 
the MECS feasibility problem~[\ref{desmecs}]. Then we show that solving this new problem is equivalent to solving the original
subset sum problem.

\begin{definition}[Construction for Reduction] \label{npc}
Consider an instance of the subset sum problem with the universe $U = \{a_1,a_2, \ldots, a_k \}$.
We note that $k= O(1)$ and is an input to the problem. We create a graph $G$ using the following steps.

\begin{enumerate} 
\item For every $i \in \{ 1,2, \ldots, k \}$ create two nodes $i,i'$ in $G$ 
\item Add a node $N$ to the graph such that the vertex set for $G$ is $N \cup \{i,i'\}, i \in \{1,2, \ldots, k \}$ 
\item The edge set $E$ consists of the following edges: for every $i \in \{1,2, \ldots, k \}$ add the edges $(i,i'),(N,i),(N,i')$ 
\item The weight function is defined as follows: $W(i,i')=a_i$, $W(N,i')=a_i$, $W(N,i)=a_i$ 
\item Let $T = \sum_{i \in \{1,2, \ldots, k \}} a_i$. We set $r = 2T+b$ and $C = 4kT-b$ in problem [\ref{desmecs}]
\end{enumerate}

\end{definition}

The resulting graph, which is an input to problem [\ref{desmecs}] is shown in figure [\ref{fig:input}].




\begin{figure}[!tbp]
  \centering
  \subfloat[]{\includegraphics[width=0.4\textwidth]{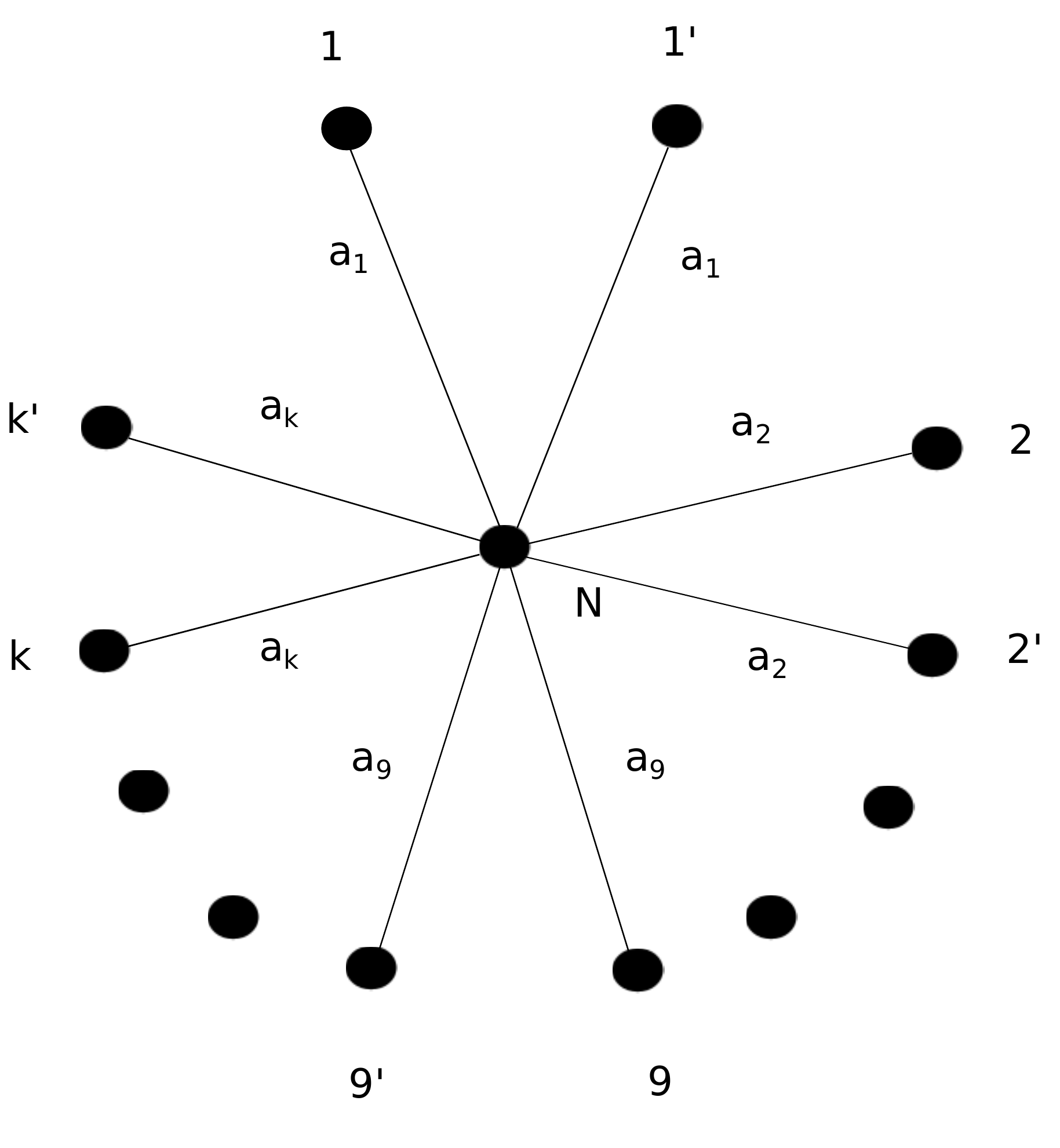}\label{fig:spoke}}
  \hfill
  \subfloat[]{\includegraphics[width=0.4\textwidth]{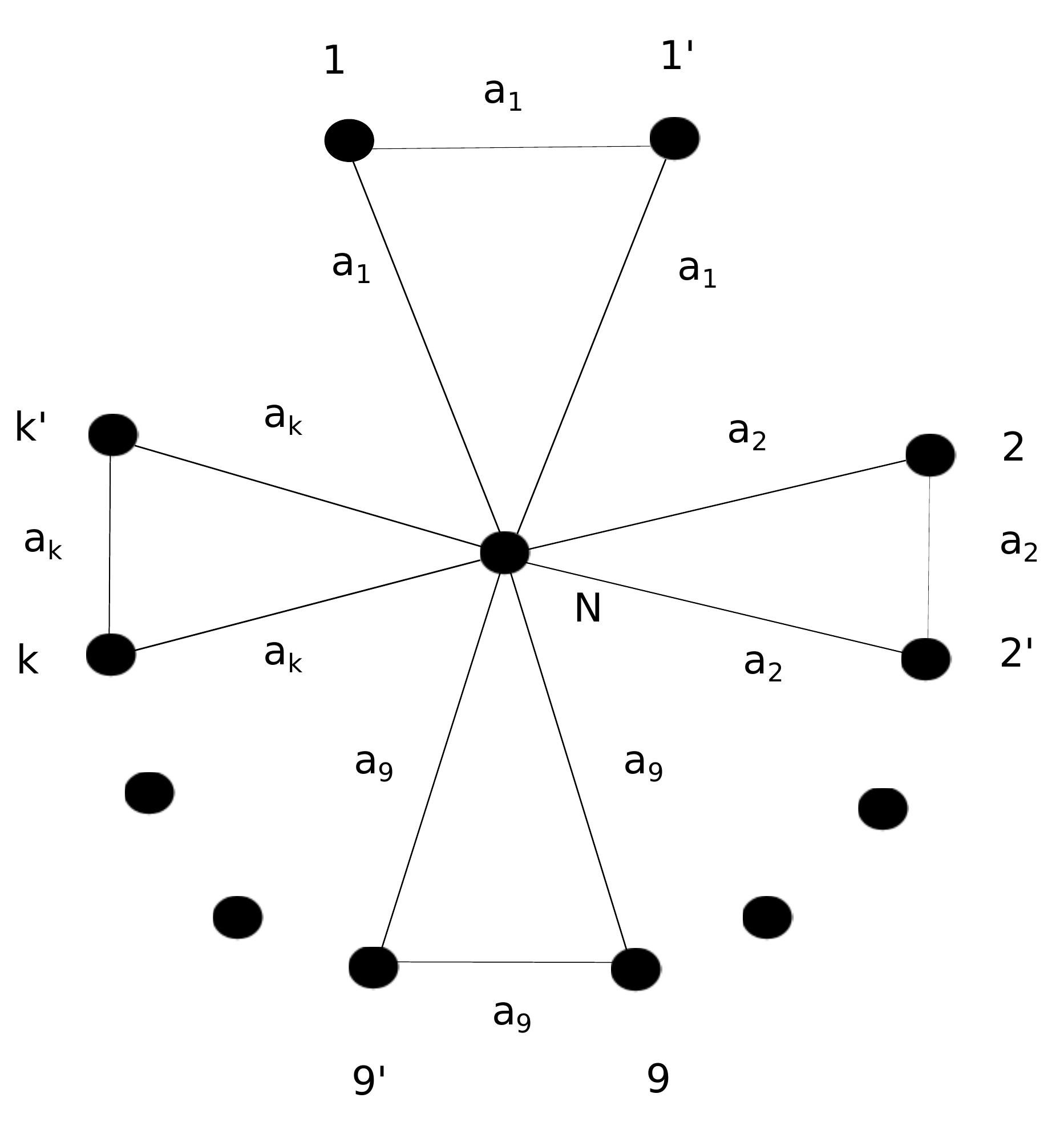}\label{fig:input}}
  \caption{Illustration of the graphs used for the $NP$-completeness proof of the MECS problem (a) The graph that create
  a \emph{spoke} pattern and is included in every feasible solution, this is the graph $G_f$ (b) The input to the MECS feasibility solver}
\end{figure}

We claim that if we are given a feasible solution for problem [\ref{desmecs}] on $G$, then we can get a solution for 
problem [\ref{ss}] in polynomial time. In order to do this we define another graph $G_f$ as follows: $G_f=(V,E_f,W_f)$ 
where $E_f = \{(N,i),(N,i') \},i \in \{1,2,\ldots, k\}$ as before and $W_f$ contains the corresponding weights
Now we are ready to prove the $NP$-completeness of the MECS problem. We establish this through the following lemmas. 

\begin{lemma}
The total edge weight of $G_f$ is $2T$.
\end{lemma}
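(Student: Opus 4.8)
The plan is to prove this by a direct computation: enumerate the edges of $G_f$ and sum their weights according to the weight function fixed in Definition~\ref{npc}. Since $G_f$ is a small, explicitly described subgraph, no structural argument is needed---only careful bookkeeping.

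First I would recall that $E_f = \{(N,i),(N,i') : i \in \{1,2,\ldots,k\}\}$, so that $G_f$ is precisely the \emph{spoke} subgraph in which the hub $N$ is joined to each of the $2k$ peripheral vertices $i$ and $i'$, while the rim edges $(i,i')$ are deliberately omitted. Next I would invoke the weight assignment $W(N,i) = a_i$ and $W(N,i') = a_i$, which shows that for each fixed index $i$ the two incident spoke edges contribute a combined weight of $a_i + a_i = 2a_i$. Summing over all indices then yields
\[
\sum_{e \in E_f} W_f(e) \;=\; \sum_{i=1}^{k} \bigl(W(N,i) + W(N,i')\bigr) \;=\; \sum_{i=1}^{k} 2a_i \;=\; 2\sum_{i=1}^{k} a_i \;=\; 2T,
\]
where the final equality is just the definition $T = \sum_{i=1}^{k} a_i$.

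I do not expect any genuine obstacle here, since the claim is a one-line summation rather than a substantive argument; it functions as a setup lemma that will later be combined with the feasibility analysis of problem~\eqref{desmecs}. The only point requiring care is the precise membership of $E_f$: because $G_f$ retains the spoke edges $(N,i)$ and $(N,i')$ but excludes the rim edge $(i,i')$ of weight $a_i$, each index $i$ is counted exactly twice rather than three times. Keeping this distinction straight is what guarantees the total is $2T$ and not, say, $3T$, and it is the same distinction that will matter when we compare feasible solutions against $G_f$ in the subsequent lemmas.
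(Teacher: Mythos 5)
Your proof is correct and follows essentially the same route as the paper's: a direct summation $\sum_{i=1}^{k}\bigl(W(N,i)+W(N,i')\bigr)=2\sum_{i=1}^{k}a_i=2T$ using the weight assignment from Definition~\ref{npc}. The extra remark about excluding the rim edges $(i,i')$ is a harmless clarification that the paper leaves implicit.
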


\begin{proof}
The proof is straightforward. The total weight is given by $\sum_{i \in \{1,2,3,\ldots k \} } (W(N,i) + W(N,i'))$. Using
the fact that $W(N,i) = a_i$, and the fact that $T = \sum_{i \in \{1,2, \ldots, k \}} a_i$, the result follows.
\end{proof}

The next lemma looks at the sum of the all pairs shortest path weights in the graph $G_f$.

\begin{lemma}
The sum of the all pairs shortest paths weight of $G_f$ is $4kT$.
\end{lemma}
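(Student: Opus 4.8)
The plan is to exploit the fact that $G_f$ is a \emph{star} (the spoke graph) centered at $N$: its only edges are $(N,i)$ and $(N,i')$ for $i \in \{1,\dots,k\}$, so the $2k$ nodes $\{i, i'\}$ are all leaves attached directly to the hub $N$. Because a star is a tree, the shortest path between any two vertices is unique, which makes every pairwise distance transparent and rules out any spurious shortcuts.

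First I would record the distance to the hub for each vertex by setting $w(N) = 0$, $w(i) = W(N,i) = a_i$, and $w(i') = W(N,i') = a_i$ for every $i$. Since every non-hub vertex is a leaf adjacent only to $N$, the shortest path between any two distinct vertices $u,v$ must pass through $N$ (or be the single incident edge when one of them is $N$), yielding the clean identity $d(u,v) = w(u) + w(v)$ for \emph{all} pairs. This decomposition is what reduces the lemma to a one-line count.

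Next I would sum $d(u,v) = w(u) + w(v)$ over all $\binom{2k+1}{2}$ unordered vertex pairs. Each vertex $u$ appears in exactly $|V| - 1 = 2k$ such pairs and contributes $w(u)$ to each, so the total equals $(2k)\sum_u w(u)$. Here $\sum_u w(u) = w(N) + \sum_{i=1}^{k}\big(w(i) + w(i')\big) = 0 + 2\sum_{i=1}^{k} a_i = 2T$ — which is precisely the total edge weight of $G_f$ from the preceding lemma, since in a star the hub distances coincide with the edge weights — so the sum of all pairwise shortest path weights is $(2k)(2T) = 4kT$, as claimed.

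The only genuine subtlety here is bookkeeping rather than difficulty: one must treat $N$ as carrying weight $0$ and keep the pair count consistent with the unordered convention under which the constant $C = 4kT - b$ in Definition~\ref{npc} is calibrated. Once the per-vertex hub distances are fixed and the ``through $N$'' property of the star is invoked, there is no real obstacle — the result follows from the double-count above.
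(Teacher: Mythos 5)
Your proof is correct, but it takes a genuinely different route from the paper's. The paper proves this lemma by induction on $k$: the base case $k=1$ gives $4a_1$, and the inductive step tracks how adjoining the new pair of nodes $m+1,(m+1)'$ increases the running total --- four new shortest-path weights $a_i + a_{m+1}$ for each existing index $i$, plus $4a_{m+1}$ for the paths involving only the new nodes --- which telescopes to $4(m+1)\sum_{i=1}^{m+1}a_i$. You instead give a direct double count: since $G_f$ is a star with hub $N$, every shortest path satisfies $d(u,v)=w(u)+w(v)$ where $w$ is the hub distance ($w(N)=0$, $w(i)=w(i')=a_i$), so summing over all $\binom{2k+1}{2}$ unordered pairs, each vertex contributes $w(u)$ to exactly $2k$ pairs, giving $2k\cdot 2T = 4kT$. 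Your argument is shorter and arguably more transparent: it exposes the answer as (number of other vertices) $\times$ (total hub distance) and generalizes verbatim to any star with arbitrary leaf weights, whereas the induction requires some careful bookkeeping in the inductive step. What the paper's incremental style buys is continuity with what follows --- the very next lemmas (on how adding an $(i,i')$ edge shifts the edge-weight sum up by $a_i$ and the shortest-path sum down by $a_i$) use exactly the same kind of local accounting, so the induction primes the reader for those arguments. Both proofs use the same unordered-pair convention, and you are right to flag that this convention is what calibrates the constant $C = 4kT - b$ in the reduction.
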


\begin{proof}
We prove this using induction. When $k=1$ the result holds, as in that case we have 3 pairs of nodes and the sum of the shortest
path weights is $W(N,1) + W(N,1') + W(SP(1,1')) = a_1 + a_1 + 2 \cdot a_1 = 4 \cdot a_1$. Let us suppose that the result holds for 
$k=m$. Now we prove that the result holds for $k = m+1$. Thus we assume that when $k=m$, the sum of the all pairs shortest paths 
weight is $4\cdot m \cdot \sum_{i \in \{1,2,\ldots,m \} } a_i$. Now suppose that $k=m+1$. This adds two more nodes and two more edges
to the existing graph. This in turn means that with every pair of nodes $(i,i'); i \in \{ 1,2,\ldots,m \} $, another four pairs of
shortest path weights are added to the existing total. The weights of these four pairs is $4 \cdot \left(a_i + a_{m+1} \right); i \in \{1,2,\ldots,m \} $.
Moreover, the total weight of the shortest path distances between the newly added nodes is $4 \cdot a_{m+1}$. Thus the addition
of the two new nodes \emph{increases}, the sum of the all pairs shortest path weights by:
$$4 \cdot \sum_{i \in \{1,2,\ldots,m \}} \left(a_i + a_{m+1}\right) + \left(4 \cdot a_{m+1} \right) $$ Thus the new sum of the all pairs 
shortest path weights is:
$$ \left(4\cdot m \right) \cdot \left( \sum_{i \in \{1,2,\ldots,m \} } a_i \right) + 4 \cdot \left(\sum_{i \in \{1,2,\ldots,m \}} a_i \right) +  
4 \cdot \left(m+1 \right) \cdot a_{m+1} $$ 
The above sum comes out to be $4\cdot (m+1) \cdot \sum_{i \in \{1,2,\ldots,m+1 \} } a_i$. Thus the proof follows by induction. 
\end{proof}

In the next lemma we investigate the result of adding the edge $(i,i'); i \in \{ 1,2,\ldots,m \}$ to the graph $G_f$.

\begin{lemma} \label{npl2} 
Let $G_f=(V,E_f,W_f)$ where $E_f = \{(N,i),(N,i') \},i \in \{1,2,\ldots, k\}$ as before and $W_f$ contains the 
corresponding weights. The addition of the edge $(i,i')$ to $G_f$ increases the sum of edge weights 
by $a_i$ and decreases the sum of the all pairs shortest path weight by $a_i$.
\end{lemma}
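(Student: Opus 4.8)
The plan is to prove the two assertions separately, since they are independent. The first, that adding the edge $(i,i')$ raises the total edge weight by $a_i$, is immediate: by the construction in Definition~\ref{npc} the edge $(i,i')$ carries weight $W(i,i')=a_i$, and inserting a single new edge simply adds its weight to the running total. The real work therefore lies in the second assertion, which concerns the all-pairs shortest-path sum.

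For that assertion, I would first isolate the single pair whose distance actually changes. Since $G_f$ is a star centered at $N$ (every edge is incident to $N$), it is a tree, and the unique path between $i$ and $i'$ runs through $N$, with weight $W(N,i)+W(N,i')=2a_i$. Adding the chord $(i,i')$ of weight $a_i$ creates a direct one-hop alternative of weight $a_i<2a_i$, so the shortest-path distance for the pair $\{i,i'\}$ drops from $2a_i$ to $a_i$, a decrease of exactly $a_i$.

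The crux is then to show that \emph{no other} pair has its shortest-path distance affected, so that the global sum decreases by precisely $a_i$. I would establish this by a short case analysis over an arbitrary pair $\{u,v\}\neq\{i,i'\}$. If one endpoint is $N$, the distance is already realized by a single spoke edge and cannot be improved. Otherwise both endpoints are spoke nodes, and since $\{u,v\}\neq\{i,i'\}$ at least one of them lies outside $\{i,i'\}$; its distance is realized by the two-spoke path through $N$, while any route exploiting the new chord must detour through both $i$ and $i'$ and is therefore strictly longer. Hence the chord helps only the pair $\{i,i'\}$.

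The step I expect to demand the most care is this exhaustiveness check: one must verify that the case split over pairs $\{u,v\}$ is complete and that in each branch every path using the new edge is genuinely no shorter than the existing star path. It is also worth matching the counting convention (unordered pairs) used in the preceding lemma whose value is $4kT$, so that the recorded net decrease in the sum is exactly $a_i$, as claimed.
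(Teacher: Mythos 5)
Your proof is correct and takes essentially the same route as the paper's: the weight increase is immediate from the construction, and the distance sum drops by $a_i$ because the $\{i,i'\}$ distance falls from $2a_i$ to $a_i$. You are in fact more careful than the paper, whose proof simply asserts this net effect; your case analysis showing that no pair other than $\{i,i'\}$ has its distance changed (and your remark about the unordered-pair convention matching the $4kT$ lemma) supplies exactly the details the paper leaves implicit.
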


\begin{proof}
The proof follows from the fact that the total edge weight of $G_f$ is $2T$ and the total all pairs
shortest paths weight of $G_f$ is $4kT$. Its easy to see that if the edge $(i,i')$ is added to $G_f$, then the total
weight increases by $a_i$ (by construction \ref{npc}) and the total shortest path weight decreases by $a_i$ because of
the fact that the addition of the edge between $(i,i')$ changes the shortest path between $(i,i')$ and decreases the weight 
of the shortest path from $2 \cdot a_i$ to $a_i$. 
\end{proof}

The next lemma establishes the fact that any feasible solution of problem [\ref{desmecs}] can be assumed to contain $G_f$ as 
a subgraph.

\begin{figure}[t]
\centering
\includegraphics[width=.5\linewidth]{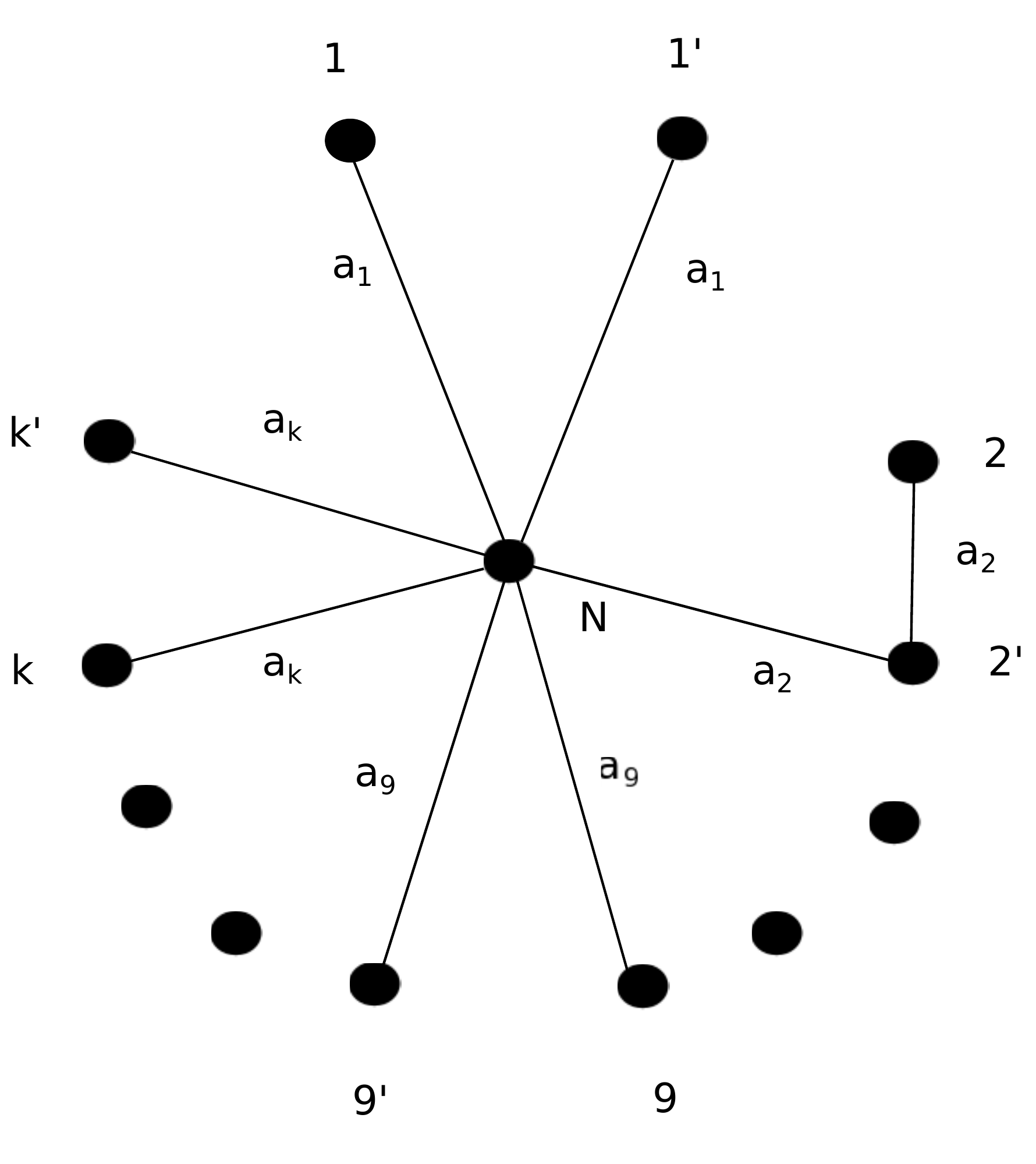}
\caption{Example graph used for lemma \ref{npl1}. The graph is almost similar to the spoke graph that we have used
before, but one of the spokes, namely that for vertex 2 is broken and instead we have a pattern like the number 7 in its
place}
\centering
\label{fig:ex_graph}
\end{figure}

\begin{lemma} \label{npl1} 
Define $G_f=(V,E_f,W_f)$ where $E_f = \{(N,i),(N,i') \},i \in \{ 1,2,\ldots, k\}$ and $W_f$ contains the corresponding weights. 
Any feasible solution of the problem [\ref{desmecs}] on $G$ can be assumed to contain $G_f$ as a subgraph. 
\end{lemma}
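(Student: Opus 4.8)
The plan is to prove the statement by a local exchange argument. I would start from an arbitrary feasible solution $G_s$ of [\ref{desmecs}] and repeatedly replace any ``broken spoke'' configuration by a genuine pair of spokes, each replacement preserving feasibility, until the graph contains the edges $(N,i)$ and $(N,i')$ for every $i$ and hence contains $G_f$. The first step is to pin down the admissible local configurations at each pair $\{i,i'\}$. A feasible $G_s$ must be connected, since otherwise the shortest-path sum is infinite and the bound $\sum_{(u,v)} d(u,v) \le C$ fails; in particular both $i$ and $i'$ must reach the hub $N$. As the only edges touching $i$ and $i'$ are $(N,i)$, $(N,i')$ and $(i,i')$, connectivity forces at least two of these three edges to be present, and the only configurations that do \emph{not} already contain both spokes are the two ``$7$-patterns'' $\{(N,i),(i,i')\}$ and $\{(N,i'),(i,i')\}$, in which one spoke is traded for the chord; this is exactly the situation drawn in figure~\ref{fig:ex_graph}.

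Next I would analyse a single replacement at a pair $i$ sitting in a $7$-pattern: delete the chord $(i,i')$ and insert the missing spoke. The edge-weight bookkeeping is immediate, since we remove an edge of weight $a_i$ and add an edge of weight $a_i$, so the total edge weight is unchanged and $\sum_{e} W_e \le r$ is preserved. The delicate part, which I expect to be the main obstacle, is to show that the shortest-path sum does not increase. The key observation is that the modification is purely \emph{local}: because $i$ and $i'$ have no neighbours other than $N$ and each other, for every vertex $w \notin \{i,i'\}$ the distance $d(N,w)$ is realised entirely in the untouched part of the graph and is therefore identical before and after. Granting this, I would tabulate precisely the pairs whose distance changes --- namely $\{N,i'\}$ (decreasing by $a_i$), $\{i,i'\}$ (increasing by $a_i$), and each $\{i',w\}$ with $w \notin \{N,i,i'\}$ (decreasing by $a_i$) --- while every other distance, in particular all distances between vertices of distinct pairs, is unaffected. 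Summing these contributions gives a net change of $-2(k-1)a_i \le 0$, so the shortest-path sum can only decrease and the bound $\sum_{(u,v)} d(u,v) \le C$ is again satisfied. This refines Lemma~\ref{npl2}, which only recorded the effect on the single pair $\{i,i'\}$.

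Finally I would close by induction on the number of $7$-patterns. Each replacement keeps the graph connected, leaves the edge-weight constraint satisfied, and does not increase the shortest-path sum, so the new graph is again feasible while having strictly fewer $7$-patterns. Since there are only $k$ pairs, iterating terminates in a feasible solution in which every pair carries both of its spokes, i.e. a feasible solution containing $G_f$, which is the assertion of the lemma. The only place requiring genuine care is the distance accounting of the middle step; once the locality observation is established, the edge-weight computation and the induction are routine.
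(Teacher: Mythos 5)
Your proposal is correct, but it takes a genuinely different route from the paper's own proof. The paper argues by contradiction and establishes the stronger claim that \emph{no} feasible solution can contain a ``7-pattern'' at any triplet: it computes that a 7-pattern at triplet $i$ inflates the all-pairs shortest-path sum by $2(k-1)a_i$ above the spoke value $4kT$ while leaving the edge weight at $2T$, and then observes that, with the 7-pattern kept in place, the only way to drive the path sum down to $C = 4kT - b$ is to add chord edges $(j,j')$, each of which buys a path-sum reduction of $a_j$ at an equal weight cost of $a_j$; the required reduction of at least $b + 2(k-1)a_i$ therefore pushes the total weight above $r = 2T + b$, a contradiction. Your argument instead proves exactly the WLOG form in which the lemma is phrased: the swap ``delete chord $(i,i')$, insert missing spoke'' is weight-neutral and (by your locality bookkeeping, which is correct, since $i$ and $i'$ have no neighbours outside $\{N,i,i'\}$) lowers the path sum by $2(k-1)a_i \geq 0$, so feasibility is preserved monotonically and induction on the number of 7-patterns finishes without any reference to the budget. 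The trade-off is instructive: the paper's budget argument yields literal containment of $G_f$ in every feasible solution, but it depends on the specific constructed values of $r$ and $C$ and is informal about why the compensation must come from chords alone; your exchange argument is budget-independent (it works for arbitrary $r$ and $C$), but yields only existence of a feasible solution containing $G_f$. That weaker conclusion is still sufficient for the downstream $NP$-completeness theorem, since the transformed solution decomposes as $G_f$ plus a set of chords indexed by $S$, and the two constraints of problem [\ref{desmecs}] together with Lemma [\ref{npl2}] then force $\sum_{i \in S} a_i = b$ exactly as in the paper.
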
 

\begin{proof}
Let us suppose that the feasible solution to the problem [\ref{desmecs}] does not contain the graph $G_f$.
Then the feasible solution would contain a graph that is different from $G_f$. However the bad news is that 
there are several different possibilities for such graphs. The good news is that all these possibilities have
one thing in common, each unit consisting of the three vertices $i,i',N$ can have only two possible structures
apart from the one where $i$ and $i'$ are both connected to $N$. In both these settings $i$ is connected to $i'$ and 
either $i$ is connected to $N$ or $i'$ is connected to $N$. Thus a feasible solution can have an underlying graph where
some of the triplets are arranged to form a spoke like structure and the others are arranged in one of the other two 
possible ways.

In order to see what happens in this setting we consider the simplest variation of the graph shown in figure [\ref{fig:spoke}].
The graph is such that there is a vertex pair $i, i'$, such that the edges associated with this vertex pair are $(N,i),(i,i')$ and the rest
of the graph is a spoke graph. An example is shown in the figure [\ref{fig:ex_graph}]. It is easy to see that for the 
graph in figure [\ref{fig:ex_graph}], the sum of the edge weights is $2T$. However, the sum of weights of the all pairs shortest paths
is $4kT + (k-1) (2 \cdot a_2)$. Now suppose that a feasible solution $F$ contains this graph as a subgraph. As the solution is feasible
we have $ \sum_{e \in E_s} W_e \leq 2T+b $ and $\sum_{(u,v):u,v \in V;u \neq v} d(u,v) \leq 4kT -b $. Now, if the feasible solution contains 
[\ref{fig:ex_graph}] as a subgraph, then in order to satisfy the constraint on the all pairs shortest path weights, one would need to add
$(i,i')$ edges~(each of which reduces the sum of the shortest path weights by $a_i$) such that the sum of the all pairs shortest path
weights becomes less than or equal to $4kT-b$. This in turn means that the sum of the edge weights will increase by at least the same
amount, namely $b + (k-1) (2 \cdot a_2)$. Thus the sum of the edge weights will be at least $2T + b + (k-1) (2 \cdot a_2)$.
But this will violate the first constraint thus making the solution infeasible. Thus this creates a contradiction. 
In particular, more the number of vertices $i$ with edges of the form $(N,i),(i,i')$, the larger
is the sum of the all pairs shortest path weights. Thus we can use the same argument as above to disprove the fact that any feasible
solution will create such a subgraph. Hence we can assume that any feasible solution contains the graph $G_f$ and the 
proof follows. 

\end{proof}

\begin{theorem}
The MECS problem is $NP-$complete. 
\end{theorem}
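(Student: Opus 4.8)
The plan is to establish membership in $NP$ (already argued in the text) together with $NP$-hardness, the latter via a polynomial-time reduction from Subset Sum using the gadget of Definition~\ref{npc}. The engine of the reduction is the observation, supplied by Lemma~\ref{npl1}, that every feasible solution of problem~\ref{desmecs} on $G$ may be taken to contain the spoke graph $G_f$. Since the only edges of $G$ that do not already belong to $G_f$ are the ``rungs'' $(i,i')$, a feasible solution is completely determined by the subset $S \subseteq \{1,\dots,k\}$ of rung edges it adds on top of $G_f$. First I would make this correspondence precise: a solution is feasible exactly when the set $S$ it encodes satisfies the two budget constraints.

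Next I would compute both quantities in the constraints as functions of $S$. Starting from the baseline values established in the preceding lemmas --- the total edge weight of $G_f$ is $2T$ and its all-pairs-shortest-path sum is $4kT$ --- and applying Lemma~\ref{npl2} once for each rung added, each edge $(i,i')$ raises the weight budget by $a_i$ and lowers the distance budget by $a_i$. Summing over $S$ gives total edge weight $2T + \sum_{i \in S} a_i$ and total shortest-path weight $4kT - \sum_{i \in S} a_i$. Substituting the reduction's parameters $r = 2T + b$ and $C = 4kT - b$, the weight constraint $\sum_{e\in E_s} W_e \le r$ becomes $\sum_{i\in S} a_i \le b$, while the distance constraint $\sum d(u,v) \le C$ becomes $4kT - \sum_{i\in S} a_i \le 4kT - b$, i.e. $\sum_{i\in S} a_i \ge b$. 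The two inequalities together force $\sum_{i\in S} a_i = b$.

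This yields the equivalence in both directions: a feasible solution to problem~\ref{desmecs} produces a subset $S$ with $\sum_{i\in S} a_i = b$, and conversely any such subset yields the edge set $E_f \cup \{(i,i') : i \in S\}$, which meets both budgets and is therefore feasible. The gadget has $2k+1$ vertices and $3k$ edges and is clearly built in polynomial time, so Subset Sum reduces to the decision version of MECS; combined with membership in $NP$, this establishes $NP$-completeness.

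I expect the genuinely load-bearing step to be Lemma~\ref{npl1} rather than anything in the theorem proper: once feasible solutions are pinned down to be $G_f$ plus a choice of rungs, the arithmetic is immediate. The point I would guard most carefully is the completeness (``only if'') direction --- ensuring that no feasible solution can cheat by using a non-spoke arrangement of some triplet $(i,i',N)$, or by dropping a hub edge, to undercut the weight budget while still meeting the distance budget. This is exactly the content Lemma~\ref{npl1} must rule out, so I would double-check that its case analysis covers every admissible configuration of each triplet and that the tightness of the budgets $r$ and $C$ leaves no slack for such alternatives.
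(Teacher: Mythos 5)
Your proposal is correct and follows essentially the same route as the paper: a reduction from Subset Sum via the spoke gadget of Definition~\ref{npc}, with Lemma~\ref{npl1} pinning feasible solutions down to $G_f$ plus a choice of rung edges and Lemma~\ref{npl2} supplying the arithmetic. If anything, your version is slightly tighter than the paper's own proof, which argues only the direction from MECS feasibility to a Subset Sum solution and starts from a feasible solution assumed to meet both constraints with equality, whereas you derive $\sum_{i\in S} a_i = b$ by combining the two inequalities ($\le b$ from the weight budget, $\ge b$ from the distance budget) and also verify the converse direction explicitly.
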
 

\begin{proof}
Let us consider a feasible solution of the problem [\ref{desmecs}] such that equality holds for both the inequalities. 
Let $G_s$ be the graph corresponding to this solution. $G_s$ is a subgraph of $G$ and $G_f$ is a subgraph of $G_s$ by lemma [\ref{npl1}]. 
Consider the edge set $E_s$ of $G_s$ and consider exactly the edges $(i,i')$. For every such edge, set 
$S= S \cup \{ i\}$. Then it follows that $\sum_{i \in S} a_i =b$ by the feasibility of the solution and lemma [\ref{npl2}] 
and thus we have got a solution of the problem [\ref{ss}]. This completes the proof.
\end{proof}

Next we prove a stronger result that precludes the possibility of finding a spanning tree that satisfies the constraint 
on the APL, in polynomial time. In particular we show that the MECS problem is $NP$-complete even in the case, 
where we restrict ourselves to spanning trees of unweighted graphs. Here the problem is as follows: Given an undirected 
graph $G=(V,E)$ with average distance $\mu_G$ and a finite real number $c$, find a subgraph $G_s=(V,E_s)$ such that, 

\begin{equation} \label{ecsts}
|E_s| \leq |V|-1 \ \textrm{ and} \ \mu_{G_s} \leq c, \ c \ \textrm{finite}
\end{equation} 

We call this the \emph{Edge Compact Spanning Tree Spanner~(ECSTS) Problem}. We prove that the ECSTS problem is $NP$-complete.
The actual reduction for the proof is shown in Appendix A because of space constraints.

\section{Exact Algorithms}
\label{formulation}

Going forward we study solutions to the MECS problem by formulating the problem as a mathematical program. More precisely we look at two approaches: 
the first approach is based on the idea of flows as described in~[\cite{Botton11,Gouveia08,Pirkul03}] and the second approach is based upon formulating the
problem using a mixed integer program. It must be noted that, going forward, whenever we mention spanners we mean a solution to the MECS problem,
which is based on the notion of the average path length and not spanners as referred to in the standard spanner literature.

\subsection{Flow-based Approach}
The most common approach for distance-based network design problems, which take into account the distances between nodes in networks, 
is the flow-based method [\cite{Botton11,Gouveia08,Pirkul03}]. Namely, let $f^{st}_{ij}\in \{ 0,1\}$ for all $s,t\in V$ and $(i,j)\in E$ 
denote the flow sent from vertex $s\in V$ to vertex $t\in V$ through an edge $(i,j)\in E$, $s<t$. Assuming that the total amount of flow 
sent from vertex $s\in V$ to vertex $t\in V$ through a spanner $G_s=(V,E_s)$ is 1, one can deduce that the length of the path that the flow 
takes from $s$ to $t$, in a spanner $G_s=(V,E_s)$ is $ \sum\limits_{(i,j)\in E} f^{st}_{ij} $; hence, the average path-length of all flows 
in a spanner is:
$$
\mu_f=\frac{2}{n(n-1)}\sum\limits_{s,t\in V:i<j}\sum\limits_{(i,j)\in E_s} f^{st}_{ij}
$$
Note that $\mu_f\geq \mu_s$; however, there always exists a flow $\mathbf{f}=\{f^{st}_{ij}|s,t\in V, (i,j)\in E\}$ such that 
$\mu_f= \mu_s$ (a flow which uses only the shortest paths). Then, the problem formulation can be written as follows: 

\begin{problem}[flow-based MECS]
\begin{subequations}
\begin{align}
&\mbox{minimize} \sum_{(i,j)\in E}x_{ij} \label{spanner_flow}\\
&\quad\mbox{subject to} \nonumber\\
&\frac{2}{n(n-1)}\sum\limits_{s,t\in V:i<j}\sum\limits_{(i,j)\in E_s} f^{st}_{ij} \leq t\mu_G &\label{spanner_flow_main}\\
&\sum\limits_{j:(s,j) \in E}  f^{st}_{sj}-\sum\limits_{i:(i,s) \in E}  f^{st}_{is} \geq 1 & \forall s,t\in V,\ s<t,\label{spanner_flow_1}\\
&\sum\limits_{i:(i,t) \in E}  f^{st}_{it}-\sum\limits_{j:(t,j) \in E}  f^{st}_{tj}\geq 1 & \forall s,t\in V,\ s<t,\label{spanner_flow_2}\\
&\sum\limits_{j: (i,j) \in E}  \left(f^{st}_{ij}-f^{st}_{ji}\right)=0 & \forall s,t\in V, \ s<t, \ \forall i\in V\setminus\{s,t\},\label{spanner_flow_3}\\
& f^{st}_{ij}\leq   x_{ij} &\forall s,t\in V,\ s<t, \ \forall (i,j) \in E, \label{spanner_flow_4}\\
& x_{ij} \in \{0,1\}, \ 0\leq  f^{st}_{ij}\leq 1 &\forall s,t\in V, \  s<t,\ \forall (i,j)\in E.
\end{align}
\end{subequations}
\end{problem}

In the formulation above, constraint \eqref{spanner_flow_main} is the main constraint on average path-length in a spanner, and 
constraints \eqref{spanner_flow_1}-\eqref{spanner_flow_4} are the standard flow-balancing constraints. Note that we relax the 
binary requirement on variables $f^{st}_{ij}$. It is easy to verify that the formulation is still correct. This flow-based formulation 
requires $O(|V|^2|E|)$ variables and constraints.

\subsection{Path-based Approach}
In this section, we develop a path-based approach, which is similar to the one presented in [\cite{veremyev_cnp_2015}] and is based on 
introducing new distance-based variables to compute the average path-length. The main idea is to define path variables for each pair 
of nodes and enforce constraints on them recursively. We demonstrate that such an approach allows to reduce the number of variables and 
constraints. 

Let $u^{(\ell)}_{ij}$ be a binary variable such that $u^{(\ell)}_{ij}=1$ if and only if there is a path of length at most $\ell$ between 
nodes $i$ and $j$ in a spanner $G_s$, where $i,j\in V$, $\ell \leq L$ and $L\leq |V|-1$ is an appropriate constant. Let also   
$u^{(0)}_{ij}=0$ and $u^{(L+1)}_{ij}=1$  for simplicity. In addition, we define $y^{(\ell)}_{ikj}$ to be a binary variable such that   
$y^{(\ell)}_{ikj}=1$ if there is a path between nodes $i$ and $j$ of length at most $\ell$ in a spanner $G_s$ which traverses a neighbor 
$k\neq j$ of node $i$ where $i,j\in V$, $(i,k)\in E$ and $\ell \leq L$.

Then, the formulation can be written as
\begin{problem}[path-based MECS]
\label{P3_opt}
\begin{subequations}
\begin{align}
&\mbox{minimize} \sum_{(i,j)\in E} x_{ij} \label{f1_obj}\\
&\mbox{subject to} \nonumber\\
& \sum\limits_{\ell=1}^{L+1} \sum\limits_{i,j=1:i<j}^n \ell(u^{(\ell)}_{ij}-u^{(\ell-1)}_{ij})\leq t\mu_G \frac{n(n-1)}{2} \label{f1_main}\\
& u^{(1)}_{ij}= x_{ij}, &\hspace{-5mm} (i,j)\in E\label{f1-edge_1}\\
& u^{(1)}_{ij}= 0, &\hspace{-5mm} (i,j)\notin E\label{f1-edge_2}\\
& u^{(\ell)}_{ij}\geq u^{(\ell-1)}_{ij}, &\hspace{-5mm}\forall i,j\in V,\ \ell\in\{2,\dots,L\}\label{f1-w_1}\\
& u^{(\ell)}_{ij}\leq x_{ij}+\sum\limits_{k\neq j:(i,k)\in E}y^{(\ell)}_{ikj}, &\hspace{-5mm}\forall (i,j)\in E,\ \ell\in\{2,\dots,L\}\label{f1-w_21}\\
& u^{(\ell)}_{ij}\geq \frac{1}{deg_i}\left(x_{ij}+\sum\limits_{k\neq j:(i,k)\in E}y^{(\ell)}_{ikj}\right), &\hspace{-5mm}\forall (i,j)\in E,\ \ell\in\{2,\dots,L\}\label{f1-w_31}\\
& u^{(\ell)}_{ij}\leq \sum\limits_{k\neq j:(i,k)\in E}y^{(\ell)}_{ikj}, &\hspace{-5mm}\forall (i,j)\notin E,\ \ell\in\{2,\dots,L\}\label{f1-w_22}\\
& u^{(\ell)}_{ij}\geq \frac{1}{deg_i}\left(\sum\limits_{k\neq j:(i,k)\in E}y^{(\ell)}_{ikj}\right), &\hspace{-5mm}\forall (i,j)\notin E,\ \ell\in\{2,\dots,L\}\label{f1-w_32}\\
& y^{(\ell)}_{ikj}\leq x_{ik},\ y^{(\ell)}_{ikj}\leq u^{(\ell-1)}_{kj},  &\hspace{-25mm}i,j\in V, (i,k)\in E, \ell\in\{2,\dots,L\}\label{f1-y_1}\\
& y^{(\ell)}_{ikj}\geq x_{ik} + u^{(\ell-1)}_{kj} -1, &\hspace{-25mm} i,j\in V, (i,k)\in E, \ell\in\{2,\dots,L\}\label{f1-y_2}\\
&x_{ij},\ u^{(\ell)}_{ij},\ y^{(\ell)}_{ikj} \in \{0,1\}, &\hspace{-5mm}\forall i,j,k,\in V,\ \ell\in\{1,\dots,L\}.\label{f1-end}
\end{align}
\end{subequations}
\end{problem}

In the formulation above constraint \eqref{f1_main} is the main constraint on average distance. Note that if the shortest path length between a pair of nodes $i,j\in V$ is $d$, then $u^{(d)}_{ij}-u^{(d-1)}_{ij}=1$ and $u^{(\ell)}_{ij}-u^{(\ell-1)}_{ij}=0$  for $d\neq \ell$. Constraints \eqref{f1-edge_1} - \eqref{f1-w_32} recursively model paths variables $u^{(\ell)}_{ij}$. Constraints \eqref{f1-y_1} - \eqref{f1-y_2} recursively model additional paths variables  $y^{(\ell)}_{ikj}$. The formulation requires $|E|$ variables $x_{ij}$, $L|V|(|V|-1)/2$ variables $u^{(\ell)}_{ij}$, and $L|V||E|$ variables $y^{(\ell)}_{ikj}$.


Note that if the distance between $i,j$ is greater than $L$, it is counted as $L+1$ by the formulation above.
Hence, to appropriately compute the average distance, we need to use $L=n-1$ as the maximum possible spanner diameter. However, in practice one can expect the spanner diameter to be $~ln(n)$ as many real-world and randomly graph topologies exhibit a so-called ``small-world'' property [\cite{watts1998collective,albert2002statistical}]; therefore, in Section \ref{MIP_algorithm} we develop an exact iterative MIP-based algorithm to solve the problem more efficiently.

\subsection{Formulation enhancements}
Below, we outline various formulation enhancements which may help to improve the solvers performance.

\subsubsection{Redundant constraints and integrality relaxation}
Observe that for any $i<j\in V$
\begin{align}
\sum\limits_{\ell=1}^{L+1}  \ell(u^{(\ell)}_{ij}-u^{(\ell-1)}_{ij}) &= u^{(1)}_{ij}+2(u^{(2)}_{ij}-u^{(1)}_{ij})+\dots+ L(u^{(L)}_{ij}-u^{(L-1)}_{ij})+(L+1)(1-u^{(L)}_{ij})=\nonumber\\
&=L+1-\sum\limits_{\ell=1}^{L} u^{(\ell)}_{ij}
\end{align}
therefore making $w^{(\ell)}_{ij}$ as large as possible will still make the problem feasible and satisfy the main constraint \eqref{f1_main} on the spanner average distance. Therefore, only constraints on upper bound on $u^{(\ell)}_{ij}$ and $y^{(\ell)}_{ikj}$ are needed. Note that these constraints are enforced only to make the aforementioned variables equal to zero, hence, these variables do not need to be integral anymore. Only variables $x_{ij}$ need to be binary, and we need $|E|$ of them.


\subsubsection{Leaf-node considerations}

Denote by $N_1$ all nodes of $V$ with degree 1 in $G$, i.e., $N_1=\{i\in V\ | \ deg_G(i)=1\}$. Such nodes are also referred to as leaves (or leaf nodes) of $G$. Let $n_1=|N_1|$. Since a spanner cannot have isolated nodes, then the edges connecting leaf node to the network must be included into any spanner. 

Thus, one should simply enforce $x_{ij}=1$ for all $i\in N_1, (i,j)\in E$. However, the computational experiments show that it is more beneficial of not considering leaf nodes and edges going to leaf nodes in the formulation. In this case, the average distance of a spanner can be computed by the following expression:

\begin{subequations}
\begin{align}
\mu=&\frac{1}{n(n-1)}\Bigg\{ \sum_{i,j\in V\setminus N_1:\ i<j}\left( u^{(1)}_{ij}+\sum\limits_{\ell=2}^{L+1} \ell \left(u^{(\ell)}_{ij}-u^{(\ell-1)}_{ij}\right)\right)\label{vcc:begin-gen0}\\
&+\sum_{i\in V\setminus N_1}v_i +\sum_{i\in V\setminus N_1}2\times \frac{v_i(v_i-1)}{2} \label{vcc:begin-gen2}\\
&+\sum_{i,j\in V\setminus N_1:\ i<j}(v_i+v_j)\left( 2u^{(1)}_{ij}+\sum\limits_{\ell=2}^{L} (\ell+1) \left(u^{(\ell)}_{ij}-u^{(\ell-1)}_{ij}\right)\right)\label{vcc:begin-gen3}\\
&+\sum_{i,j\in V\setminus N_1:\ i<j}(v_i\cdot  v_j)\left( 3u^{(1)}_{ij}+\sum\limits_{\ell=2}^{L-1} (\ell+2) \left(u^{(\ell)}_{ij}-u^{(\ell-1)}_{ij}\right)\right)\Bigg\}\label{vcc:begin-gen4},
\end{align}
\end{subequations}
\noindent which explicitly represents the fact that nodes and edges from $N_1$ are not considered. Pre-computed parameter $v_i$, $i\in V\setminus N_1$, is equal to the number of neighbors of  $i\in V\setminus N_1$ that belong to $N_1$, i.e., $v_i=|\mathcal{N}_G(i)\cap N_1|$, where $\mathcal{N}_G(i)=\{j:(i,j)\in E\}$ is a set of neighbors of node $i$. The terms in (\ref{vcc:begin-gen2}) correspond to the paths (of length 1) from $i\in N\setminus N_1$ to $\mathcal{N}_G(i)\cap N_1$, and paths (of length 2) between nodes in $\mathcal{N}_G(i)\cap N_1$, respectively. Next, for any $i,j\in N\setminus N_1$, $i<j$, the term in (\ref{vcc:begin-gen3}) represent all paths between $i$ and $\mathcal{N}_G(j)\cap N_1$ as well as $j$ and $\mathcal{N}_G(i)\cap N_1$. Similarly, (\ref{vcc:begin-gen4}) computes all paths between $\mathcal{N}_G(i)\cap N_1$ and $\mathcal{N}_G(j)\cap N_1$.  

\subsubsection{Other inequalities}
\begin{itemize}
\item {\bf Isolated nodes consideration.} Since the spanner cannot have isolated nodes, the following inequality can be used to enforce that
\begin{equation}
\sum\limits_{j:(i,j)\in E} x_{ij}\geq 1, \quad \forall i\in V.
\end{equation}
\item {\bf Connectivity consideration.} Since the spanner has to be connected, then it should have at least $n-1$ edges
\begin{equation}
\sum\limits_{(i,j)\in E} x_{ij}\geq n-1, \quad \forall i\in V.
\end{equation}
\item {\bf Connectivity Violating Cuts.} Let $\mathcal{E}_c$  be a set of edge cuts, i.e., subsets of edges $E_\alpha\subset E$, such that for any $E_\alpha \in \mathcal{E}_c$  a graph $G_\alpha=(V,E\setminus E_\alpha)$ is disconnected. A spanner should have at least one edge from each edge cut $E_\alpha \in \mathcal{E}_c$:
\begin{equation}
\sum\limits_{(i,j)\in E_\alpha} x_{ij}\geq 1, \quad \forall E_\alpha\in \mathcal{E}_c
\end{equation}
\item {\bf Average Distance Constraint Violating Cuts.} Let $\mathcal{E}_d$  be a set of subsets of edges $E_\beta\subset E$, such that 
for any $E_\beta \in \mathcal{E}_d$ the average distance of a graph $G_\beta=(V,E\setminus E_\beta)$ is greater than $t\mu_G$. 
A spanner should have at least one edge from each $E_\beta \in \mathcal{E}_d$:
\begin{equation}
\sum\limits_{(i,j)\in E_\beta} x_{ij}\geq 1, \quad \forall E_\beta\in \mathcal{E}_d
\end{equation}
\end{itemize}

\subsection{Exact MIP-based algorithm}
\label{MIP_algorithm}
The number of variables and constraints used in Problem \ref{P3_opt} is $O(L|V||E|)$, hence the lower the value of $L$, the better 
solver performance we can expect. In this section, we develop an exact MIP-based algorithms, which is based on the following proposition. 

\begin{proposition}
Let $E^*_s$ be the optimal solution of Problem \ref{P3_opt} for $L=L_0$ and $G_s=(V,E_s)$. If $diam(G_s)\leq L_0$, then $G_s$ is also 
an optimal spanner, i.e.,  $E^*_s$ is the optimal solution of Problem \ref{P3_opt} for $L=n-1$.
\end{proposition}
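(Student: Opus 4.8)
The plan is to view the problem with the smaller bound $L=L_0$ as a \emph{relaxation} of the true problem (the case $L=n-1$, for which the formulation computes the exact average distance) and then use the diameter hypothesis to show that the returned edge set is simultaneously feasible for the true problem and attains the relaxation's optimum. Throughout, write $z_L$ for the optimal objective value (minimum number of edges) of Problem~\ref{P3_opt} with parameter $L$, and note that the objective $\sum_{(i,j)\in E}x_{ij}$ does not depend on $L$; only the main constraint~\eqref{f1_main} does.

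First I would record the monotonicity of the left-hand side of~\eqref{f1_main}. As noted in the remark preceding the proposition, for a fixed edge set the quantity $\sum_{\ell}\sum_{i<j}\ell(u^{(\ell)}_{ij}-u^{(\ell-1)}_{ij})$ equals the sum of pairwise distances in which every distance exceeding $L$ is truncated to $L+1$; in particular each pairwise contribution is $\min(d_{ij},L+1)$, so this value never exceeds the true sum of pairwise distances and is nondecreasing in $L$. Consequently, any edge set feasible for $L=n-1$ (whose \emph{exact} scaled average distance is at most $t\mu_G\frac{n(n-1)}{2}$) also satisfies~\eqref{f1_main} for $L=L_0$. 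The feasible region therefore only grows as $L$ shrinks, which gives the relaxation inequality $z_{L_0}\le z_{n-1}$.

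Next I would use the hypothesis $diam(G_s)\le L_0$. Since every pairwise distance in $G_s$ is at most $L_0$, none of them is truncated in the $L_0$-formulation, so the left-hand side of~\eqref{f1_main} evaluated at $G_s$ equals its \emph{true} scaled average distance. Because $E^*_s$ is feasible for the $L_0$-problem, this true average distance is at most $t\mu_G$; hence $G_s$ is feasible for the $L=n-1$ problem, giving $|E^*_s|\ge z_{n-1}$. Combining the two steps yields $z_{n-1}\le |E^*_s| = z_{L_0}\le z_{n-1}$, so all three quantities are equal and $E^*_s$, being feasible for $L=n-1$ and attaining $z_{n-1}$, is optimal there.

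The step requiring the most care is the first: justifying that the $u$-variables indeed encode ``distance capped at $L+1$'' and that this capped value is both a lower bound on the exact average distance and exactly equal to it once all distances fall at or below $L$. This rests on the structural identity, stated in the excerpt, that $u^{(d)}_{ij}-u^{(d-1)}_{ij}=1$ exactly at the true shortest-path length $d$ whenever $d\le L$. I would verify that the recursive constraints~\eqref{f1-edge_1}--\eqref{f1-y_2} force precisely this behavior, after which the relaxation and feasibility arguments are immediate.
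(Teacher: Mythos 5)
Your proposal is correct and follows essentially the same route as the paper: both arguments interpret constraint \eqref{f1_main} as bounding a truncated average distance, observe that truncation makes the $L=L_0$ problem a relaxation of the $L=n-1$ problem, and use the hypothesis $diam(G_s)\leq L_0$ to conclude that $E^*_s$ is feasible (hence optimal) for $L=n-1$. Your explicit chain $z_{n-1}\le |E^*_s| = z_{L_0}\le z_{n-1}$ is just a slightly more formal packaging of the paper's argument.
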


\begin{proof}
For any spanner $G_s=(V,E_s)$ let $\lfloor \mu_s\rfloor_L= \frac{1}{n(n-1)}\sum\limits_{i,j=1}^n \min(d_{G_s}(i,j),L)$,  which can be 
viewed as a truncated version of the average distance $\mu_s$, where any distance of greater than $L$ in a graph $G_s$ is treated as $L$. 
In fact, the constraint \eqref{f1_main} of Problem \ref{P3_opt} for any given $L$ restricts the truncated average distance of a spanner 
$\lfloor \mu_s\rfloor_L$.

Note that $\lfloor \mu_s\rfloor_L\leq \mu_s$ for any $L=1,\dots, n-1$. Hence, any feasible solution of Problem \ref{P3_opt} with $L=n-1$ 
is also a feasible solution for Problem \ref{P3_opt} for any  $L=1,\dots,n-1$. Moreover, if $L\geq  diam(G_s)$, then $\lfloor \mu_s\rfloor_L= \mu_s$. 
Therefore, the optimal spanner $E^*_s$ is also a feasible solution of Problem \ref{P3_opt} with $L=n-1$, and, hence, is also optimal.

\end{proof}

Hence, this problem can be solved sequentially. First, we set $L=diam(G)$ and solve the corresponding MIP. If the diameter of the obtained 
spanner greater than $L$, we set $L=L+1$ and solve the problem again until the diameter of the spanner will not be greater than $L$. This 
technique was implemented in  [\cite{veremyev_cnp_2015}] and according to their observations, it allows to substantially reduce the computational 
time and the amount of variables. 

Below is the formal algorithm description.
\begin{algorithm}
\DontPrintSemicolon
\textbf{Input:} A graph $G=(V,E)$, and $t$\;
\textbf{Output:} Subset $E^*_s\subseteq E$\;
\Begin{
$L_0\longleftarrow diam(G)$\;
$E^*_s\longleftarrow$ optimal solution of Problem \ref{P3_opt}  and $L=L_0$\;
$G_s\longleftarrow (V,E^*_s)$\;
$L_1\longleftarrow diam(G_s)$\;
\While{$L_1>L_0$}{
$L_0\longleftarrow L_1$\;
$E^*_s\longleftarrow$ optimal solution of Problem \ref{P3_opt}  and $L=L_0$\;
$G_s\longleftarrow (V,E^*_s)$\;
$L_1\longleftarrow diam(G_s)$\;
}

\Return{$E^*_s$}
}
\caption{Exact MIP-based Algorithm \label{algorithm_exact}}
\end{algorithm}

\subsection{Edge-weighted graphs}
Here, we show how to generalize Problem \ref{P3_opt} for edge-weighted
graphs with integer weights. Formally, let $w_{ij} \in \mathbb{Z}_+$ denote a positive integer weight (cost) of edge $(i, j) \in E$ 
(for simplicity of exposition, we assume $w_{ij}=0$ for $(i,j)\notin E$ in the corresponding MIP formulation). 
Note that the integrality weights assumption should not be too restrictive in many real-world applications.
Following a similar notation as in Problem \ref{P3_opt}, define $u^{(\ell)}_{ij}=1$ if
and only if there exists a path of length at most $\ell$ in $G_s$,
where $\ell= \{1, . . . , L\}$ and $L \leq (n-1)\max\limits_{(i,j)\in E} w_{ij}$. Then Problem \ref{P3_opt} is
generalized for weighted graphs as follows:

\begin{problem}[edge-weighted path-based MECS]
\label{P3_opt_w}
\begin{subequations}
\begin{align}
&\mbox{minimize} \sum_{(i,j)\in E} x_{ij} \label{f1w_obj}\\
&\mbox{subject to} \nonumber\\
& \sum\limits_{\ell=1}^{L+1} \sum\limits_{i,j=1:i<j}^n \ell(u^{(\ell)}_{ij}-u^{(\ell-1)}_{ij})\leq t\mu_G \frac{n(n-1)}{2} \label{f1w_main}\\
& u^{(1)}_{ij}= x_{ij}, &\hspace{-5mm} i,j\in V: w_{ij}=1\label{f1w-edge_1}\\
& u^{(1)}_{ij}= 0, &\hspace{-5mm}  i,j\in V: w_{ij}\neq 1\label{f1w-edge_2}\\
& u^{(\ell)}_{ij}\geq u^{(\ell-1)}_{ij}, &\hspace{-5mm}\forall i,j\in V,\ \ell\in\{2,\dots,L\}\label{f1w-w_1}\\
& u^{(\ell)}_{ij}\leq x_{ij}+\sum\limits_{k:(i,k)\in E}y^{(\ell)}_{ikj}, &\hspace{-30mm}\forall i,j\in V: w_{ij}\in\{1,\dots,\ell\},\ \ell\in\{2,\dots,L\}\label{f1w-w_21}\\
& u^{(\ell)}_{ij}\geq \sum\limits_{k:(i,k)\in E}y^{(\ell)}_{ikj}, &\hspace{-30mm}\forall i,j\in V: w_{ij}\notin\{1,\dots,\ell\} ,\ \ell\in\{2,\dots,L\}\label{f1w-w_22}\\
& y^{(\ell)}_{ikj}\leq x_{ik},  &\hspace{-40mm}i,j,k\in V: w_{ik}\in \{1,\dots,\ell-1\}, \ell\in\{2,\dots,L\}\label{f1w-y_1}\\
& y^{(\ell)}_{ikj}\leq u^{(\ell-w_{ik})}_{kj},  &\hspace{-40mm}i,j,k\in V: w_{ik}\in \{1,\dots,\ell-1\}, \ell\in\{2,\dots,L\}\label{f1w-y_2}\\
& y^{(\ell)}_{ikj}=0,  &\hspace{-40mm}i,j,k\in V: w_{ik}\geq \ell, \ell\in\{2,\dots,L\}\label{f1w-y_3}\\
&x_{ij},\ u^{(\ell)}_{ij},\ y^{(\ell)}_{ikj} \in \{0,1\}, &\hspace{-5mm}\forall i,j,k,\in V,\ \ell\in\{1,\dots,L\}.\label{f1w-end}
\end{align}
\end{subequations}
\end{problem}

As a final remark, we note that the exact MIP-based Algorithm \ref{algorithm_exact} can be easily generalized for the edge-weighted graphs using 
Problem \ref{P3_opt_w}.

\section{Greedy Strategies}


As the MECS problem is $NP$-complete, exact solutions using MIP do not scale very well and so 
exact solutions are hard to find on very large graphs. As a result it is important to have algorithms
for solving the MECS problem on large graphs, if possible exactly and if that is not possible then
at least approximately. In this section, we consider greedy heuristics for the 
MECS problem. We consider several greedy heuristics and we also prove some properties of the resulting
solutions.

\subsection{Greedy Algorithms} 

We start with the simplest greedy algorithm that has several nice properties. The intuition for this algorithm comes
from the observation that we can upper bound the APL, if we can find upper bound for the all pairs shortest paths. 
This is in some sense an overkill. The APL can be bounded above, without explicitly bounding the 
all pairs shortest paths. However, this is a good starting point for our discussions and hence we start with it. 
This algorithm was first proposed for computing standard spanners for both weighted and unweighted graphs by 
Althofar et al.~[\cite{althofer1993sparse}]. First we describe the details of the algorithm and then we describe a few 
properties of the algorithm. As this algorithm is one of the most practical and easy to implement algorithms for
path based spanners, we use this algorithm as a baseline for comparison with our algorithms.

\begin{algorithm}
\DontPrintSemicolon 
\SetKwFunction{algo}{GreedySpanner}
\SetKwProg{myalg}{Procedure}{}{}
\myalg{\algo{$G$, $t$}}{
\KwIn{$G=(V,E,W)$ the input graph, $t$ the stretch}
\KwOut{$G'$: The MECS Spanner}
$G' \gets (V,\{ \})$\;
Sort $E$ in non-decreasing order of weights\;
\For{$e=(u,v) \in E$} {
  $P(u,v) \gets$ Shortest path between $u$ and $v$ in $G'$\;
  \bf{if} {$P(u,v) > t \cdot W(e)$} {\bf then} Add $e$ to $G'$\;
}
\Return{$G'$}\;
}
\caption{The Greedy Spanner Algorithm}
\label{algo:greedyspanner}
\end{algorithm}

Going forward we state and prove a few properties of the algorithm. We first prove that the algorithm always gives 
a feasible solution to the MECS problem. Then we state a few results that follow directly from the properties of the 
algorithm. Interested readers may refer to [\cite{althofer1993sparse}] for detailed proofs of the same.

\begin{lemma} \label{l1} 
Algorithm GreedySpanner always gives a feasible solution to the MECS problem.
\end{lemma}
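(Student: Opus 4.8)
The plan is to show that the output graph $G'$ of Algorithm~\ref{algo:greedyspanner} is in fact a $t$-spanner of $G$ in the sense of Definition~\ref{spannerdef}, and then to invoke the observation recorded just after that definition, namely that a $t$-spanner automatically preserves the APL to within the factor $t$. Feasibility for the MECS problem follows immediately, because the MECS constraint is exactly $\mu_{G_s}\leq t\mu_G$. This is the classical argument of Alth\"ofer et al.~[\cite{althofer1993sparse}], specialized to the APL objective.

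First I would establish the single key invariant maintained by the greedy loop: \emph{for every edge $(u,v)\in E$, the final graph satisfies $d_{G'}(u,v)\leq t\cdot W(u,v)$.} There are two cases. If $(u,v)$ is added to $G'$, then the direct edge gives a path of weight $W(u,v)$, so $d_{G'}(u,v)\leq W(u,v)\leq t\cdot W(u,v)$ since $t>1$. If $(u,v)$ is not added, then at the moment it was examined the test $P(u,v)>t\cdot W(e)$ failed, i.e. the bound $d_{G'}(u,v)\leq t\cdot W(u,v)$ already held for the partial graph $G'$ built so far. Since the algorithm only ever \emph{adds} edges, and adding edges can never increase a shortest-path distance, the bound persists in the final $G'$. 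This monotonicity of the shortest-path distance under edge insertion is the one place where one must be slightly careful, but it is immediate.

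Next I would promote this edge-wise bound to an all-pairs bound, which is the crux of the argument. Fix any $x,y\in V$ and take a shortest path $x=v_0,v_1,\ldots,v_k=y$ in $G$, so that $d_G(x,y)=\sum_{i=1}^{k} W(v_{i-1},v_i)$. Applying the edge invariant to each hop and then the triangle inequality for the shortest-path metric of $G'$ gives
$$
d_{G'}(x,y)\leq \sum_{i=1}^{k} d_{G'}(v_{i-1},v_i)\leq t\sum_{i=1}^{k} W(v_{i-1},v_i)= t\cdot d_G(x,y).
$$
Thus $G'$ is a $t$-spanner. In particular every pair has finite distance in $G'$, so $G'$ is connected whenever $G$ is, which rules out the degenerate case $\mu_{G'}=\infty$ that the MECS formulation forbids.

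Finally, summing the all-pairs inequality over all ordered pairs $i\neq j$ and dividing by $n(n-1)$ yields
$$
\mu_{G'}=\frac{\sum_{i\neq j} d_{G'}(v_i,v_j)}{n(n-1)}\leq t\cdot \frac{\sum_{i\neq j} d_G(v_i,v_j)}{n(n-1)}= t\,\mu_G,
$$
which is precisely the MECS feasibility condition $\mu_{G_s}\leq t\mu_G$. The main obstacle, such as it is, lies entirely in the third step, namely correctly concatenating the per-edge guarantees along a $G$-shortest path via the triangle inequality and then being careful that the rejected-edge bound survives subsequent insertions; everything else is bookkeeping.
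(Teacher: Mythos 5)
Your proof is correct and takes essentially the same route as the paper's: the per-edge guarantee extracted from the greedy test, concatenation of those guarantees along a $G$-shortest path via the triangle inequality, and summation over all pairs to bound the APL. If anything, you are more careful than the paper, which glosses over the persistence of the rejected-edge bound under later insertions and states its final pairwise inequality in the reversed direction ($P_G(a,b)\le \mathit{stretch}\cdot P_{G'}(a,b)$ instead of $P_{G'}(a,b)\le \mathit{stretch}\cdot P_G(a,b)$).
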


\begin{proof}
First consider the edges in $G-G'$. Algorithm [\ref{algo:greedyspanner}] chose to ignore these edges because for each
such edge $e=(u,v)$, $P(u,v) \leq stretch \cdot W(e)$. Consider any two vertices $a,b \in V$ and consider the shortest
path $P_G(a,b)$ between them in $G$. Consider the edges $e=(u,v)$ on this path such that $e \in G-G'$. Each such edge is
replaced in $G'$ by a path $P_{G'}(u,v)$ such that $P_{G'}(u,v) \leq stretch \cdot W(e)$. Moreover for edges $e=(u,v)$
on $P_G(a,b)$ such that $e \in G'$, $W(e) \leq stretch \cdot W(e)$ as $stretch \geq 1$. Thus we have that $P_G(a,b) \leq
stretch \cdot P_{G'}(a,b)$. This is true for all pairs of vertices $a,b \in G$. Thus $\sum_{a,b \in G} P_G(a,b) \leq
stretch \cdot \sum_{a,b \in G'} P_{G'}(a,b)$. As the vertex sets of $G$ and $G'$ are the same, this completes the
proof.
\end{proof}

The following lemmas make statements about the structure of the solution returned by the greedy spanner algorithm.
The first result relates to the structure of the resulting subgraph and the next result gives bounds on the size and weight
of the resulting solution. Interested readers may consult~[\cite{althofer1993sparse}] for detailed proofs of these results.
The following lemma states that the solution returned by the algorithm [\ref{algo:greedyspanner}] contains the minimum 
spanning tree as a subgraph.

\begin{lemma} 
The graph $G'$ contains the MST of $G$ as a subgraph. 
\end{lemma}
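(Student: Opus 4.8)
The plan is to show that the greedy construction simulates Kruskal's algorithm as far as the minimum spanning tree is concerned: both procedures scan the edges in non-decreasing order of weight, and I will argue that the greedy acceptance rule is \emph{forced} to include exactly those edges that Kruskal adds to the MST. Since Algorithm~\ref{algo:greedyspanner} sorts $E$ in non-decreasing order of weight, I may assume that Kruskal's algorithm uses the same sorted order (with the same tie-breaking), so the two procedures process the edges in lockstep.

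The key step is a structural invariant established by induction on the number of processed edges. Let $G'_i$ denote the partial greedy spanner and let $F_i$ denote the partial Kruskal forest after the first $i$ edges have been examined. I claim that for every $i$, the connected components of $G'_i$ and $F_i$ coincide, i.e., two vertices are joined in $G'_i$ if and only if they are joined in $F_i$. The base case is immediate, since $G'_0=F_0=(V,\emptyset)$. For the inductive step I consider the edge $e=(u,v)$ currently being processed. If $u$ and $v$ lie in different components, then $G'_{i-1}$ contains no $u$-$v$ path, so $P(u,v)=\infty>t\cdot W(e)$ and the greedy rule must add $e$; Kruskal likewise adds $e$; both merge the same pair of components, so the invariant is preserved. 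If instead $u$ and $v$ lie in the same component, Kruskal discards $e$, and regardless of whether the greedy rule adds $e$, the component structure of $G'_i$ is unchanged because $u$ and $v$ are already connected. In either case the components of $G'_i$ and $F_i$ remain identical.

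Given the invariant, the lemma follows immediately. An edge $e$ belongs to the MST precisely when Kruskal adds it, which happens exactly when its endpoints lie in distinct components of $F_{i-1}$; by the invariant these endpoints then also lie in distinct components of $G'_{i-1}$, so the greedy rule (seeing $P(u,v)=\infty$) is forced to add $e$ as well. Hence every MST edge is present in the final $G'$, i.e., $G'$ contains the minimum spanning tree of $G$ as a subgraph. The only delicate point is that when the edge weights are not all distinct the MST need not be unique; this is harmless provided one fixes the same sorted order for both algorithms, in which case the statement should be read as ``$G'$ contains the minimum spanning tree produced by Kruskal under that ordering.'' The main thing to get right is the observation that the absence of a $u$-$v$ path yields $P(u,v)=\infty$, which forces inclusion for any finite stretch $t$; everything else is bookkeeping.
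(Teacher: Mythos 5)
Your proof is correct. Note that the paper itself never proves this lemma --- it states it and refers the reader to Althofer et al.\ for the details --- so your argument fills in an omitted proof rather than deviating from one. The Kruskal-lockstep invariant you establish (the components of the partial greedy spanner and of Kruskal's partial forest coincide after each edge is processed, since an edge whose endpoints lie in different components has $P(u,v)=\infty>t\cdot W(e)$ and is therefore forced into $G'$) is the standard argument, and it is essentially the same technique the paper deploys in Appendix B to prove the analogous MST-containment lemmas for Algorithms [\ref{algo:greedymecs}] and [\ref{algo:greedymecsv2}], where a component-containment invariant plays the role of your component-equality invariant. Your explicit handling of non-unique MSTs via a fixed tie-breaking order is a detail the paper glosses over and is worth keeping.
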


Next we state a theorem that states that the size and weight of the solution obtained from algorithm [\ref{algo:greedyspanner}] 
are bounded above and gives the values of the upper bounds. One of the nice properties of the aforementioned algorithm is the 
fact that the weight of the resulting spanner is bounded above by a constant multiple of the weight of the MST.

\begin{theorem} 
Given a weighted graph $G=(V,E,W)$ and a stretch factor $t > 0$ it is possible to construct a feasible
solution to the MECS problem for stretch $2t+1$ such that the solution $G'$ has the following
properties: 
\begin{enumerate} 
\item $Size(G') < |V| \lceil |V|^{\frac{1}{t}} \rceil$ where \emph{size} counts the number of edges in the graph 
\item $Weight(G') < Weight(MST(G))(1+\frac{|V|}{2t})$ 
\end{enumerate} 
\end{theorem}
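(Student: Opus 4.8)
The plan is to run the greedy procedure of Algorithm~\ref{algo:greedyspanner} with its threshold set to $2t+1$, i.e.\ to compute $G'=\textsc{GreedySpanner}(G,2t+1)$, and then to verify the three assertions in turn. Feasibility for stretch $2t+1$ is not something I would reprove: it is exactly the content of Lemma~\ref{l1} applied with $\mathit{stretch}=2t+1$. So the real work lies in the two quantitative bounds, and I would first isolate the single structural property on which both of them rest, namely that the \emph{unweighted} girth of $G'$ exceeds $2t+1$ (every cycle of $G'$ contains more than $2t+1$ edges).

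To establish the girth bound I would argue about the edge that closes a cycle. Since Algorithm~\ref{algo:greedyspanner} processes $E$ in non-decreasing weight order, the last edge $e=(u,v)$ placed on any cycle $C$ of $G'$ is a heaviest edge of $C$; in particular every other edge of $C$ has weight at most $W(e)$. At the moment $e$ was inserted, the remaining edges of $C$ already formed a $u$--$v$ path, so the shortest $u$--$v$ distance in the partial graph was at most $\sum_{f\in C\setminus\{e\}}W(f)\le (|C|-1)\,W(e)$; but $e$ was inserted only because that distance exceeded $(2t+1)W(e)$. Combining the two inequalities forces $|C|-1>2t+1$, so in particular $|C|>2t+1$. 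With the girth in hand, the size bound is the classical Moore-type counting argument: if $G'$ had at least $|V|\lceil |V|^{1/t}\rceil$ edges it would contain a subgraph of minimum degree at least $\lceil |V|^{1/t}\rceil$, and a breadth-first search of depth $t$ from any vertex of that subgraph would, by the girth bound, meet no short cycle and therefore expand as a tree, producing more than $|V|$ distinct vertices --- a contradiction. This yields $Size(G')<|V|\lceil |V|^{1/t}\rceil$.

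For the weight bound I would start from the preceding lemma, that $G'$ contains the minimum spanning tree $T$ of $G$, and write $Weight(G')=Weight(T)+\sum_{e\in G'\setminus T}W(e)$, so that it suffices to show the extra edges contribute less than $\tfrac{|V|}{2t}\,Weight(T)$. The lever is again the cycle property proved above: each extra edge $e$ closes a cycle of more than $2t+1$ edges, all of weight at most $W(e)$, so $W(e)$ is only a $\tfrac{1}{2t+1}$ fraction of the weight already committed on that cycle. I would turn this into a global estimate by grouping the edges of $G'$ into geometric weight classes, observing that within each class the girth bound limits how many edges (hence how much weight) can appear relative to a spanning structure of that class, and then summing the resulting geometric series against $Weight(T)$. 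I expect this last aggregation to be the main obstacle: the per-cycle saving is local and the same light edges are reused by many cycles, so the accounting must be arranged (via the weight-class grouping and a careful charging to $T$) to avoid double counting and to recover exactly the factor $1+\tfrac{|V|}{2t}$. The girth bound and MST containment are the easy ingredients; making the weight telescoping tight is where the care goes.
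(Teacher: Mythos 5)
You should first note what you are being compared against: the paper never proves this theorem at all. It is quoted from Althöfer et al.~[\cite{althofer1993sparse}], and the text explicitly tells the reader to consult that reference for the proofs. So your attempt is a reconstruction of the classical argument, and the parts you actually carry out are sound. Feasibility for stretch $2t+1$ is indeed just Lemma~[\ref{l1}]. Your girth argument is the standard one and is correct: the last-added edge $e=(u,v)$ of a cycle $C$ is a heaviest edge of $C$, the rest of $C$ already formed a $u$--$v$ path when $e$ was examined by Algorithm~[\ref{algo:greedyspanner}], so $(|C|-1)W(e)\geq P(u,v)>(2t+1)W(e)$, which in fact gives the slightly stronger $|C|>2t+2$. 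The Moore-type counting is the standard route to item 1, though as stated it has a quantitative slip: with minimum degree exactly $\lceil |V|^{1/t}\rceil$ a depth-$t$ BFS need not exceed $|V|$ vertices (take $d=2$, $t=3$: $1+d+d(d-1)+d(d-1)^2=7<8=d^t$); you need the subgraph of minimum degree \emph{strictly} greater than $m/|V|\geq\lceil |V|^{1/t}\rceil$, which the degeneracy argument supplies, or a BFS to depth $t+1$, which your girth bound $|C|>2t+2$ permits. Either repair is routine.

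The genuine gap is item 2, the weight bound, which is the substantive half of the theorem, and you have not proved it: you propose a plan (geometric weight classes charged against the MST) and then candidly identify, without resolving, exactly the obstruction that defeats it. The per-cycle inequality $W(e)<W(C\setminus e)/(2t+1)$ cannot be summed over non-tree edges because the cycles reuse the same light edges, and bucketing by weight scale does not repair this: bounding each class separately against $Weight(MST)$ multiplies the bound by the number of classes, which is of order $\log(w_{\max}/w_{\min})$ and bears no relation to the target factor $\frac{|V|}{2t}$; there is no evident way to arrange such a decomposition to recover the sharp constant $1+\frac{|V|}{2t}$. The actual proof in [\cite{althofer1993sparse}] proceeds differently: from the girth property one extracts the cycle property that \emph{every} edge $e$ of \emph{every} cycle $C$ of $G'$ satisfies $W(e)<W(C)/(2t+2)$, and then a single global charging of the non-tree edges against (an Euler tour of) the minimum spanning tree — not a decomposition by weight scale — bounds the total non-tree weight by $\frac{|V|}{2t}\,Weight(MST)$. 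Since you flag this aggregation as an unresolved ``main obstacle,'' your proposal establishes feasibility, the girth bound, and (after the minor fix) the size bound, but item 2 remains an acknowledged hole, so the proof is incomplete.
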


Algorithm [\ref{algo:greedyspanner}] maintains a forest of connected components at any time and at every step adds 
an edge to merge two connected components. At the end of the algorithm we are left with a single connected component 
that satisfies the spanner condition. The algorithm works on \emph{local} information. At each step, it looks at the 
shortest path length between every pair of vertices. It makes sure that this is bounded above by a constant multiple of
the shortest path length between the vertices in the input graph. This in turn preserves the APL. However, this is more
than what we need. This algorithm preserves the shortest path length between all the vertex pairs and hence also 
preserves the original diameter to a constant multiple. We just need the preservation of the average path length and thus
this algorithm does more than what we require.

Going forward, we state two greedy heuristics for the MECS problem. Both of them do not use any local information.
The first algorithm starts with the original graph and removes edges from it, one by one, making sure that every edge
removal preserves the constraint on the APL. The edges are considered in the decreasing order of weights, thus making this
strategy a greedy strategy. The second algorithm mimics the algorithm of Althofar et al.~[\ref{algo:greedyspanner}] and 
starts with a forest. At each iteration, an edge is added, merging two components. The edges are considered in increasing 
order of weights and this process continues as long as the condition on the APL is violated. We state and analyze these algorithms 
next.

\begin{algorithm}
\DontPrintSemicolon 
\SetKwFunction{algo}{GreedyMECS}
\SetKwProg{myalg}{Procedure}{}{}
\myalg{\algo{$G$, $t$}}{
\KwIn{$G=(V,E,W)$ the input graph, $t$ the stretch}
\KwOut{$G$: The MECS Spanner}
$\mu \gets$ average distance in graph $G$\;
Sort $E$ in non-increasing order of weights\;
\For{$e \in E$} {
  $G_{Temp} \gets G-{e}$\;
  $\mu_{Temp} \gets$ average distance of $G_{Temp}$\;
  {\bf if} {$\mu_{temp} > t \cdot \mu$} {\bf then} Do not remove $e$ from $G$ {\bf else} $G \gets G-{e}$\;
}
\Return{$G$}\;
}
\caption{The Greedy Removal MECS Algorithm}
\label{algo:greedymecs}
\end{algorithm}

\begin{algorithm}
\DontPrintSemicolon 
\SetKwFunction{algo}{GreedyMECSV2}
\SetKwProg{myalg}{Procedure}{}{}
\myalg{\algo{$G$, $t$}}{
\KwIn{$G=(V,E,W)$ the input graph, $t$ the stretch}
\KwOut{$G$: The MECS Spanner}
$G' \gets (V,\{ \})$\;
Sort $E$ in non-decreasing order of weights\;
$\mu \gets$ average distance in graph $G$\;
\For{$e \in E$} {
  $\mu_{G'} \gets$ average distance of $G'$\;
  {\bf if} {$\mu_{G'} > t \cdot \mu$} {\bf then} $G' \gets G' \cup {e}$\;
}
\Return{$G'$}\;
}
\caption{The Greedy Addition MECS Algorithm}
\label{algo:greedymecsv2}
\end{algorithm}

One implicit assumption of algorithm [\ref{algo:greedymecs}] is that the removal of the edge $e$ from $G$ does not
leave the graph disconnected. In implementations, if the removal of the edge $e$ leaves the graph disconnected, then the
APL is distorted infinitely and hence the edge is not removed. Next we state a theorem regarding the nature of
the solution returned by the greedy algorithms. The proof of the theorem is provided as separate lemmas in Appendix B.

\begin{theorem} 
The following statements hold for the solutions returned by the algorithms [\ref{algo:greedymecs}] and [\ref{algo:greedymecsv2}]:
\begin{enumerate}
\item Algorithm [\ref{algo:greedymecs}]  always returns a feasible solution to the MECS problem
\item Consider the graph $G'$ returned by the algorithm [\ref{algo:greedymecs}] . For any input graph $G$, the graph $G'$
contains the MST of $G$
\item Algorithm [\ref{algo:greedymecsv2}]  always gives a feasible solution to the MECS problem
\item Consider the graph $G'$ returned by the algorithm [\ref{algo:greedymecsv2}]. For any input graph $G$, the graph $G'$
contains the MST of $G$
\end{enumerate}
\end{theorem}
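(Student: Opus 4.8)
The plan is to treat the four claims in two groups: the two feasibility statements (items~1 and~3) follow from monotonicity invariants, while the two MST--containment statements (items~2 and~4) require structural arguments tied to the order in which edges are processed. Throughout I assume distinct edge weights, so that the minimum spanning tree is unique and Kruskal's order is well defined; the general case follows by a standard tie--breaking argument.

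For the feasibility of Algorithm~\ref{algo:greedymecs} (item~1), I would maintain the invariant that after each iteration the current graph $G_{cur}$ satisfies $\mu_{G_{cur}}\leq t\mu_G$. This holds initially, since $G_{cur}=G$ and $t>1$, and it is preserved because an edge is deleted only when the test $\mu_{Temp}\leq t\mu_G$ succeeds. Because a disconnected graph has $\mu=\infty>t\mu_G$, deletions that disconnect the graph are automatically rejected, so $G_{cur}$ stays connected and the returned graph is feasible. For Algorithm~\ref{algo:greedymecsv2} (item~3) I would use that adding edges only weakly decreases every shortest--path distance, so $\mu_{G'}$ is non--increasing along the run. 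The algorithm keeps adding while $\mu_{G'}>t\mu_G$, and adding \emph{all} edges would yield $G$ itself with $\mu_G\leq t\mu_G$; hence the threshold $t\mu_G$ is reached, and by monotonicity it is never violated afterwards, so the final $G'$ is feasible (and connected, having finite $\mu$).

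For item~4 I would exploit the same ``infinite distance'' phenomenon. As long as $G'$ is disconnected we have $\mu_{G'}=\infty>t\mu_G$, so every edge encountered during this phase is added; consequently, while $G'$ is disconnected, its edge set contains every edge seen so far, and its connectivity matches that of the subgraph of those edges. By the bottleneck property of the MST, $G_{<w}$ separates the endpoints of every MST edge of weight $w$, and $G'$ stays disconnected until the heaviest (bottleneck) MST edge is processed. Hence each MST edge $e=(u,v)$ is examined while its endpoints still lie in different components, so $\mu_{G'}=\infty$ and $e$ is added. Therefore $G'\supseteq \mathrm{MST}(G)$.

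Item~2 is the delicate one. The natural attempt is to show that when Algorithm~\ref{algo:greedymecs} reaches an MST edge $e=(u,v)$, deleting $e$ disconnects the current graph, so $\mu_{Temp}=\infty$ and $e$ is retained. Let $(S,V\setminus S)$ be the fundamental cut of $e$; every other edge crossing it is strictly heavier than $e$ and is therefore processed earlier. If all such heavier crossing edges have already been deleted, then $e$ is a genuine bridge and is kept. The hard part is the remaining case, in which some heavier crossing edge $f$ \emph{survives} and offers an alternative $u$--$v$ route, so that $e$ is not a bridge. Here I would argue by a cost comparison: since any $u$--$v$ detour forced by deleting $e$ must traverse a crossing edge strictly heavier than $e$, and since $e$ is the lightest crossing edge of its cut, deleting $e$ increases the all--pairs distance sum by at least as much as deleting the surviving heavier crossing edge $f$ would. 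But $f$ was examined earlier and was \emph{not} deleted, so its deletion already violated the bound; the same (or larger) increase then forbids deleting $e$. Making this comparison rigorous---proving that the lighter cut edge is never cheaper to remove than a heavier crossing edge of the same cut, at every intermediate graph, via a careful pair--by--pair accounting of rerouted shortest paths---is the main obstacle and is where I expect the bulk of the technical effort to lie.
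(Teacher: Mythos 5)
Your items 1 and 3 are correct and match the paper's Appendix~B arguments essentially verbatim: every deletion in Algorithm~\ref{algo:greedymecs} is guarded by the test on $\mu_{Temp}$, and Algorithm~\ref{algo:greedymecsv2} cannot terminate infeasibly because $G$ itself is feasible. Your item 4 is also correct, and takes a genuinely cleaner route than the paper: the paper runs an induction in lockstep with Kruskal's algorithm, showing each Kruskal component stays inside a component of the greedy graph, and then invokes ``different components $\Rightarrow \mu_{G'}=\infty \Rightarrow$ the edge is added''; you reach the same conclusion directly from the cut property (the edges lighter than an MST edge never connect its endpoints, so every MST edge is examined while $G'$ is disconnected and is therefore added). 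Both proofs turn on the same infinite-distance observation; yours avoids the component bookkeeping.

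Item 2 is where the genuine gap lies, and you have located it precisely --- but the cost-comparison lemma you hope to prove is not merely hard, it is false. The cost of deleting an edge (the increase in the all-pairs distance sum) is governed by how many pairs route through it, not by its weight. Concretely: let $x,y$ be hubs, each carrying $m$ pendant leaves attached by edges of weight $1.5$, joined by $f=(x,y)$ of weight $2$; let $u$ be adjacent to $x$ (weight $1.9$) and $v$ adjacent to $y$ (weight $1.95$), each carrying only $r\ll m$ pendant leaves; let $e=(u,v)$ have weight $1$. The MST is the whole graph minus $f$, so $e$ is an MST edge and $f$ is the unique other edge crossing its fundamental cut. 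Deleting $f$ reroutes all $\approx m^2$ hub-leaf pairs through $x$--$u$--$v$--$y$, increasing the distance sum by $\approx 2.85\,m^2$; deleting $(v,y)$ or $(u,x)$ costs $\approx 4rm$; but deleting $e$ costs only $\approx 4.85\,r^2+2rm$, because the $m^2$ hub-side pairs keep using $f$. Taking $r=m/5$ and a budget $C-\mathrm{sum}(G)\approx 0.6\,m^2$, the removal algorithm (heaviest first) keeps $f$, keeps $(v,y)$ and $(u,x)$, keeps the leaf bridges, and then deletes the MST edge $e$. So at the very same intermediate graph the lighter cut edge is strictly \emph{cheaper} to remove than the heavier crossing edge; your proposed lemma fails, and in fact statement 2 of the theorem is false as stated.

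It is worth knowing that the paper's own proof of item 2 breaks at exactly the spot you flagged: it asserts that the surviving alternative path $p(u,v)$ consists only of edges lighter than $e$, justified by ``otherwise, one of these edges would have been removed earlier.'' That is a non sequitur --- heavier edges are processed earlier but are removed only if the APL budget permits, and in the example above they all survive precisely because their removal is too expensive. Your instinct that this case is the crux was exactly right; no pair-by-pair accounting will close it, because the inequality it is meant to establish can go the wrong way.
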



Apart from being an interesting observation, with a nice and concise proof, the above result also helps us to change the
algorithm [\ref{algo:greedymecsv2}]. As we know that the solution returned by the algorithm [\ref{algo:greedymecsv2}] will 
always contain the MST, instead of starting the algorithm from a forest, we can start the algorithm from the MST. Thus we 
compute the MST of the input graph and then we start adding edges that are not included in the MST, in increasing order of 
weights. We continue this process until the APL of the resulting graph satisfies the constraint. These observations are also
discussed in Appendix B due to space constraints.

\section{Experimental Results}



In an attempt to assess the performance of the proposed algorithms in comparison with others in the literature, we performed 
computational experiments using several standard data sets. In this section, we present and discuss our findings from these
experiments.

In order to compare the different algorithms and formulations, we implemented and tested our algorithms on several different 
real life networks. The flow based approach, the path based approach and the MIP based approach were all implemented using 
FICO Xpress optimization suite~[\cite{bussieck2010minlp}]. The greedy heuristics were implemented using Python~[\cite{van2007python}], 
using the Networkx~[\cite{schult2008exploring}] library to implement the standard graph algorithms. All the implementations were tested
against the Greedy Spanner algorithm~[\ref{algo:greedyspanner}] of Althofar et al.~[\cite{althofer1993sparse}], which, as has been shown 
above, computes a feasible solution to the MECS problem. However, this solution is not always the optimal one.

All the experiments were carried out on a laptop running OSX Maverics, having 8 GB RAM and an Intel Core i7 processor. In all our implementations 
we use the initial average path length~($\mu_G$) and the target average path length~($\mu_{G_s}$) as the input along with the input graph. We do not 
use the stretch factor explicitly and it is computed based on the two inputs. Going forward we first describe the datasets that we used and finally 
we describe the results of our experiments.

\subsection{Datasets}
Most of our experiments were done with unweighted graphs, though the underlying algorithms might as well work with weighted graphs. The main reason
for not using arbitrary weighted graphs for the experiments was the observation that solutions based on direct optimization methods did not scale very
well with weighted graphs. We do however, report the results of our experiments on weighted unit disk graphs~[\cite{clark1990unit}]. The first dataset 
that we use is the famous karate club graph~[\cite{zachary1977information}]. This is a social network graph that was first collected in 1977 and depicts
the friendship relations between 34 members of a karate club. 
We also used our algorithms on the Kreb's network graph. As mentioned before we ran our experiments on Unit disk weighted graphs. These
graphs were generated randomly and we used both a weighted as well as an unweighted version of the graph. As our goal is to minimize the total number 
of edges~(or total edge weight) in the MECS spanner, we report this number for the different algorithms. We compare the results from the greedy spanner 
algorithm against the optimal solutions obtained by our integer programming based approaches as well as against the results from the greedy heuristics.

\subsection{Results}
The results obtained by our algorithms on the different datasets are described below. As mentioned at the beginning of this section, instead of 
explicitly using a \emph{stretch}, we use an increment. Thus if the original APL for the input graph is $\mu$ and the target increment is $\delta$,
then the goal is to find a sparse subgraph of the input graph whose APL is at most $\mu + \delta$. In all our experiments we use three values of the 
increments namely, 0.1, 0.2 and 0.3. Next we describe the results for each of the datasets separately.

\subsubsection{Karate Club Graph}
The results of running the MIP based algorithms on the Karate Club graph are shown in figure [\ref{fig_karate_mip}]. The original graph has an average
path length of $\mu = 2.41$. The figures show the results of running the algorithm with target average path lengths of 2.51, 2.59 and 2.69. We note that
when the target APL is 2.51, the optimal number of edges is 48 and the number of edges decreases to 37 when the target is 2.69. The total number of 
edges in the minimum spanning tree for the graph is 33. The results for this graph are summarized in the table \ref{karate}.


Though the results from the algorithms [\ref{algo:greedymecs}] and [\ref{algo:greedymecsv2}], are not optimal, they still perform 
better than the greedy spanner algorithm of Althofar et al. We also note that empirically, the size of the spanners returned by the greedy 
algorithms is at most twice the size of the optimal spanner. We also note that the size of the MST is 33. An interesting thing happens when 
the increment to the average distance is set to be at 2. Then the algorithm [\ref{algo:greedymecs}] results in a spanner that has the same 
size as the MST, that is 33. However, the algorithm [\ref{algo:greedymecsv2}] still gives a spanner of size 67 and the algorithm of 
Althofar et al. still returns the original graph. The figure [\ref{fig_karate_greedy}] shows the results of running the different variations of 
the greedy algorithms on the Karate club graph for an increment of 0.3. 

\begin{table}[ht]
\centering
 \begin{tabular}{||c c c c c||} 
 \hline
 Increment & MIP & Greedy1 & Greedy2 & Althofar \\ [0.5ex] 
 \hline\hline
 0.1 & 48 & 73 & 72 & 78 \\ 
 \hline
 0.2 & 40 & 71 & 67 & 78 \\
 \hline
 0.3 & 37 & 66 & 67 & 78 \\
 \hline
\end{tabular}
\caption{Size of spanners for the karate club graph}
\label{karate}
\end{table}

\begin{figure}[H]
  \centering
\subfloat[$|E|=78$, $\mu=2.41$]{\label{fig_karate_orig}\includegraphics[scale=0.2]{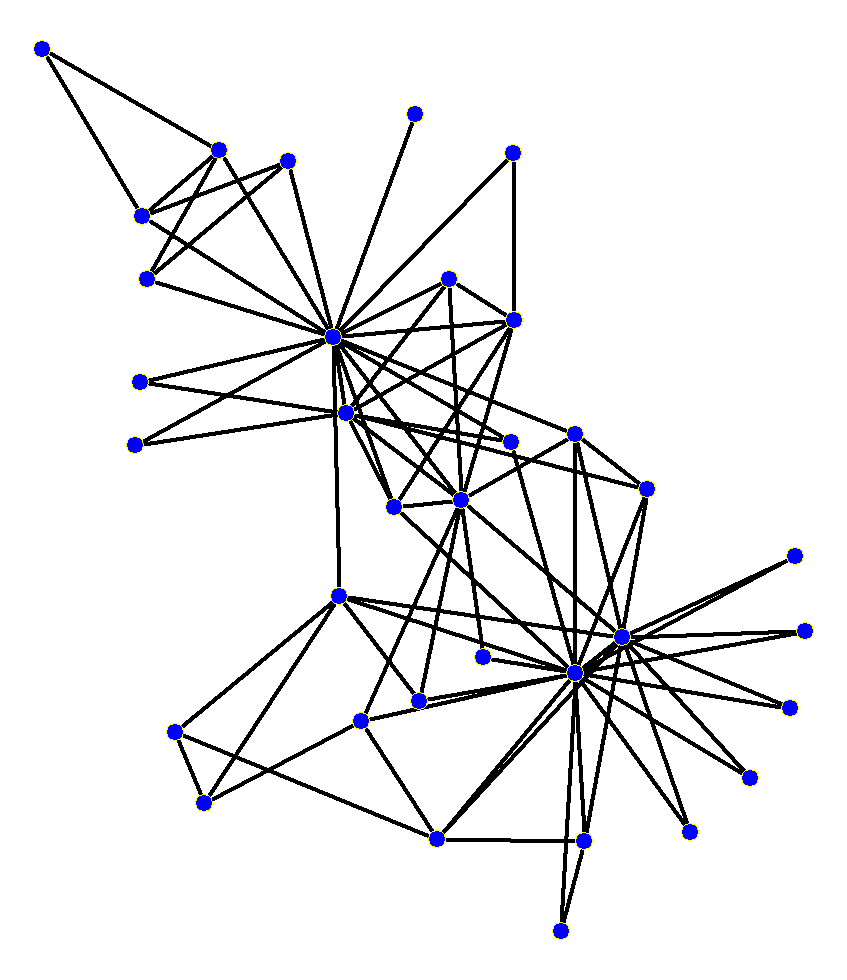}}
\subfloat[ $|E_s|=48$, $\mu_s=2.51$]{\label{fig_karate_25}\includegraphics[scale=0.2]{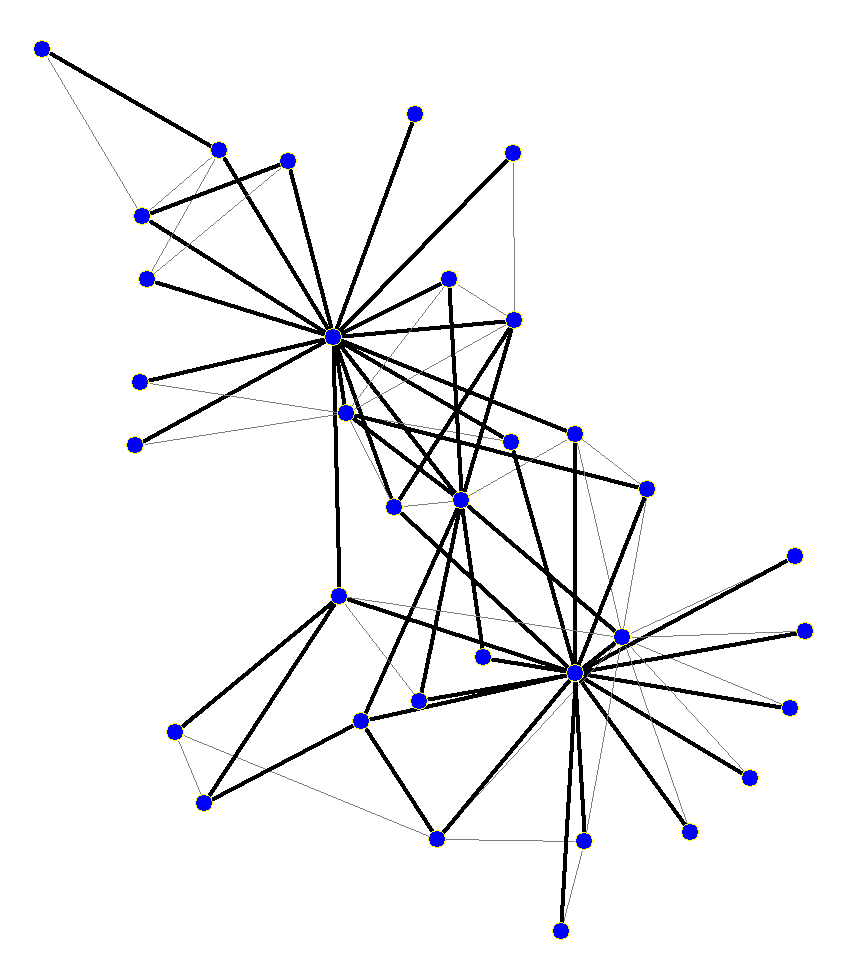}} \\
\subfloat[ $|E_s|=40$, $\mu_s=2.61$]{\label{fig_karate_26}\includegraphics[scale=0.2]{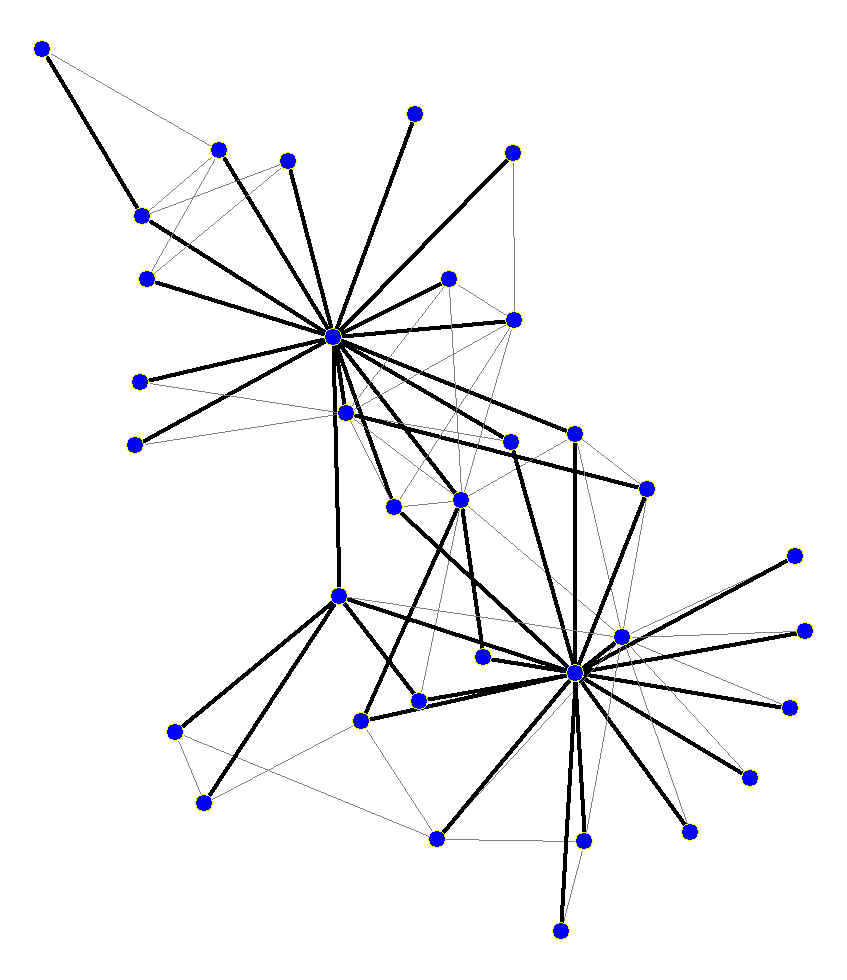}}
\subfloat[ $|E_s|=37$, $\mu_s=2.71$]{\label{fig_karate_27}\includegraphics[scale=0.2]{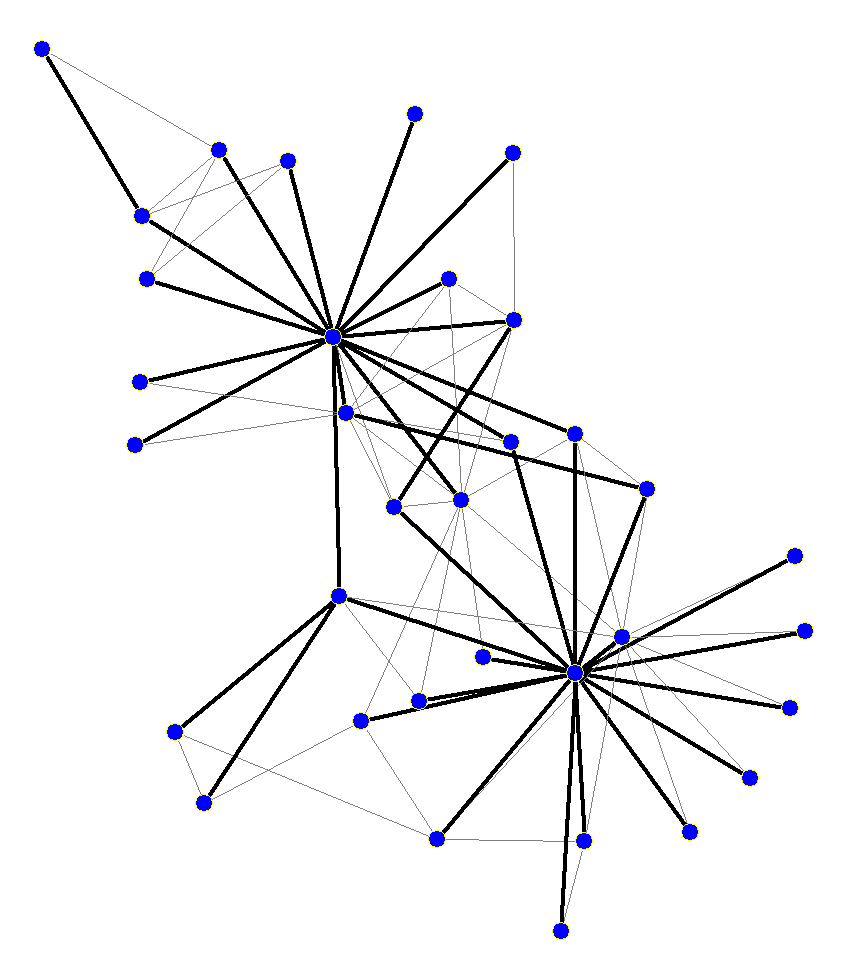}}
  \caption{Illustration of the (a) \textbf{Karate Club} network ($|V|=34, |E|=78$) with average distance $\mu=2.41$ 
  (b) the optimal compact spanner (thick edges) with average distance $\mu_s=\mu+0.1$ (c) the optimal compact spanner 
  (thick edges) with average distance $\mu_s=\mu+0.2$ (d) the optimal compact spanner (thick edges) with average distance $\mu_s=\mu+0.3$}
    \label{fig_karate_mip}
\end{figure}

\begin{figure}[H]
  \centering
\subfloat[$|E|=66$, $\mu_s=2.71$]{\label{fig_karate_greedy_03_r}\includegraphics[width=0.5\textwidth]{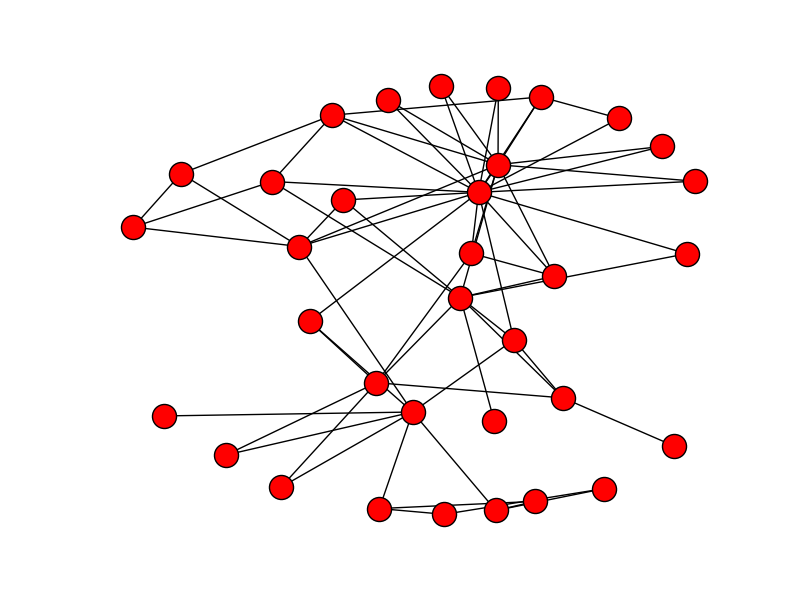}}
\subfloat[ $|E_s|=67$, $\mu_s=2.71$]{\label{fig_karate_greedy_03_a}\includegraphics[width=0.5\textwidth]{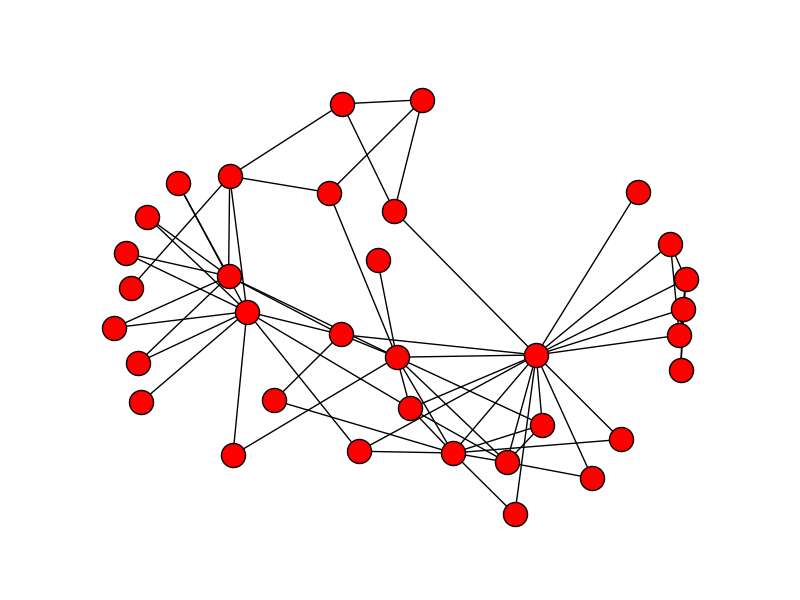}} \\
  \caption{Illustration of the (a) \textbf{Karate Club} network: result of greedy algorithm [\ref{algo:greedymecs}] with increment of 0.3 
  (b) \textbf{Karate Club} network: result of greedy algorithm [\ref{algo:greedymecsv2}] with increment of 0.3 }
    \label{fig_karate_greedy}
\end{figure}



\subsubsection{Unit Disk Graph}
Next we present the results for running our algorithms on an artificially generated graph. We generated the Unit Disk Graph following a standard 
procedure in wireless communication network analysis: 
\begin{itemize}
\item First we generate a 100x100 box and put 50 \enquote{sensors} inside the box
\item Set a communication range of 20, thus if the sensors are less than 20 units apart then they can communicate directly
\end{itemize}
As before we consider the results of running our algorithms for the increments of 0.1, 0.2 and 0.3. The results obtained are shown in the table
[\ref{unwt_ud}]. Its obvious that for small increments, as we have considered, the algorithm of Althofar et al. is not able to sparsify the graph
at all. However, the algorithm [\ref{algo:greedymecs}] performs well and for each of the increments it gives us some sparsification of the
underlying graph. However, the algorithm based on the addition of edges, [\ref{algo:greedymecsv2}], does not perform as well as the first for 
small increments.

\begin{table}[ht]
\centering
 \begin{tabular}{||c c c c c||} 
 \hline
 Increment & MIP & Greedy1 & Greedy2 & Althofar \\ [0.5ex] 
 \hline\hline
 0.1 & 68 & 103 & 107 & 119 \\ 
 \hline
 0.2 & 62 & 99 & 106 & 119 \\
 \hline
 0.3 & 58 & 89 & 106 & 119 \\
 \hline
\end{tabular}
\caption{Size of spanners for the unweighted unit disk graph}
\label{unwt_ud}
\end{table}

The table [\ref{wt_ud_1}] shows the results of running our algorithms on a weighted unit disk graph. The weights on the edges are either 1 or 2 
depending on the Euclidean distance between the vertices. The reason for constructing this graph in this particular way was to be able to run 
the MIP based algorithms on this graph as a matter of comparison with the greedy algorithms. We were not able to run the MIP based solutions
on general weighted graphs of this size. Our main goal in this experiment was to be able to compare the quality of the solutions obtained for 
weighted graphs, from both the MIP based algorithms and the greedy algorithms. 
Again it is clear that algorithm [\ref{algo:greedymecs}] outperforms the other two greedy algorithms.


\begin{table}[ht]
\centering
 \begin{tabular}{||c c c c c||} 
 \hline
 Increment & MIP & Greedy1 & Greedy2 & Althofar \\ [0.5ex] 
 \hline\hline
 0.1 & 119 & 169 & 281 & 189 \\ 
 \hline
 0.2 & 102 & 159 & 281 & 189 \\
 \hline
 0.3 & 93 & 153 & 261 & 189 \\
 \hline
\end{tabular}
\caption{Weight of spanners for the weighted unit disk graph}
\label{wt_ud_1}
\end{table}

The resulting spanner graphs obtained for the unweighted and the weighted versions of the Unit Disk Graph are shown in
the Appendix C.

\subsubsection{Krebs Network}
The results of running our algorithms on the krebs36 network is given in table [\ref{krebs}]. The size of the original input graph is 153 and the APL
is 2.92. The size of the MST for this graph is 61. As before we consider three values for the increments, namely, 0.1, 0.2 and 0.3. Among the greedy
algorithms, we observe that the algorithm [\ref{algo:greedymecs}] outperforms the other two greedy algorithms.

\begin{table}[ht]
\centering
 \begin{tabular}{||c c c c c||} 
 \hline
 Increment & MIP & Greedy1 & Greedy2 & Althofar \\ [0.5ex] 
 \hline\hline
 0.1 & 82 & 139 & 149 & 153 \\ 
 \hline
 0.2 & 63 & 132 & 149 & 153 \\
 \hline
 0.3 & 61 & 136 & 149 & 153 \\
 \hline
\end{tabular}
\caption{Size of spanners for the unweighted krebs graph}
\label{krebs}
\end{table}

\subsubsection{Discussion}
We conclude this section with a small explanation of the results that we observe with the greedy algorithms. 
\paragraph{Greedy MECS Algorithms} 
As seen from the results of the experiments above, the algorithm [\ref{algo:greedymecs}] outperforms the algorithm [\ref{algo:greedymecsv2}] in most of the 
cases. In order to understand why this happens one needs to look at the way the two algorithms operate. The first one \emph{removes} edges from the graph
whereas the second one \emph{adds} edges to the graph. However, when removing the edges [\ref{algo:greedymecs}] only removes edges that 
are \emph{relevant}. An edge is relevant for removal if the resulting subgraph has an APL that satisfies the MECS criteria. Thus at every iteration
[\ref{algo:greedymecs}] removes only the most important edges. On the other hand [\ref{algo:greedymecsv2}] starts with a graph whose average path length
is infinite~(namely a set of disconnected points). At each step it adds and edge if the current subgraph violates the MECS criteria. Thus it will go on
doing this until the MECS criteria is satisfied. At this point it will stop the addition of edges. Thus it may end up adding more edges than is required
to satisfy the given APL target. The only way around this problem is to add only relevant edges, edges that bring down the APL the most among all the 
edges that have not been considered till now. One of the advantages of this algorithm is that most often it achieves a APL value that is lower than the
target, however at the cost of increasing the size of the resulting spanner. However, that would complicate the implementation more. We postpone 
such studies to a later paper.

\paragraph{Greedy Spanner Algorithm} 
We have observed that for unweighted graphs the greedy spanner algorithm does not return a sparse graph for most of our test cases. We have shown before 
that the solution returned by this algorithm is feasible for the MECS problem, however because of this issue, this algorithm does not present us with a 
practical option for computing MECS spanners on unweighted graphs. In order to understand why this problem occurs, we need to realize that there is a 
distinct difference between the magnitude of the stretch factor that is used for path based spanners and the APL based MECS spanners. In our case, 
we want the stretch to be really small, we don't want the average path length to increase too much. As a result for most of our experiments the 
stretch is in the interval $(1,1+\epsilon)$ for small $\epsilon > 0$. For unweighted graphs, this causes a problem, as the path length can only 
change in integral multiples of unity. Thus depending on the value of the stretch that we use, we would get some sparsification against no 
sparsification at all. This in turn justifies effort to look for more efficient algorithms for computing near optimal MECS solutions.

\section{Conclusions and Future Work}
In this paper we have introduced the \emph{Minimum Edge Compact Spanner} problem and we have shown
that the problem is $NP$-hard. We have used several MIP formulations of the problem to get
the optimal MECS solutions. Moreover we have proposed two greedy algorithms for the MECS problem
and have used them on real life networks. We have compared the results obtained from the MIP
based solutions and the greedy algorithms. 

In the near future, we would like to study the following problems:
(1) is it possible to prove that the algorithm [\ref{algo:greedymecs}]
gives a constant factor approximation for the MECS problem (2) are there other approximation algorithms
for the MECS problem that gives $O(\log |V|)$-factor approximations (3) does the MECS problem have a PTAS
and a FPTAS (4) if not then what is the hardness of approximating the MECS problem. We would like to consider
these problems in the settings of Euclidean graphs, graphs on metric spaces that have nice packing properties
like doubling metrics as well as for general graphs~(metric spaces with shortest path metric). Average
path length is an important parameter for graphs and there are several other similar parameters. We would
like to investigate whether similar problems can be solved for these parameters as well.

\newpage

\bibliography{Alex,tm}

\begin{thebibliography}{10}
\expandafter\ifx\csname url\endcsname\relax
  \def\url#1{\texttt{#1}}\fi
\expandafter\ifx\csname urlprefix\endcsname\relax\def\urlprefix{URL }\fi
\expandafter\ifx\csname href\endcsname\relax
  \def\href#1#2{#2} \def\path#1{#1}\fi

\bibitem{ellens2013graph}
W.~Ellens, R.~E. Kooij, Graph measures and network robustness, arXiv preprint
  arXiv:1311.5064.

\bibitem{jeong2000large}
H.~Jeong, B.~Tombor, R.~Albert, Z.~N. Oltvai, A.-L. Barab{\'a}si, The
  large-scale organization of metabolic networks, Nature 407~(6804) (2000)
  651--654.

\bibitem{kirkby1976tests}
M.~Kirkby, Tests of the random network model, and its application to basin
  hydrology, Earth Surface Processes 1~(3) (1976) 197--212.

\bibitem{watts1998collective}
D.~J. Watts, S.~H. Strogatz, Collective dynamics of small-world networks,
  nature 393~(6684) (1998) 440--442.

\bibitem{newman2002random}
M.~E. Newman, D.~J. Watts, S.~H. Strogatz, Random graph models of social
  networks, Proceedings of the National Academy of Sciences 99~(suppl 1) (2002)
  2566--2572.

\bibitem{adamic2000power}
L.~A. Adamic, B.~A. Huberman, Power-law distribution of the world wide web,
  science 287~(5461) (2000) 2115--2115.

\bibitem{bondy1976graph}
J.~A. Bondy, U.~S.~R. Murty, Graph theory with applications, Vol. 290,
  Macmillan London, 1976.

\bibitem{chung1988average}
F.~Chung, The average distance and the independence number, Journal of Graph
  Theory 12~(2) (1988) 229--235.

\bibitem{althofer1993sparse}
I.~Althofer, G.~Das, D.~Dobkin, D.~Joseph, J.~Soares, On sparse spanners of
  weighted graphs, Discrete \& Computational Geometry 9~(1) (1993) 81--100.

\bibitem{peleg1989graph}
D.~Peleg, A.~A. Sch{\"a}ffer, Graph spanners, Journal of graph theory 13~(1)
  (1989) 99--116.

\bibitem{peleg1987optimal}
D.~Peleg, J.~D. Ullman, An optimal synchronizer for the hypercube, in:
  Proceedings of the sixth annual ACM Symposium on Principles of distributed
  computing, ACM, 1987, pp. 77--85.

\bibitem{cai1995tree}
L.~Cai, D.~G. Corneil, Tree spanners, SIAM Journal on Discrete Mathematics
  8~(3) (1995) 359--387.

\bibitem{emek2004approximating}
Y.~Emek, D.~Peleg, Approximating minimum max-stretch spanning trees on
  unweighted graphs, in: Proceedings of the fifteenth annual ACM-SIAM symposium
  on Discrete algorithms, Society for Industrial and Applied Mathematics, 2004,
  pp. 261--270.

\bibitem{elkin2000strong}
M.~Elkin, D.~Peleg, Strong inapproximability of the basic k-spanner problem,
  in: Automata, Languages and Programming, Springer, 2000, pp. 636--648.

\bibitem{elkin2000hardness}
M.~Elkin, D.~Peleg, The hardness of approximating spanner problems, in: STACS
  2000, Springer, 2000, pp. 370--381.

\bibitem{elkin2005approximating}
M.~Elkin, D.~Peleg, Approximating k-spanner problems for k> 2, Theoretical
  Computer Science 337~(1) (2005) 249--277.

\bibitem{levcopoulos2002improved}
C.~Levcopoulos, G.~Narasimhan, M.~Smid, Improved algorithms for constructing
  fault-tolerant spanners, Algorithmica 32~(1) (2002) 144--156.

\bibitem{chechik2010fault}
S.~Chechik, M.~Langberg, D.~Peleg, L.~Roditty, Fault tolerant spanners for
  general graphs, SIAM Journal on Computing 39~(7) (2010) 3403--3423.

\bibitem{dinitz2011fault}
M.~Dinitz, R.~Krauthgamer, Fault-tolerant spanners: better and simpler, in:
  Proceedings of the 30th annual ACM SIGACT-SIGOPS symposium on Principles of
  distributed computing, ACM, 2011, pp. 169--178.

\bibitem{solomon2012fault}
S.~Solomon, Fault-tolerant spanners for doubling metrics: Better and simpler,
  arXiv preprint arXiv:1207.7040.

\bibitem{solomon2014hierarchical}
S.~Solomon, From hierarchical partitions to hierarchical covers: Optimal
  fault-tolerant spanners for doubling metrics, in: Proceedings of the 46th
  Annual ACM Symposium on Theory of Computing, ACM, 2014, pp. 363--372.

\bibitem{chan2012sparse}
T.-H.~H. Chan, M.~Li, L.~Ning, Sparse fault-tolerant spanners for doubling
  metrics with bounded hop-diameter or degree, in: Automata, Languages, and
  Programming, Springer, 2012, pp. 182--193.

\bibitem{bose2013robust}
P.~Bose, V.~Dujmovic, P.~Morin, M.~Smid, Robust geometric spanners, SIAM
  Journal on Computing 42~(4) (2013) 1720--1736.

\bibitem{cormen2009introduction}
T.~H. Cormen, Introduction to algorithms, MIT press, 2009.

\bibitem{hwang1992steiner}
F.~K. Hwang, D.~S. Richards, P.~Winter, The Steiner tree problem, Vol.~53,
  Elsevier, 1992.

\bibitem{botton2013benders}
Q.~Botton, B.~Fortz, L.~Gouveia, M.~Poss, Benders decomposition for the
  hop-constrained survivable network design problem, INFORMS journal on
  computing 25~(1) (2013) 13--26.

\bibitem{awerbuch1997buy}
B.~Awerbuch, Y.~Azar, Buy-at-bulk network design, in: Foundations of Computer
  Science, 1997. Proceedings., 38th Annual Symposium on, IEEE, 1997, pp.
  542--547.

\bibitem{ma2016minimum}
J.~Ma, F.~M. Pajouh, B.~Balasundaram, V.~Boginski, The minimum spanning k-core
  problem with bounded cvar under probabilistic edge failures, INFORMS Journal
  on Computing 28~(2) (2016) 295--307.

\bibitem{golden2005heuristic}
B.~Golden, S.~Raghavan, D.~Stanojevi{\'c}, Heuristic search for the generalized
  minimum spanning tree problem, INFORMS Journal on Computing 17~(3) (2005)
  290--304.

\bibitem{melkonian2005primal}
V.~Melkonian, {\'E}.~Tardos, Primal-dual-based algorithms for a directed
  network design problem, INFORMS Journal on Computing 17~(2) (2005) 159--174.

\bibitem{gupta2011approximation}
A.~Gupta, J.~K{\"o}nemann, Approximation algorithms for network design: A
  survey, Surveys in Operations Research and Management Science 16~(1) (2011)
  3--20.

\bibitem{johnson1978complexity}
D.~S. Johnson, J.~K. Lenstra, A.~Kan, The complexity of the network design
  problem, Networks 8~(4) (1978) 279--285.

\bibitem{wong1980worst}
R.~T. Wong, Worst-case analysis of network design problem heuristics, SIAM
  Journal on Algebraic Discrete Methods 1~(1) (1980) 51--63.

\bibitem{chuzhoy2008approximability}
J.~Chuzhoy, A.~Gupta, J.~S. Naor, A.~Sinha, On the approximability of some
  network design problems, ACM Transactions on Algorithms (TALG) 4~(2) (2008)
  23.

\bibitem{chan2006spanners}
T.-H.~H. Chan, M.~Dinitz, A.~Gupta, Spanners with slack, in: Algorithms--ESA
  2006, Springer, 2006, pp. 196--207.

\bibitem{Botton11}
Q.~Botton, B.~Fortz, L.~Gouveia, M.~Poss, Benders decomposition for the
  hop-constrained survivable network design problem, INFORMS Journal on
  Computing\href {http://dx.doi.org/10.1287/ijoc.1110.0472}
  {\path{doi:10.1287/ijoc.1110.0472}}.

\bibitem{Gouveia08}
L.~Gouveia, P.~Patricio, A.~Sousa, Hop-constrained node survivable network
  design: An application to mpls over wdm, Networks and Spatial Economics 8~(1)
  (2008) 3--21.

\bibitem{Pirkul03}
H.~Pirkul, S.~Soni, New formulations and solution procedures for the hop
  constrained network design problem, European Journal of Operational Research
  148~(1) (2003) 126--140.

\bibitem{veremyev_cnp_2015}
A.~Veremyev, O.~A. Prokopyev, E.~L. Pasiliao,
  \href{http://dx.doi.org/10.1002/net.21622}{Critical nodes for distance-based
  connectivity and related problems in graphs}, Networks 66~(3) (2015)
  170--195.
\newblock \href {http://dx.doi.org/10.1002/net.21622}
  {\path{doi:10.1002/net.21622}}.
\newline\urlprefix\url{http://dx.doi.org/10.1002/net.21622}

\bibitem{garey1979computers}
M.~R. Garey, D.~S. Johnson, Computers and intractability: a guide to
  np-completeness (1979).

\bibitem{albert2002statistical}
R.~Albert, A.-L. Barab{\'a}si, Statistical mechanics of complex networks,
  Reviews of modern physics 74~(1) (2002) 47.

\bibitem{bussieck2010minlp}
M.~R. Bussieck, S.~Vigerske, Minlp solver software, Wiley encyclopedia of
  operations research and management science.

\bibitem{van2007python}
G.~Van~Rossum, et~al., Python programming language., in: USENIX Annual
  Technical Conference, Vol.~41, 2007.

\bibitem{schult2008exploring}
D.~A. Schult, P.~Swart, Exploring network structure, dynamics, and function
  using networkx, in: Proceedings of the 7th Python in Science Conferences
  (SciPy 2008), Vol. 2008, 2008, pp. 11--16.

\bibitem{clark1990unit}
B.~N. Clark, C.~J. Colbourn, D.~S. Johnson, Unit disk graphs, Discrete
  mathematics 86~(1-3) (1990) 165--177.

\bibitem{zachary1977information}
W.~W. Zachary, An information flow model for conflict and fission in small
  groups, Journal of anthropological research (1977) 452--473.

\end{thebibliography}

\newpage
\appendix
\section{$NP$-Completeness of ECSTS Problem} \label{AppA}

We prove the $NP$-completeness of the ECSTS problem using a reduction from the \emph{Exact 3-Cover Problem}
~[\cite{garey1979computers}]. This problem is defined as follows:

\begin{definition}[Exact 3-Cover Problem] \label{ecp}
Let $T=\{T_1,T_2,\ldots ,T_{3t} \}$ be a set of $3t$ elements, for some integer $t$ and let $S=\{M_1,M_2,\ldots, M_k\}$ 
be a \emph{collection} of $3$-element subsets of $T$. The Exact $3$-Cover problem asks whether there exists a collection of 
subsets $S' \subseteq S$, of \emph{disjoint} $3$-element subsets of $T$, such that $\cup (M \in S') = T$.
\end{definition}

Mathematically we can write the problem as that of finding a subgraph $G_s =(V,E_s)$ of $G=(V,E)$ such that:

\begin{equation}
|E_s| \leq |V|-1 \ \textrm{ and} \ \mu_{G_s} \leq c, \ c \ \textrm{finite}
\end{equation} 

This in turn can be written as:

\begin{equation} 
|E_s| \leq |V|-1 \ \textrm{ and} \sum_{(u,v):u,v \in V;u \neq v} d(u,v) \leq C, \ C \ \textrm{finite}
\end{equation} 
where $c = \frac{C}{n(n-1)}$ where $n=|V|$ which is exactly the \ref{ecsts}
problem.

\begin{figure}[ht]
  \centering
  \subfloat[]{\includegraphics[width=0.4\textwidth]{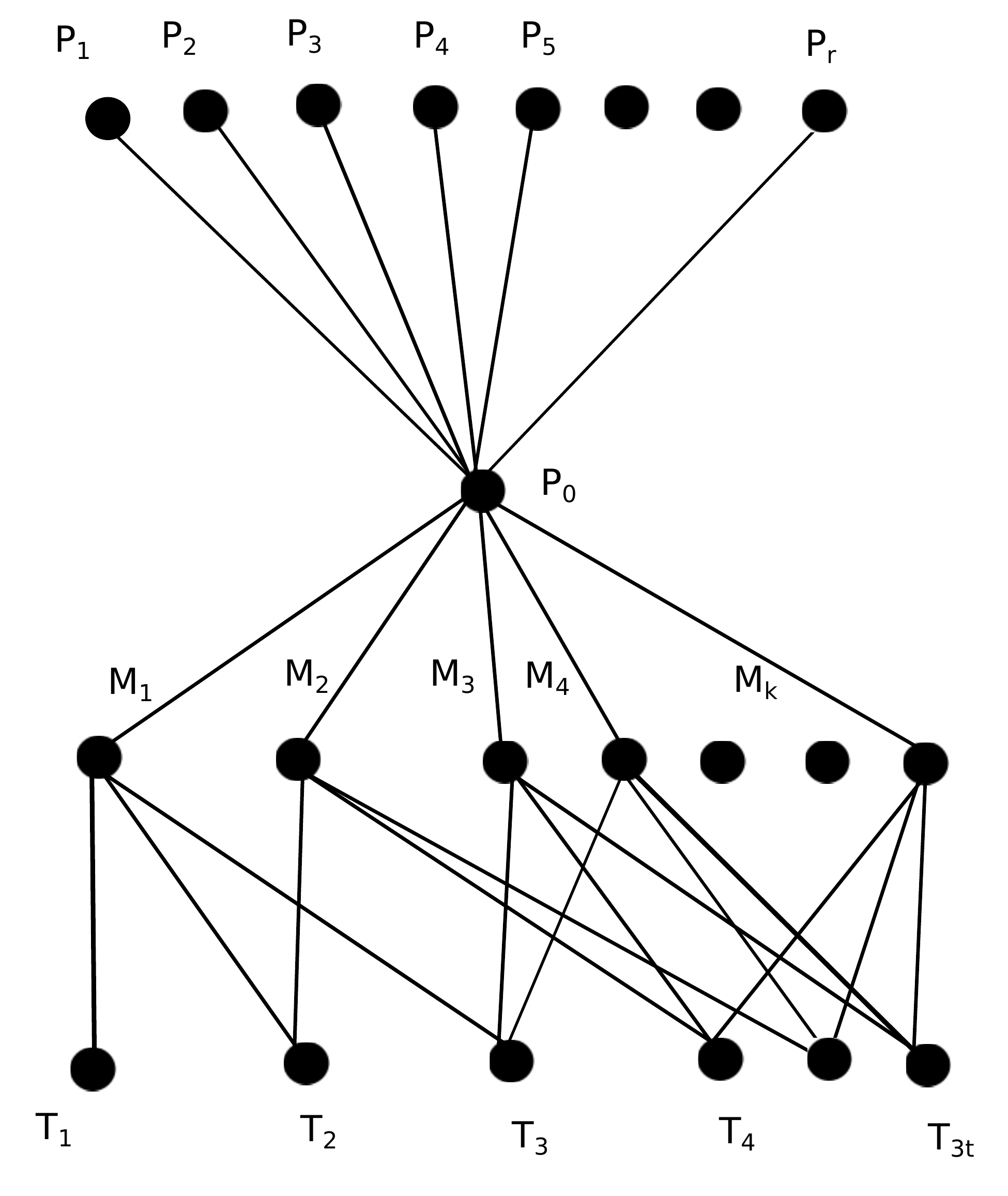}\label{fig:gadget}}
  \hfill
  \subfloat[]{\includegraphics[width=0.4\textwidth]{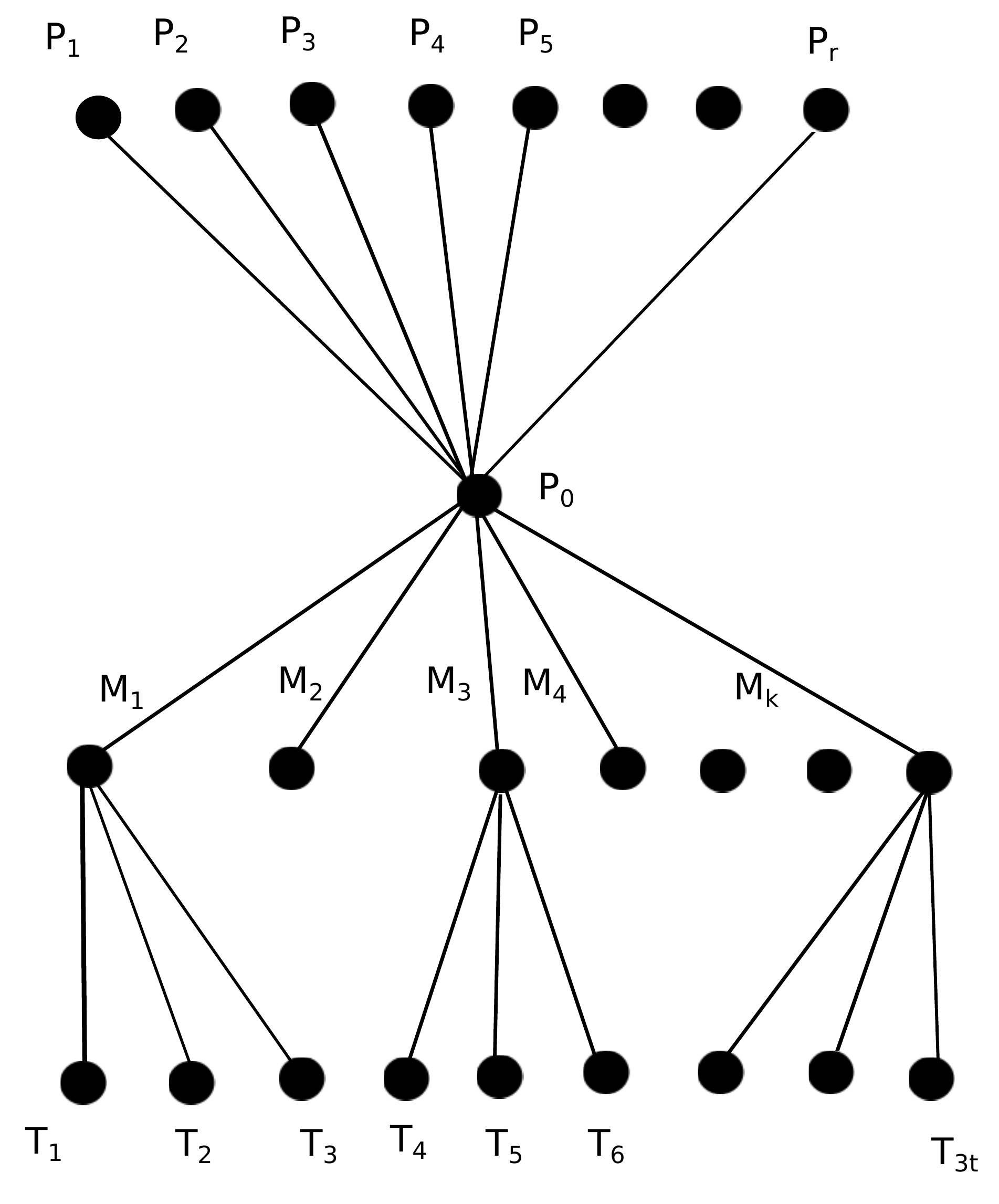}\label{fig:sp_gadget}}
  \caption{Illustration of the graphs used for the $NP$-completeness proof of the ECSTS problem (a) The graph obtained
  from the construction described in [\ref{cstr}] (b) The exact 3 cover solution}
\end{figure}

The \emph{Exact 3-Cover} problem is known to be $NP$-complete~[\cite{garey1979computers}].
In order to show the $NP$-completeness of problem [\ref{ecsts}], we start with the construction of a graph. 
The vertex set of this graph is the union of three disjoint sets of vertices. We show that given a solution 
for problem [\ref{ecsts}] on this graph, we can get a solution to the exact 3-cover problem. This would complete 
the reduction.

\begin{definition}[Construction for Reduction] \label{cstr}
Let $T=\{T_1,T_2,\ldots ,T_{3t} \}$ be a set of $3t$ elements and let $S=\{M_1,M_2,\ldots, M_k\}$ be a collection
of $3$-element subsets of $T$. We construct a graph $G=(V,E)$. The vertex set $V = P \cup S \cup T$ where $P$ is a 
new set $\{ P_0,P_1,P_2, \ldots, P_r \}$ where $r$ is a constant, defined later. The value of $r$ depends on certain specific 
spanning trees of the graph $G$ as we will soon see. The edge set of the graph $G$ consists of the following:
\begin{itemize}
\item Edges $\{(P_0,P_i): i=1,2,3,\ldots,r \}$
\item Edges $\{(P_0,M_i): M_i \in S \}$
\item Edges $\{(M_i,T_j): M_i \in S, T_j \in M_i \}$
\end{itemize}
The resulting graph is shown in figure [\ref{fig:gadget}]. We also consider a particular type of spanning tree for the graph $G$.
This spanning tree is a feasible solution for the problem [\ref{ecsts}]. One such tree is shown in the figure [\ref{fig:sp_gadget}].
We are now ready to define the two constants $r$ as defined above and $C$ as defined in equation [\ref{ecsts}]. Let us denote 
by $\sigma_{SP}^G(A,B)$ the sum of the shortest paths between the vertices in the set $A$ and $B$ where $A,B \subseteq V \in G$,
for any given graph $G$. Let us denote the spanning tree in [\ref{fig:sp_gadget}] by $G_s$.  Now, we define the constants as follows:

\begin{itemize}
\item $r=\sigma_{SP}^{G_s}(S,S) + \sigma_{SP}^{G_s}(S,T) + \sigma_{SP}^{G_s}(T,T)$
\item $C = \sigma_{SP}^{G_s}(P,P) + \sigma_{SP}^{G_s}(P,S) + \sigma_{SP}^{G_s}(P,T) + r$
\end{itemize}

Thus $r$ is the sum of the shortest path distances between the vertices of $S$, the vertices of $S$ and $T$, and finally
the vertices of $T$ and $T$, in the spanning tree $G_s$. On the other hand, $C$ is the sum of the shortest path distances 
between the vertices of all these sets, both for intra-set vertex pairs as well as inter-set vertex pairs, computed with
respect to the spanning tree $G_s$. This completes the construction. We note that the graph $G_s$ as defined above is an 
Exact $3$-cover for the set $T$.
\end{definition}

Now we are ready to prove results that establish the $NP$-completeness of the problem. Thus we state the following result:

\begin{theorem} 
The ECSTS problem is $NP$-complete 
\end{theorem}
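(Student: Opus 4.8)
The plan is to prove $NP$-completeness by first noting membership in $NP$ and then giving a polynomial-time reduction from the \emph{Exact 3-Cover} problem using the construction of Definition \ref{cstr}. Membership in $NP$ is immediate: given a candidate edge set $E_s$, one checks in polynomial time that $|E_s| \le |V|-1$, that $G_s=(V,E_s)$ is connected, and (by running all-pairs shortest paths) that $\sum_{u \ne v} d(u,v) \le C$. The substance of the argument is to show that the graph $G$ built from an instance of Exact 3-Cover admits a spanning tree satisfying $\mu_{G_s} \le c$ (equivalently the shortest-path-sum bound $C$) if and only if that instance has an exact cover.

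First I would establish a structural lemma pinning down the shape of any minimum shortest-path-sum spanning tree of $G$. The key device is the large pendant set $P_1,\dots,P_r$: each $P_i$ is a leaf whose only neighbor is $P_0$, so every spanning tree must contain all $r$ edges $(P_0,P_i)$, and the distance from any vertex $x$ to $P_0$ is incurred $r$ times through these leaves. Since $r = \sigma_{SP}^{G_s}(S,S)+\sigma_{SP}^{G_s}(S,T)+\sigma_{SP}^{G_s}(T,T)$ is chosen to dominate every achievable saving on the short-range terms, any spanning tree that places an $M_i$ at distance greater than $1$ from $P_0$, or a $T_j$ at distance greater than $2$, pays a penalty of order $r$ through the pendant vertices and is therefore strictly worse than the canonical tree. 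Hence every optimal spanning tree has the canonical form: all edges $(P_0,M_i)$ are present (so each $M_i$ is at distance $1$ from $P_0$), and each $T_j$ is a leaf attached to exactly one set $M_i$ with $T_j \in M_i$ (so $T_j$ is at distance $2$). A short edge count---$r+k+3t$ edges on $1+r+k+3t$ vertices---confirms this is exactly a spanning tree, leaving as the only freedom the choice of parent $M_i$ for each $T_j$.

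Next I would reduce the shortest-path-sum over canonical trees to a purely combinatorial quantity. In any canonical tree the $P$-$P$, $P$-$S$, $P$-$T$, $S$-$S$, and $S$-$T$ distance contributions are fixed constants (in particular each $T_j$ lies at distance $1$ from its unique parent and distance $3$ from every other $M_i$, so the $S$-$T$ total is independent of the assignment), and the only assignment-dependent term is the sum over pairs $T_j,T_{j'}$: such a pair is at distance $2$ if the two vertices share a parent $M_i$ and at distance $4$ otherwise. Writing $n_i$ for the number of $T_j$'s attached to $M_i$, the $T$-$T$ contribution equals $4\binom{3t}{2} - 2\sum_i \binom{n_i}{2}$, so minimizing the shortest-path-sum is equivalent to \emph{maximizing} $\sum_i \binom{n_i}{2}$ subject to $n_i \le |M_i| = 3$ and $\sum_i n_i = 3t$. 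Because $\binom{n_i}{2} = n_i(n_i-1)/2 \le n_i$ with equality precisely when $n_i = 3$, we get $\sum_i \binom{n_i}{2} \le 3t$, with equality exactly when $t$ sets receive three distinct elements each and the rest receive none, i.e.\ when the parent sets form $t$ disjoint $3$-element subsets covering all of $T$, which is precisely an exact $3$-cover. Calibrating $C$ as in Definition \ref{cstr} to equal the shortest-path-sum of the exact-cover tree $G_s$ then makes the ECSTS bound satisfiable if and only if the Exact 3-Cover instance is a yes-instance, completing the reduction.

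I expect the main obstacle to be the structural lemma: I must verify quantitatively that $r$ is large enough that \emph{no} alternative spanning-tree topology---for instance routing some $M_i$ through a $T_j$, or attaching a $T_j$ at distance $3$---can compensate by lowering the short-range terms, since the $T$-$T$ term varies by only $O(t)$ while a single displaced vertex increases the sum by $\Theta(r) = \Omega(t^2)$ through its distances to the $r$ pendant leaves. Making this penalty bookkeeping airtight, and confirming that the chosen value of $r$ strictly exceeds every achievable short-range saving, is the delicate part; the remaining combinatorial optimization is the clean convexity argument above.
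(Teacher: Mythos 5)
Your proposal is correct and follows essentially the same route as the paper: membership in $NP$, the pendant-vertex penalty forcing every feasible spanning tree to contain all edges $(P_0,P_i)$ and $(P_0,M_i)$ with each $T_j$ a leaf under exactly one containing set (the paper's Lemmas \ref{ecsts:l1}--\ref{ecsts:l3}), and then a counting argument on the $T$--$T$ distance sum. Your convexity formulation, maximizing $\sum_i \binom{n_i}{2} \leq 3t$ with equality forcing each $n_i \in \{0,3\}$, is just a reparametrization of the paper's identity $\sigma_{SP}^{G_f}(T,T) = \sigma_{SP}^{G_s}(T,T) + 6(t-n_3) - 2n_2$ indexed by parent-degree classes, so the two arguments coincide in substance.
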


As stated before, in order to prove this theorem, we show that the Exact $3$-Cover problem has a solution, only if $G$ as 
defined above, contains a subgraph, that is a feasible solution for the problem [\ref{ecsts}]. We do this through the 
following lemmas.

\begin{lemma} \label{ecsts:l1}
Any spanning tree of the graph $G$ must contain the edges $\{(P_0,P_i): i=1,2,3,\ldots,r \}$.
\end{lemma}

\begin{proof}
The proof follows from the construction of the graph $G$. If, for example, the spanning tree does not contain the 
edge $(P_0,P_i)$ for some $i$, then there is no way to reach the vertex $P_i$ and thus the resulting graph is
disconnected and hence not a spanning tree. This contradicts our assumption. 
\end{proof}

\begin{lemma} \label{ecsts:l2}
Any spanning tree of the graph $G$, that is a feasible solution of the problem [\ref{ecsts}] must contain the edges 
$\{(P_0,M_i): M_i \in S \}$.
\end{lemma}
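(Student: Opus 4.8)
The plan is to argue by contradiction, exploiting the fact that, by the definitions in Construction~\ref{cstr}, the threshold $C$ is exactly the total all-pairs shortest-path sum of the reference spanning tree $G_s$: summing the six set-pair terms $\sigma_{SP}^{G_s}(P,P),\sigma_{SP}^{G_s}(P,S),\sigma_{SP}^{G_s}(P,T),\sigma_{SP}^{G_s}(S,S),\sigma_{SP}^{G_s}(S,T),\sigma_{SP}^{G_s}(T,T)$ accounts for every vertex pair exactly once. Hence a spanning tree $\tilde{G}$ is feasible for Problem~\ref{ecsts} precisely when its total shortest-path sum is at most that of $G_s$. First I would suppose that some feasible spanning tree $\tilde{G}$ omits an edge $(P_0,M_i)$, and show this forces the total shortest-path sum of $\tilde{G}$ to strictly exceed that of $G_s$, a contradiction.

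The key structural observation is that, by Lemma~\ref{ecsts:l1}, every spanning tree contains all edges $(P_0,P_\ell)$, and since each $P_\ell$ (for $\ell\geq 1$) has degree one in $G$, it is a forced leaf hanging off $P_0$. Consequently $d_{\tilde{G}}(P_\ell,v)=1+d_{\tilde{G}}(P_0,v)$ for every $v\in S\cup T$. I would use this to isolate the tree-dependent part of the total sum: writing the total as $\sigma_{SP}^{\tilde{G}}(P,P)+\sigma_{SP}^{\tilde{G}}(P,S\cup T)+\sigma_{SP}^{\tilde{G}}(S\cup T,S\cup T)$, the leaf identity gives $\sigma_{SP}^{\tilde{G}}(P,S\cup T)=(r+1)\,\sigma_{SP}^{\tilde{G}}(P_0,S\cup T)+r|S\cup T|$, while $\sigma_{SP}^{\tilde{G}}(P,P)$ is pinned down by Lemma~\ref{ecsts:l1} and $r|S\cup T|$ is a constant. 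Thus, up to tree-independent constants, the total is governed by the two quantities $(r+1)\,\sigma_{SP}^{\tilde{G}}(P_0,S\cup T)$ and $\sigma_{SP}^{\tilde{G}}(S\cup T,S\cup T)$.

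Next I would bound each quantity. Because no edge joins $P_0$ to any $T_j$ directly, every $T_j$ lies at distance at least $2$ from $P_0$ and every $M_i$ at distance at least $1$; hence $\sigma_{SP}^{\tilde{G}}(P_0,S\cup T)\geq k+6t$, the value attained by $G_s$. Omitting $(P_0,M_i)$ forces $M_i$ to reach $P_0$ only through a path of the form $M_i\!-\!T_j\!-\!M_{i'}\!-\!P_0$, so $d_{\tilde{G}}(P_0,M_i)\geq 3$ and therefore $\sigma_{SP}^{\tilde{G}}(P_0,S\cup T)\geq k+6t+2$, an increase of at least $2$ over $G_s$; multiplied by $r+1$ this contributes a penalty of at least $2(r+1)$. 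The only term that can decrease relative to $G_s$ is $\sigma_{SP}^{\tilde{G}}(S\cup T,S\cup T)$, and since this sum is nonnegative while its value in $G_s$ is exactly $r$, it can drop by at most $r$. The net change in the total is therefore at least $2(r+1)-r=r+2>0$, the desired contradiction.

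The crux — and the reason Construction~\ref{cstr} sets $r=\sigma_{SP}^{G_s}(S,S)+\sigma_{SP}^{G_s}(S,T)+\sigma_{SP}^{G_s}(T,T)$ — is precisely this final balancing step: one must guarantee that the penalty for mis-attaching a single $M_i$, amplified by the $r+1$ padding vertices, strictly dominates \emph{any} possible savings among the internal $S\cup T$ distances. I expect this comparison (establishing that the compensating decrease is bounded by $r$ while the penalty grows like $2r$) to be the main obstacle; the underlying distance computations, by contrast, are routine once the forced-leaf structure from Lemma~\ref{ecsts:l1} is in hand.
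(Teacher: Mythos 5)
Your proposal is correct and takes essentially the same approach as the paper's proof: both argue by contradiction that omitting an edge $(P_0,M_i)$ forces the distance from $M_i$ to $P_0$ (and hence to every forced leaf $P_\ell$) up by at least $2$, giving a penalty of at least $2(r+1)$ on the $P$-side sums, which strictly dominates the slack $r$ that the definition of $C$ allows for the internal $S\cup T$ distances (whose non-negativity both arguments exploit). Your write-up simply makes explicit the leaf identity and the counting that the paper compresses into ``after some simple algebra, one can verify that the second line of the inequality holds.''
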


\begin{proof}
Let us suppose that there is a feasible solution $G_f$ such that it does not contain the edge $(P_0,M_i)$ for some $i$.
Now $\sigma_{SP}^{G_f} = \sigma_{SP}^{G_f}(P,P) + \sigma_{SP}^{G_f}(P,S) + \sigma_{SP}^{G_f}(P,T) + \sigma_{SP}^{G_f}(S,S) + \sigma_{SP}^{G_f}(S,T)
+ \sigma_{SP}^{G_f}(T,T)$. As each of these sums are non-negative, we have that 

\begin{align*}
\sigma_{SP}^{G_f} & > \sigma_{SP}^{G_f}(P,P) + \sigma_{SP}^{G_f}(P,S) + \sigma_{SP}^{G_f}(P,T) \\
      & > \sigma_{SP}^{G_s}(P,P) + \sigma_{SP}^{G_s}(P,S) + \sigma_{SP}^{G_s}(P,T) + 2 \cdot (r+1) \\
      & > C
\end{align*}
The second inequality follows from the fact that due to the absence of the edge $(P_0,M_i)$ for some $i$, the shortest 
paths between the vertices of the sets ($P$,$S$) and ($S$,$T$), changes. After some simple algebra, one can verify that
the second line of the inequality holds. Thus the subgraph $G_f$ cannot be a feasible solution for the problem [\ref{ecsts}] 
and hence the result follows. 
\end{proof}

\begin{lemma} \label{ecsts:l3}
Let $G_f$ be a feasible solution to the problem [\ref{ecsts}]. Then each vertex in $T$ is adjacent to exactly one vertex in
$S$.
\end{lemma}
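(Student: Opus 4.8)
The plan is to exploit two structural facts about the graph $G$ from the construction [\ref{cstr}]: the only edges of $G$ incident to a vertex $T_j\in T$ are those of the form $(M_i,T_j)$ with $T_j\in M_i$, and (by the preceding Lemma [\ref{ecsts:l2}]) every feasible solution $G_f$ already contains the full star $\{(P_0,M_i):M_i\in S\}$. Since $G_f$ is a feasible solution to the problem [\ref{ecsts}], it is a spanning tree, hence simultaneously connected and acyclic; these are the two properties I would play off against each other to pin down how many $S$-vertices each $T_j$ touches.

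First I would establish the lower bound, that each $T_j$ is adjacent to \emph{at least} one vertex of $S$. Fix any $T_j\in T$. In $G$ the vertex $T_j$ is adjacent only to vertices of $S$, so any spanning subtree containing $T_j$ must use at least one edge $(M_i,T_j)$ in order to connect $T_j$ to the rest of $G_f$. Otherwise $T_j$ would be isolated and $G_f$ disconnected, contradicting that it is a spanning tree.

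Next I would establish the upper bound, that $T_j$ is adjacent to \emph{at most} one vertex of $S$, by a cycle argument. Suppose, for contradiction, that $T_j$ were adjacent in $G_f$ to two distinct vertices $M_i,M_{i'}\in S$. By Lemma [\ref{ecsts:l2}] the edges $(P_0,M_i)$ and $(P_0,M_{i'})$ both lie in $G_f$, so the four edges $(P_0,M_i),\ (M_i,T_j),\ (T_j,M_{i'}),\ (M_{i'},P_0)$ form a cycle inside $G_f$. This contradicts the acyclicity of the spanning tree $G_f$. Combining the two bounds yields that each $T_j$ is adjacent to exactly one vertex of $S$.

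The argument is short, and the one place where care is needed is the dependence on Lemma [\ref{ecsts:l2}]: it is precisely the guaranteed presence of the entire $P_0$-to-$S$ star that converts ``two $S$-neighbors of $T_j$'' into a genuine cycle through $P_0$. Without that prior lemma the upper bound would not follow, so I expect confirming that the hypotheses of Lemma [\ref{ecsts:l2}] are in force (i.e.\ that $G_f$ is indeed a feasible ECSTS solution) to be the only real obstacle; everything else is immediate from the connectivity and acyclicity of a spanning tree together with the fact that $T$-vertices have neighbors only in $S$.
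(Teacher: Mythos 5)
Your proof is correct and follows essentially the same route as the paper: the key step in both is the contradiction obtained from the cycle $(T_j,M_i),(M_i,P_0),(P_0,M_k),(M_k,T_j)$, whose existence is guaranteed by Lemma [\ref{ecsts:l2}] and which violates the acyclicity of the spanning tree $G_f$. Your write-up is in fact slightly more complete, since you explicitly prove the ``at least one'' direction from connectivity, which the paper leaves implicit.
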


\begin{proof}
First note that the assumption that $G_f$ is a feasible solution to problem [\ref{ecsts}] implies that $G_f$ is a spanning tree
for the graph $G$. We prove the result by contradiction. So let there be a vertex $T_j$ such that it is adjacent to two vertices 
$M_i$ and $M_k$ $\in S$.

We also have that $G_f \subseteq G$ where $G$ is defined in the construction [\ref{cstr}] above. Now by the same construction
and the lemma [\ref{ecsts:l2}], there are edges $(P_0,M_i)$ and $(P_0,M_k)$ in the graph $G_f$. Now this creates a problem
as $(T_j,M_i),(M_i,P_0),(P_0,M_k),(M_k,T_j)$ is a cycle and hence $G_f$ cannot be a spanning tree and hence it cannot be 
feasible for the problem [\ref{ecsts}] as we have assumed and this completes the proof. 
\end{proof}

\begin{lemma} \label{ecsts:l4}
Graph $G_s$ used in the construction [\ref{cstr}] above and depicted in the figure [\ref{fig:sp_gadget}] is a feasible solution 
of the problem [\ref{ecsts}].
\end{lemma}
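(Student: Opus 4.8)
The plan is to verify directly the two defining requirements of the feasibility problem [\ref{ecsts}] for the tree $G_s$ of construction [\ref{cstr}], namely the bound $|E_s|\leq |V|-1$ on the number of edges and the bound $\sum_{(u,v)} d(u,v)\leq C$ on the total all-pairs shortest-path weight. The crucial observation is that the constant $C$ was \emph{defined} to be precisely the total all-pairs shortest-path sum of $G_s$; once this is made explicit, feasibility follows with equality rather than with a strict inequality, so the lemma is almost tautological by design.

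First I would confirm that $G_s$ is a spanning tree of $G$. By construction $G_s$ retains all $r$ spoke edges $\{(P_0,P_i)\}$, all $k$ edges $\{(P_0,M_i)\}$, and, since the selected subsets form an exact $3$-cover of $T$, exactly one edge $(M_i,T_j)$ for each of the $3t$ elements $T_j$. Hence $|E_s|=r+k+3t$, while $|V|=(r+1)+k+3t$, so $|E_s|=|V|-1$ and in particular $|E_s|\leq |V|-1$. Connectivity is immediate: every $P_i$ and every $M_i$ reaches $P_0$ directly, and every $T_j$ reaches $P_0$ through the unique cover set containing it; a connected graph on $|V|$ vertices with $|V|-1$ edges is a tree.

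Next I would establish the distance constraint. Partitioning the vertex set as $V=P\cup S\cup T$ and splitting the total shortest-path sum of $G_s$ according to which blocks the two endpoints lie in gives
\begin{align*}
\sum_{(u,v):u\neq v} d(u,v) &= \sigma_{SP}^{G_s}(P,P) + \sigma_{SP}^{G_s}(S,S) + \sigma_{SP}^{G_s}(T,T) \\
&\quad + \sigma_{SP}^{G_s}(P,S) + \sigma_{SP}^{G_s}(P,T) + \sigma_{SP}^{G_s}(S,T).
\end{align*}
Substituting the definitions $r=\sigma_{SP}^{G_s}(S,S)+\sigma_{SP}^{G_s}(S,T)+\sigma_{SP}^{G_s}(T,T)$ and $C=\sigma_{SP}^{G_s}(P,P)+\sigma_{SP}^{G_s}(P,S)+\sigma_{SP}^{G_s}(P,T)+r$ from construction [\ref{cstr}] shows that the right-hand side equals exactly $C$. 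Therefore $\sum_{(u,v)} d(u,v)=C\leq C$, so the distance constraint holds with equality.

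Since both requirements of problem [\ref{ecsts}] are met, $G_s$ is feasible. The only genuine content is the bookkeeping above: checking that the exact-cover structure yields a spanning tree and that the six block-sums exhaust all vertex pairs exactly once under a consistent (ordered or unordered) convention shared by the definition of $r$ and $C$. I do not expect any real obstacle here, precisely because $r$ and $C$ were engineered so that the all-pairs shortest-path sum of $G_s$ reproduces $C$ term by term; the substantive direction of the reduction is the converse, handled by lemmas [\ref{ecsts:l1}]--[\ref{ecsts:l3}].
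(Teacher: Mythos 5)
Your proof is correct, but it takes a genuinely different route from the paper. The paper disposes of this lemma in a single line, asserting that it ``follows from lemmas [\ref{ecsts:l1}], [\ref{ecsts:l2}] and [\ref{ecsts:l3}]''; those lemmas, however, only establish \emph{necessary} structural conditions on feasible spanning trees (they must contain the spoke edges, the $(P_0,M_i)$ edges, and attach each $T_j$ to exactly one $M_i$), so as written the paper's argument shows that $G_s$ has the shape that feasibility forces, rather than verifying that $G_s$ actually satisfies the two constraints. You instead check problem [\ref{ecsts}] directly: counting $|E_s| = r+k+3t = |V|-1$ together with connectivity shows $G_s$ is a spanning tree, and decomposing the all-pairs shortest-path sum into the six block sums $\sigma_{SP}^{G_s}(\cdot,\cdot)$ and substituting the definitions of $r$ and $C$ from construction [\ref{cstr}] shows the distance constraint holds with equality, since $C$ was engineered to be exactly the total for $G_s$. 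Your direct verification is more self-contained and in fact fills the logical gap left by the paper's terse citation; what the paper's three lemmas actually buy is the converse direction of the reduction (that feasibility forces an exact $3$-cover structure), which, as you correctly observe, is where the substantive content of the $NP$-completeness argument lies.
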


\begin{proof}
The proof follows from lemmas [\ref{ecsts:l1}], [\ref{ecsts:l2}] and [\ref{ecsts:l3}].
\end{proof}

Now we are ready to prove the main theorem, which for convenience we state here again.

\begin{theorem} 
The ECSTS problem, as defined below, is $NP$-complete.
\begin{equation}
|E_s| \leq |V|-1 \ and \ \mu_{G_s} \leq C 
\end{equation} 
\end{theorem}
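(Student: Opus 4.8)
\section{Proof proposal for the ECSTS theorem}

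The plan is to establish both membership in $NP$ and $NP$-hardness, the latter through the reduction from \emph{Exact 3-Cover} set up in Construction \ref{cstr}. Membership is immediate: given a candidate edge set $E_s$, one checks $|E_s|\le |V|-1$ in linear time and computes $\sum_{u\ne v} d(u,v)$ by an all-pairs shortest-paths routine in polynomial time, so a feasible solution is a polynomially verifiable certificate. The substance is to prove that the constructed instance of Problem \ref{ecsts} is feasible if and only if the underlying Exact 3-Cover instance from Definition \ref{ecp} admits a solution.

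For the forward direction I would take an exact cover $S'\subseteq S$ and build the spanning tree that uses every edge $(P_0,P_i)$ and $(P_0,M_i)$ together with the edges $(M_i,T_j)$ for $M_i\in S'$ and $T_j\in M_i$. Because $S'$ is a disjoint cover of $T$, this tree has the same distance structure, and hence the same shortest-path sum, as the tree $G_s$ of Figure \ref{fig:sp_gadget}, which by Lemma \ref{ecsts:l4} is feasible for Problem \ref{ecsts}; thus $\sigma_{SP}\le C$ and we obtain a feasible ECSTS solution.

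The reverse direction carries the real weight. Starting from any feasible $G_f$, Lemmas \ref{ecsts:l1} and \ref{ecsts:l2} force all $r$ edges $(P_0,P_i)$ and all $k$ edges $(P_0,M_i)$ into $G_f$; since $|V|-1=r+k+3t$ and Lemma \ref{ecsts:l3} makes each of the $3t$ vertices of $T$ adjacent to exactly one vertex of $S$, the remaining budget of $3t$ edges is consumed precisely by these $S$--$T$ connections, so $G_f$ is a spanning tree of the expected shape. I would then decompose the objective as in the proof of Lemma \ref{ecsts:l2},
\[
\sigma_{SP}^{G_f}=\sigma_{SP}^{G_f}(P,P)+\sigma_{SP}^{G_f}(P,S)+\sigma_{SP}^{G_f}(P,T)+\sigma_{SP}^{G_f}(S,S)+\sigma_{SP}^{G_f}(S,T)+\sigma_{SP}^{G_f}(T,T),
\]
and argue that every term except the last is invariant over all such trees: each $M_i$ sits at distance $1$ from $P_0$ and each $T_j$ at distance $2$, so the $P$--$P$, $P$--$S$, $P$--$T$ and $S$--$S$ contributions are structurally fixed, and any single $T_j$ contributes $1+3(k-1)$ to $\sigma_{SP}(S,T)$ regardless of which $M_i$ it hangs from. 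Consequently the feasibility constraint $\sigma_{SP}^{G_f}\le C$ collapses to the single inequality $\sigma_{SP}^{G_f}(T,T)\le \sigma_{SP}^{G_s}(T,T)$.

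The final step is an extremal argument. Writing $c_i$ for the number of $T$-vertices whose unique $S$-neighbor is $M_i$, two vertices $T_j,T_{j'}$ lie at distance $2$ when they share the same $M_i$ and at distance $4$ otherwise, so minimizing $\sigma_{SP}(T,T)$ amounts to maximizing the number of same-parent pairs $\sum_i \binom{c_i}{2}$ subject to $\sum_i c_i=3t$ and $c_i\le 3$, the latter because each $M_i$ is a $3$-element set. By convexity of $\binom{c}{2}$ this maximum, equal to $3t$, is attained only when exactly $t$ of the $c_i$ equal $3$ and the rest are $0$; that is, precisely when the active $M_i$ form $t$ disjoint triples covering $T$, an exact $3$-cover. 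Hence $\sigma_{SP}^{G_f}(T,T)\le\sigma_{SP}^{G_s}(T,T)$ holds if and only if $G_f$ encodes an exact cover, which closes the reduction. The step I expect to be the main obstacle is the invariance claim for the non-$T$--$T$ terms, since it must be tight enough to reduce feasibility to the lone $\sigma_{SP}(T,T)$ inequality; once that is secured, the convexity argument yields the equivalence cleanly.
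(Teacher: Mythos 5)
Your proposal is correct and takes essentially the same route as the paper: the same reduction from Exact 3-Cover via Construction~\ref{cstr}, the same structural lemmas~\ref{ecsts:l1}--\ref{ecsts:l3} forcing the shape of any feasible spanning tree, and the same collapse of feasibility to the single inequality on the $T$--$T$ shortest-path sum, where your convexity argument over the parent counts $c_i$ is algebraically equivalent to the paper's computation with the counts $n_i$ of $S$-vertices adjacent to exactly $i$ vertices of $T$ (the paper's slack $6(t-n_3)-2n_2$ equals $2n_1+2n_2\geq 0$). Your explicit verification that the non-$T$--$T$ terms are invariant across all such trees is a step the paper leaves implicit, but the substance of the two arguments is identical.
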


\begin{proof}
Let $G_f$ be any spanning tree for the graph $G$. As we have seen from the lemmas [\ref{ecsts:l1}], [\ref{ecsts:l2}] and [\ref{ecsts:l3}],
the spanning tree $G_f$ has a specific structure. Let us denote by $n_i$ the number of vertices in $S$ that are adjacent to exactly
$i$ vertices in $T$, $i = 0,1,2,3$.

\begin{align*}
\sigma_{SP}^{G_f}(T,T) & = 4(\frac{3t(3t - 1)}{2}) - 2 \cdot [(T_i,T_j): T_i \neq T_j, (M_k,T_i), (M_k,T_j) \in E_{G_f}, G \in S] \\
      & = (18 \cdot t^2 - 6 \cdot t) - (2 \cdot n_2 + 6 \cdot n_3) \\
      & = (18 \cdot t^2 - 12 \cdot t) + 6 \cdot t - 6 \cdot n_3 - 2 \cdot n_2 \\
      & = \sigma_{SP}^{G_s}(T,T) + 6 \cdot (t - n_3) - 2 \cdot n_2
\end{align*}
We note that $[ \cdot ]$ denotes the number of elements operator. From the above derivation we note that 
$\sigma_{SP}^{G_f}(T,T) = \sigma_{SP}^{G_s}(T,T)$ if and only if $n_3 = t$ and $n_i = 0, i=1,2$ and
hence $n_0 = s-t$. Now the condition $\sigma_{SP}^{G_f}(T,T) = \sigma_{SP}^{G_s}(T,T)$, by definition of $C$ is equivalent to the
fact that $\sigma_{SP}^{G_s} \leq C$ and hence feasibility of the [\ref{ecsts}] problem and the condition $n_3 = t$ and $n_i = 0, i=1,2$ 
and $n_0 = s-t$ is equivalent to the existence of an Exact $3$-cover. This completes the reduction. 
\end{proof}


\section{Lemmas On Greedy Algorithms} \label{AppB}
This appendix states and proves some of the properties of the solutions returned by the greedy algorithms. 


We also recall that the algorithm [\ref{algo:greedyspanner}] is the standard spanner algorithm and looks at the all
pairs shortest paths and tries to preserve them to a constant multiple. We also recall the following lemma, which
was proved in the paper:

\begin{lemma}
Algorithm GreedySpanner always gives a feasible solution to the MECS problem.
\end{lemma}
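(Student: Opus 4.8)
The plan is to show that the output $G'$ is a genuine $t$-spanner of $G$ on every vertex pair, and then to average this pointwise distance guarantee into the single inequality $\mu_{G'}\le t\,\mu_G$ that defines MECS feasibility. This is the same route taken for Lemma \ref{l1}, whose structure I would reuse.

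First I would prove an edge-level stretch bound: for every edge $e=(u,v)\in E$ the final graph satisfies $d_{G'}(u,v)\le t\cdot W(e)$. If $e$ is kept by Algorithm \ref{algo:greedyspanner}, this is immediate, since $d_{G'}(u,v)\le W(e)\le t\cdot W(e)$ using $t\ge 1$. If $e$ is discarded, then at the moment it was examined the current partial graph already contained a $u$--$v$ path of length at most $t\cdot W(e)$, which is precisely the rejection test of the algorithm. The supporting observation here is monotonicity: edges are only ever \emph{added} to $G'$, so the shortest $u$--$v$ path length in $G'$ can only decrease as the loop proceeds; hence the bound that held at examination time still holds in the final $G'$.

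Next I would lift this to an arbitrary pair $a,b\in V$. Take a shortest $a$--$b$ path in $G$ through edges $e_1,\dots,e_m$, so that $d_G(a,b)=\sum_{i=1}^m W(e_i)$. I replace each $e_i=(u_i,v_i)$ by a $u_i$--$v_i$ path in $G'$: if $e_i\in G'$ I keep it, of length $W(e_i)\le t\cdot W(e_i)$, and if $e_i\notin G'$ I substitute the $G'$-path of length at most $t\cdot W(e_i)$ furnished by the previous step. Concatenating these substitutes gives a walk in $G'$ from $a$ to $b$ of total length at most $t\sum_i W(e_i)=t\cdot d_G(a,b)$, whence $d_{G'}(a,b)\le t\cdot d_G(a,b)$. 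In particular every pair has a finite $G'$-distance, so $G'$ is connected and $\mu_{G'}$ is finite.

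Finally I would sum the pointwise inequality $d_{G'}(a,b)\le t\cdot d_G(a,b)$ over all ordered pairs $a\neq b$ and divide by $n(n-1)$, yielding $\mu_{G'}\le t\,\mu_G$, which is exactly the MECS constraint, so $G'$ is feasible. The only delicate point is the monotonicity step in the edge-level bound, where one must argue that the acceptance-time distance guarantee persists to the final graph; I expect that to be the part worth stating carefully, everything else reducing to a direct summation.
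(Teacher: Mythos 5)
Your proof is correct and takes essentially the same route as the paper's: for each pair $a,b$, replace every discarded edge on a $G$-shortest $a$--$b$ path by the $G'$-path of length at most $t\cdot W(e)$ guaranteed by the rejection test, concatenate to get $d_{G'}(a,b)\le t\cdot d_G(a,b)$, and sum over all pairs to obtain $\mu_{G'}\le t\,\mu_G$. Your explicit monotonicity step (the rejection-time bound persists because edges are only ever added to $G'$) is a worthwhile addition that the paper's proof leaves implicit.
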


The algorithms [\ref{algo:greedymecs}] and [\ref{algo:greedymecsv2}] are designed specifically for the MECS problem
and instead of considering the all pairs shortest paths individually, they consider the APL for making their greedy
choice.



In what follows we state as prove four lemmas that establish the claims made in Theorem 3 in the main paper. We state each claim
as a lemma in order that its easier to understand and the proof is not too large.

\begin{lemma} 
Algorithm [\ref{algo:greedymecs}]  always returns a feasible solution to the MECS problem.
\end{lemma}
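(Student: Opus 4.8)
The plan is to prove feasibility by establishing a single loop invariant: at the start and end of every iteration of the \textbf{for} loop in Algorithm [\ref{algo:greedymecs}], the current working graph $G$ satisfies $\mu_G \le t\mu$, where $\mu$ is the average distance of the \emph{original} input graph, computed once at the top of the procedure and held fixed thereafter. Since feasibility for the MECS problem is precisely the condition $\mu_{G_s} \le t\mu_G = t\mu$, the invariant evaluated at termination immediately yields the claim. The argument does not depend on the edge ordering, so the non-increasing sort affects only sparsity, not correctness.

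First I would check the base case. Before any edge is examined, the working graph is the full input graph, whose average distance equals $\mu$ by definition. Because $t > 1$, we have $\mu \le t\mu$, so the invariant holds before the loop begins.

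Next comes the inductive step. Suppose the invariant holds at the start of the iteration that processes edge $e$, so the current graph $G$ satisfies $\mu_G \le t\mu$. The algorithm forms $G_{Temp} = G - \{e\}$ and computes $\mu_{Temp}$. If $\mu_{Temp} > t\mu$, the edge is retained and the working graph is unchanged, so the invariant persists. If $\mu_{Temp} \le t\mu$, the edge is deleted and the working graph becomes $G_{Temp}$, which by the case hypothesis satisfies the bound. In either branch the invariant is preserved, so by induction on the number of processed edges it still holds when the loop exits, giving $\mu_{G'} \le t\mu$ for the returned graph $G'$.

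Finally I would address the one subtlety, namely connectivity, which the MECS problem requires implicitly and which is the only mild obstacle. A disconnected graph has $\mu = \infty$ by definition, so whenever the removal of $e$ would disconnect the current graph we have $\mu_{Temp} = \infty > t\mu$, which triggers the retention branch and keeps $e$. Hence the invariant $\mu_{G'} \le t\mu < \infty$ simultaneously certifies the distance bound and guarantees that the output is connected, matching the implementation note stated just after the algorithm. The distance inequality itself is forced directly by the acceptance test of the loop, so the proof reduces to making the connectivity observation explicit and verifying the base and inductive cases above.
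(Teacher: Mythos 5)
Your proof is correct and takes essentially the same approach as the paper: both rest on the single observation that the algorithm's acceptance test prevents any edge removal from ever violating the bound $\mu_{G'} \le t\cdot\mu_G$ — the paper phrases this as a proof by contradiction, while you phrase it as a loop invariant with an explicit induction. Your added remark that a disconnecting removal yields $\mu_{Temp} = \infty$ and is therefore automatically rejected matches the paper's own implementation note, so there are no gaps.
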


\begin{proof}
Let us suppose the contrary. Suppose that the algorithm [\ref{algo:greedymecs}] returns a graph $G'$ that is not a
feasible solution for the MECS problem. Then $\mu_{G'} > stretch \cdot \mu_G$. This means that at some point in the
execution of the algorithm [\ref{algo:greedymecs}] an edge was removed from the input graph $G$ that resulted in the average of the
resulting graph to violate the MECS constraint. But this is not possible because the algorithm checks the condition to
make sure that this never happens. Thus we have a contradiction. Hence the result follows. 
\end{proof}

\begin{lemma} 
Algorithm [\ref{algo:greedymecsv2}]  always gives a feasible solution to the MECS problem.
\end{lemma}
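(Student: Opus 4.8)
The plan is to show that the graph $G'$ returned by Algorithm~[\ref{algo:greedymecsv2}] always satisfies the average-distance bound $\mu_{G'} \le t\mu$, and then to observe that this single inequality already delivers feasibility. The idea I would build the whole argument around is the asymmetry in the algorithm's test: an edge $e$ is appended to $G'$ \emph{only} when the current average distance still violates the constraint, i.e.\ when $\mu_{G'} > t\mu$. Hence $G'$ is enlarged only while it is infeasible, and the instant the constraint is met the algorithm stops modifying $G'$. This observation is what prevents the algorithm from ever terminating in an infeasible state.

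First I would record the trivial base fact that the full input graph is itself feasible: since $t > 1$ we have $\mu_G \le t\mu_G = t\mu$, so $G$ meets the bound. I would then dispatch the loop by a clean dichotomy. If at the start of some iteration the test $\mu_{G'} > t\mu$ \emph{fails}, then no edge is added; $G'$ is left unchanged, so the test fails for every remaining edge as well, $G'$ is never touched again, and the returned graph obeys $\mu_{G'} \le t\mu$. Otherwise the test succeeds at the start of \emph{every} iteration, so every edge of $E$ is appended and the returned graph is exactly $G' = G$, which is feasible by the base fact. In both branches $\mu_{G'} \le t\mu$. Finally, because feasibility for the MECS problem also demands connectedness and any disconnected graph has $\mu = \infty$, the finiteness $\mu_{G'} \le t\mu < \infty$ forces $G'$ to be connected; thus both requirements follow at once.

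This proof carries no serious technical content—there is no average-distance expression to manipulate, in contrast with the preceding structural lemmas. The only point needing care is the direction of the algorithm's conditional, and the corresponding worry that $G'$ might be \enquote{frozen} prematurely in an infeasible configuration. The resolution is exactly the dichotomy above: growth halts only after the constraint is satisfied, and if it is never satisfied internally the algorithm defaults to returning all of $G$. Making this case split airtight, and explicitly tying the connectivity conclusion to the finiteness of $\mu_{G'}$, is where I would focus the write-up.
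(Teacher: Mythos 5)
Your proof is correct and takes essentially the same route as the paper's: the algorithm enlarges $G'$ only while the constraint $\mu_{G'} \le t\mu$ is violated, and since $G$ itself satisfies the constraint (as $\mu_G \le t\mu_G$ for $t > 1$), the returned graph must be feasible. Your explicit dichotomy (frozen-once-feasible versus all-edges-added) and the connectivity-from-finiteness remark simply spell out rigorously what the paper's two-sentence proof asserts.
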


\begin{proof}
The proof follows due to the fact that the for loop over the edges will continue to add edges to the graph
$G'$ as long as the constraint in the MECS problem is not satisfied. We also note that there is at least one feasible
solution, the graph $G$ itself and hence when the algorithm terminates it will return a feasible solution for
MECS. 
\end{proof}

Next we consider two lemmas that describe the structure of the solution returned by the algorithms [\ref{algo:greedymecs}]
and [\ref{algo:greedymecs}]. More precisely, we claim that the MECS solutions returned by these algorithms will always
contain the MST of the underlying graph as a subgraph.

\begin{lemma}
Consider the graph $G'$ returned by the algorithm [\ref{algo:greedymecs}] . For any input graph $G$, the graph $G'$
contains the MST of $G$. 
\end{lemma}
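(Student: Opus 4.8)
The plan is to show that the algorithm never deletes an edge of the minimum spanning tree, so that $G'$ retains all MST edges and hence contains the MST. I would work with the \emph{cut property} of minimum spanning trees (see, e.g., \cite{cormen2009introduction}): assuming distinct edge weights (ties can be broken consistently with the sort order used by the algorithm, which makes the MST $T$ unique), an edge $e=(u,v)$ belongs to $T$ if and only if $e$ is the unique minimum-weight edge crossing the cut $(A,B)$ obtained by deleting $e$ from $T$. The key structural facts I would combine with this are that algorithm [\ref{algo:greedymecs}] processes edges in \emph{non-increasing} order of weight, and that deleting an edge never shortens any shortest path, so the average distance $\mu$ is monotone non-decreasing under edge deletion.

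First I would argue by contradiction: suppose some MST edge is deleted, and let $e=(u,v)\in T$ be the \emph{first} such edge in the processing order. Because $e$ is the first MST edge removed, at the moment $e$ is processed the current graph $H$ still contains all of $T$. Let $(A,B)$ be the fundamental cut induced by $e$. By the cut property $e$ is the unique lightest edge of $G$ crossing $(A,B)$, so every other crossing edge is strictly heavier than $e$; moreover, since $e$ is a tree edge it is the only edge of $T$ crossing $(A,B)$, so any such heavier crossing edge lies outside $T$. For the algorithm to delete $e$, the graph $H-e$ must stay connected, which forces at least one heavier crossing edge to be present in $H$, and it requires $\mu(H-e)\le t\mu_G$.

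The heart of the argument is a domination step, which I would state as a separate lemma: among the crossing edges of $(A,B)$ present in $H$, there exists one, say $g\neq e$, whose removal keeps $H$ connected and satisfies $\mu(H-g)\le \mu(H-e)$. Granting this, I chain the inequalities. Since $g$ is heavier than $e$, it was processed \emph{earlier}, when the current graph $H_g$ was a supergraph of $H$ (the current graph only loses edges over time), and $g\in H\subseteq H_g$. Monotonicity applied to $H-g\subseteq H_g-g$ gives $\mu(H_g-g)\le \mu(H-g)\le \mu(H-e)\le t\mu_G$. Thus, at $g$'s turn, deleting $g$ kept the average distance within the bound, so the algorithm \emph{would have deleted} $g$, contradicting $g\in H$. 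Hence no MST edge is ever deleted, and $T\subseteq G'$.

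The hard part will be the domination lemma, namely that deleting the \emph{unique cheapest} edge across a cut raises the average distance by at least as much as deleting some heavier edge across that same cut. The intuition is clear: the lightest crossing carries the most ``traffic'' among $A$--$B$ shortest paths, so forcing those paths onto strictly heavier detours can only increase the total distance at least as much as removing a heavier, less-used crossing. The subtlety I would have to handle carefully is that shortest-path routing depends on vertex positions, not merely on edge weights, so the statement fails for an arbitrary heavier crossing (for instance, one whose removal isolates a low-degree vertex and sends $\mu$ to infinity); I would therefore prove only the \emph{existence} of a suitable non-disconnecting $g$ with $\mu(H-g)\le\mu(H-e)$, by a rerouting/exchange argument on the set of crossing edges present in $H$, rather than a blanket ``for all $g$'' comparison.
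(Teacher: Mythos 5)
Your overall architecture is fine as far as it goes: the monotonicity chain ($H-g\subseteq H_g-g$ implies $\mu(H_g-g)\le\mu(H-g)$, hence $g$ would have been deleted at its earlier turn) is valid, and you correctly isolated the entire difficulty in the ``domination lemma.'' But that lemma is not merely unproven in your proposal --- it is false, so the strategy cannot be completed. Concretely: take the $4$-cycle $a_1b_1b_2a_2$ with $w(a_1,b_1)=1$, $w(b_1,b_2)=10.1$, $w(b_2,a_2)=2$, $w(a_2,a_1)=10$, and attach $m=20$ pendant vertices to $a_2$ and another $m$ to $b_2$ by edges of tiny weight $\epsilon$. The unique MST is $G$ minus its heaviest cycle edge $(b_1,b_2)$; in particular $e=(a_1,b_1)$ is an MST edge whose fundamental cut ($\{b_1\}$ versus the rest) is crossed only by $e$ and $g=(b_1,b_2)$. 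Measuring deletion cost as the increase in the total distance sum (letting $\epsilon\to 0$): deleting $e$ costs $21.1+2(m+1)=63.1$ (only the single pair $\{a_1,b_1\}$ pays the long detour; the other affected pairs pay about $1$ each), deleting $g$ costs $3.8(m+1)=79.8$ (all $m+1$ vertices clustered at $b_2$ pay $2.9$ against $b_1$, plus more), deleting $(a_2,a_1)$ costs $4.2(m+1)=88.2$, and deleting $(b_2,a_2)$ costs about $19.1(m+1)^2$. So $\mu(H-e)<\mu(H-g)$: the \emph{lightest} crossing edge is the \emph{cheapest} one to delete, and no dominating $g$ exists --- weight and shortest-path usage are independent, exactly the subtlety you flagged. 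Worse, this breaks the statement itself: choose $t$ so that the allowed increase $(t-1)\sum_{u\neq v}d(u,v)$ equals $70$. Algorithm [\ref{algo:greedymecs}] then keeps $(b_1,b_2)$ ($79.8>70$), keeps $(a_2,a_1)$, keeps $(b_2,a_2)$, and deletes the MST edge $e$ ($63.1\le 70$); the pendant edges are bridges, so the output is the spanning tree $G-e$, which does not contain the MST. The lemma thus fails for weighted graphs (for unweighted graphs it is trivial, since every spanning tree is an MST and the algorithm never disconnects the graph).

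For comparison, the paper's own proof founders on the very same point, just less visibly. It argues that a deleted edge $e$ lies on a cycle all of whose other edges are lighter, justifying this with ``for otherwise, one of these edges would have been removed earlier and so removal of $e$ would have left the graph disconnected.'' That is a non sequitur: heavier edges are \emph{processed} earlier but need not be \emph{removed} --- their deletion may be blocked by the APL budget (or by disconnection) --- so the surviving replacement path can contain edges heavier than $e$, and then the cycle property gives nothing. The example above exploits precisely this: the heavy crossing edge $(b_1,b_2)$ survives because its deletion is too expensive, after which deleting the light MST edge becomes affordable. So your proposal and the paper's proof share the same fatal gap; your version has the virtue of making the gap explicit as a clean, falsifiable lemma, which is what makes the counterexample easy to find.
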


\begin{proof}
Let us suppose the contrary, that is let us suppose that the graph $G'$ returned by the algorithm does not contain the MST.
Now the algorithm [\ref{algo:greedymecs}] works by removing edges, starting with the heaviest edge first. Thus if the resulting
solution does not contain the MST, then it must be the case that at some iteration of the algorithm one of the MST edges are
removed. Let us consider the instant at which an MST edge $e=(u,v)$ is removed by algorithm [\ref{algo:greedymecs}]. The removal 
of this edge does not disconnect that graph and hence there is a path $p(u,v)$ between the vertices $u$ and $v$ of the graph.
Thus the edge $e$ that was removed is part of a cycle. Moreover, all the edges on the path $p(u,v)$ have weights that are less than 
the edge $e$. For otherwise, one of these edges would have been removed earlier and so removal of $e$ would have left the graph
disconnected.

Now let us consider the working of Kruskal's algorithm for computing the MST. It would start with the forest of all the vertices
of the graph and start adding edges in increasing order of their weights in the process merging two connected components. 
This in turn means that Kruskal's algorithm would create the path $p(u,v)$ between the vertices $u$ and $v$ before considering the
edge $e=(u,v)$ for addition. At this point, it would not add $e$ to the graph because $p(u,v) + e$ would form a cycle. Thus the 
MST resulting from a run of Kruskal's algorithm on $G$, would not contain the edge $e$.

This argument holds for each of the edges removed from the graph $G$ by algorithm [\ref{algo:greedymecs}]. Thus the solution returned 
by the algorithm, namely $G'$ will contain the MST of $G$ as a subgraph. This completes the proof. 
\end{proof}

\begin{lemma}
Consider the graph $G'$ returned by the algorithm [\ref{algo:greedymecsv2}] . For any input graph $G$, the graph $G'$
contains the MST of $G$. 
\end{lemma}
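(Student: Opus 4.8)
The plan is to argue by contradiction, exactly paralleling the proof given above for the removal algorithm [\ref{algo:greedymecs}] but reading the edge-ordering argument in the opposite direction. First I would fix the minimum spanning tree $T$ to be the one produced by Kruskal's algorithm run on the \emph{same} non-decreasing edge ordering that algorithm [\ref{algo:greedymecsv2}] uses, so that ties among equal-weight edges are broken consistently between the two procedures and $T$ is unambiguous. Suppose, for contradiction, that the graph $G'$ returned by algorithm [\ref{algo:greedymecsv2}] fails to contain $T$; then there is an edge $e=(u,v)\in T$ with $e\notin G'$.

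The next step is to pin down the state of $G'$ at the iteration in which $e$ was examined. Since $e$ was not added, the guard of the algorithm failed at that step, i.e. $\mu_{G'}\le t\mu$. Because any disconnected graph has infinite average distance, a finite value of $\mu_{G'}$ forces $G'$ to be connected at that moment. Hence, when $e$ is considered, $G'$ already contains a $u$--$v$ path $p$. Every edge of $p$ was inserted at a strictly earlier iteration, so each such edge precedes $e$ in the sorted order and therefore has weight at most $W(e)$.

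I would then transfer this to Kruskal's execution on the same ordering. Processing the edges of $p$ in Kruskal's algorithm, each is examined before $e$, and Kruskal's union--find components only merge as the run proceeds; so by the time $e$ is examined the two endpoints of every edge of $p$ lie in a common component, and chaining these merges along $p$ places $u$ and $v$ in the same component before $e$ is reached. Consequently Kruskal rejects $e$ as a cycle-closing edge, so $e\notin T$, contradicting $e\in T$. This contradiction shows that every edge of $T$ belongs to $G'$, which is the claim.

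The step I expect to be the main obstacle is making the comparison with Kruskal watertight in the presence of equal edge weights: I must ensure that ``the MST'' is well defined and that the tie-breaking rule used to sort the edges in algorithm [\ref{algo:greedymecsv2}] is exactly the one Kruskal uses, so that an edge added to $G'$ at an earlier iteration is guaranteed to be examined earlier by Kruskal as well. With distinct weights this is automatic and the cycle property of minimum spanning trees gives the conclusion immediately, since $p$ together with $e$ forms a cycle in which $e$ is the unique heaviest edge and hence lies in no MST; the only care needed is to fix a consistent ordering so that the same argument survives ties.
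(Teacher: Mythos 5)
Your proof is correct, but it runs in a genuinely different way than the paper's. The paper proves the containment by a lock-step induction over the sorted edge list: it tracks Kruskal's forest $H_i$ and the greedy's graph $G'_i$ simultaneously and maintains the invariant that the two have the same number of connected components, with each component of $H_i$ contained in a component of $G'_i$; the engine of that induction is the implication \emph{Kruskal adds $e$ $\Rightarrow$ the greedy adds $e$} (if $u,v$ lie in different components of $G'_i$, the average distance is infinite, so the guard fires). You instead prove the dual implication \emph{the greedy rejects $e$ $\Rightarrow$ Kruskal rejects $e$}, and you do it by a local contradiction at the single iteration where a putative MST edge $e$ is skipped: skipping means $\mu_{G'}\le t\mu$ is finite, hence $G'$ is connected at that moment (here you implicitly use the standing assumption that the input $G$ is connected, so $t\mu<\infty$), hence there is a $u$--$v$ path of earlier, lighter edges; Kruskal's monotone-connectivity invariant (endpoints of any examined edge stay in a common component) then forces Kruskal to reject $e$. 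What your route buys is brevity and the avoidance of the global component-correspondence invariant; moreover, as you note, under distinct weights it collapses to the classical cycle property and shows $e$ lies in \emph{no} MST, not merely outside Kruskal's tree. What the paper's route buys is a stronger structural statement obtained along the way (at every step Kruskal's components refine the greedy's components), which makes the final containment immediate. Both arguments share the same caveat about ties: "the MST" must mean the tree Kruskal produces on the same sorted order the greedy uses; you handle this explicitly, while the paper leaves it implicit.
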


\begin{proof}
In order to prove this we note that the execution of the algorithm [\ref{algo:greedymecsv2}]  is very similar to the
Kruskal's Algorithm. Both of them start with a forest and then go on adding edges at each step, which results in
different components being merged together. We recall that in case of Kruskal's algorithm, an edge is added if and only
if it does not create a cycle. Let us denote by $G'_i$ $i=1,\ldots,Size(E)$ the collection of components generated by
algorithm~[\ref{algo:greedymecsv2}]. Similarly let us denote by $H_i$ $i=1,\ldots,Size(E)$ the collection of components generated
by Kruskal's Algorithm for the MST. We now prove that for each $i$ the number of connected components in $H_i$ is same
as the number of connected components in $G'_i$ and moreover each component of $H_i$ is the subset of a corresponding
component of $G'_i$. Once we have proved this our result is established.

We prove the result by induction on the number of edges. The base case is easy, $H_1$ contains a forest with each vertex
in its own connected component. The same also holds for $G'_1$. Then the base case is established. Let us assume that
the hypothesis is true for some $i$. Consider that the edge $e=(u,v)$ is considered for the $(i+1)^{th}$ step. There can
be two situations: $u$ and $v$ belong to the same component in $H_i$ or they are in different components of $H_i$.

\begin{description} 

\item[Same Component] 

If $u$ and $v$ are in the same component of $H_i$, then the edge $e$ will form a cycle. Hence Kruskal's algorithm will 
not add the edge and hence $H_{i+1}$ will not contain the edge $e$. Now in $G'_i$ the edge $e$ is in the same component 
by the inductive hypothesis. This component contains the component $H_i$. Now two things can happen, either the edge $e$ 
is not added to $G'_{i+1}$ in which case nothing changes and $H_{i+1}$ is still a subset of $G'_{i+1}$. Or else $e$ is added 
to $G'_{i+1}$ and then also the containment relationship does not change. Hence the result follows by induction.

\item[Different Components]

If $u$ and $v$ are in different components then the edge $e$ does not form a cycle. Thus
Kruskal's algorithm will add the edge and merge the two components of $H_i$ to get a new component of $H_{i+1}$. Now as
$u$ and $v$ belong to different components of $H_i$, by the induction hypothesis they also belong to different components
of $G'_i$. Thus there is no path between the two vertices $u$ and $v$ and hence the average distance will satisfy the
condition $\mu_{G'} > stretch * \mu$ in algorithm [\ref{algo:greedymecsv2}]. Thus the edge $e$ will be added and the 
corresponding components merged to form a single component in $G'_{i+1}$. Thus again we have that each component 
of $H_{i+1}$ is contained in a corresponding component of $G'_{i+1}$ and hence the result follows by induction. 

\end{description}
Now as $H_{Size(E)}=MST$ and $G'_{Size(E)}=G'$ our result follows. 
\end{proof}

Apart from being an interesting observation, with a nice and concise proof, the above result also helps us to change the
algorithm [\ref{algo:greedymecsv2}]. As we know that the solution returned by the algorithm will always contain the MST, 
instead of starting the algorithm from a forest, we can start the algorithm from the MST. Thus we compute the MST of the
input graph and then we start adding edges that are not included in the MST, in increasing order of weights. We continue
this process until the APL of the resulting graph satisfies the constraint.

\paragraph{Optimization of Algorithm [\ref{algo:greedymecsv2}]} 
We can do a simple optimization to the algorithm [\ref{algo:greedymecsv2}] in order to get rid of some redundancy in its 
operation. In order to do that we note that this algorithm considers the edges of the input graph $G$ in the order of edge
weights, in a non-decreasing order. Thus it may be the case that when an edge $e=(u,v)$ is being considered for addition to 
the graph $G$'by the algorithm, a path $P(u,v)$ already exists between the vertices $u$ and $v$. Thus if it is the case that 
$\mu_{G'} > t \cdot \mu_G$ and we consider adding the edge $e$ to the graph $G'$ as per algorithm [\ref{algo:greedymecsv2}],
then this this addition only makes sense if the following holds for the existing path between the vertices $u$ and $v$ in $G'$:
$P(u,v)_{G'} > t \cdot W(e)$, for if not then the path $P(u,v)_{G'}$ would satisfy $P(u,v)_{G'} \leq t \cdot P(u,v)_{G}$. We
know by lemma [\ref{l1}], that if we can ensure this for all pairs of vertices, then we have got a feasible solution to the 
MECS problem. This whenever $P(u,v)_{G'} \leq t \cdot P(u,v)_{G}$ and we are adding the edge $e=(u,v)$, we can safely ignore
adding the edge and still ensure the upper bound on the average path length. With this small optimization, we can expect to
reduce the weight of the resulting solution $G'$. With this change we can state the algorithm as follows:

\begin{algorithm}
\DontPrintSemicolon 
\SetKwFunction{algo}{GreedyMECSV2}
\SetKwProg{myalg}{Procedure}{}{}
\myalg{\algo{$G$, $t$}}{
\KwIn{$G=(V,E,W)$ the input graph, $t$ the stretch}
\KwOut{$G$: The MECS Spanner}
$G' \gets (V,\{ \})$\;
Sort $E$ in non-decreasing order of weights\;
$\mu \gets$ average distance in graph $G$\;
\For{$e \in E$} {
  $\mu_{G'} \gets$ average distance of $G'$\;
  \If{$\mu_{G'} > t \cdot \mu$} {
    $P(u,v) \gets$ Shortest path between $u$ and $v$ in $G'$\;
    \If{$P(u,v) > t \cdot W(e)$} {
      $G' \gets G' \cup {e}$\;
    }
  }
}
\Return{$G'$}\;
}
\caption{The Greedy Addition MECS Algorithm (Optimized)}
\label{algo:greedymecsv2opt}
\end{algorithm}

The output of algorithm [\ref{algo:greedymecsv2opt}] has several nice properties as well. We note that the change that
we have done to the algorithm [\ref{algo:greedymecsv2}] prohibits the formation of cycles and hence we can still claim
that the resulting solution contains the MST. Thus we can state that:

\begin{lemma}
Consider the graph $G'$ returned by the algorithm [\ref{algo:greedymecsv2opt}] . For any input graph $G$, the graph $G'$
contains the MST of $G$. 
\end{lemma}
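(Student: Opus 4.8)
The plan is to reuse the Kruskal-analogy argument already established for Algorithm~[\ref{algo:greedymecsv2}], adapting it to account for the extra acceptance test $P(u,v) > t\cdot W(e)$ that distinguishes Algorithm~[\ref{algo:greedymecsv2opt}]. As before, I would run Kruskal's algorithm on $G$ in parallel with the optimized algorithm, both processing the edges of $E$ in the same non-decreasing weight order, and track the partition of $V$ into connected components maintained by each procedure. Denote by $G'_i$ and $H_i$ the graphs built by Algorithm~[\ref{algo:greedymecsv2opt}] and by Kruskal after the first $i$ edges have been examined. The central claim to prove, by induction on $i$, is that every edge Kruskal has selected so far lies in $G'_i$ and that $G'_i$ and $H_i$ induce the same partition of $V$ into connected components.

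The base case ($i=0$) is immediate, since both procedures start from the edgeless forest. For the inductive step I would consider the $(i+1)$-th edge $e=(u,v)$ and split on whether $u$ and $v$ lie in the same component. If they lie in the same component (which, by the induction hypothesis, holds simultaneously in $H_i$ and $G'_i$), Kruskal rejects $e$; the optimized algorithm may or may not add $e$, but in either case $e$ is a chord of an already-connected component, so it leaves the partition unchanged and preserves the invariant. The decisive case is when $u$ and $v$ lie in different components. Then $G'_i$ is disconnected, so by the convention that a disconnected graph has infinite average distance we have $\mu_{G'_i}=\infty>t\mu_G$, and moreover there is no path between $u$ and $v$ in $G'_i$, so $P(u,v)=\infty>t\cdot W(e)$. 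Hence both guards of Algorithm~[\ref{algo:greedymecsv2opt}] are satisfied, $e$ is added, and it merges exactly the two components that Kruskal merges. This closes the induction, and taking $i=|E|$ gives $G'\supseteq \mathrm{MST}(G)$.

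The step I expect to require the most care is justifying that the newly inserted test $P(u,v)>t\cdot W(e)$ never causes the algorithm to reject a \emph{merging} edge, and, dually, that the chord edges the algorithm is free to insert cannot corrupt the component structure. Both points hinge on the single observation isolated above: a merging edge always has $P(u,v)=\infty$ and is always encountered while $G'$ is still disconnected, so that $\mu_{G'}=\infty$ forces acceptance, whereas any accepted chord keeps its endpoints in one component and therefore leaves the partition — and hence the set of merging edges still to come — identical to Kruskal's run. This is precisely the sense in which the optimization ``prohibits the formation of cycles'' only among the redundant, within-component edges while leaving the MST-building backbone untouched.

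A minor bookkeeping point I would flag is the treatment of the infinite-average-distance convention and of $P(u,v)=\infty$: these are what guarantee that the outer guard $\mu_{G'}>t\mu_G$ and the inner guard $P(u,v)>t\cdot W(e)$ are \emph{both} vacuously satisfied for merging edges, so that the optimized algorithm cannot stall before connectivity is achieved. Once this is pinned down, the rest of the argument is the routine induction sketched above, mirroring the corresponding lemma for Algorithm~[\ref{algo:greedymecsv2}].
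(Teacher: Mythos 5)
Your proof is correct and takes essentially the same approach as the paper: the paper disposes of this lemma with a one-line remark that the extra test $P(u,v) > t\cdot W(e)$ can only suppress cycle-forming (within-component) edges, so the Kruskal-parallel induction already proved for Algorithm [\ref{algo:greedymecsv2}] carries over intact. Your write-up makes that remark rigorous by re-running the induction with the key observation stated explicitly---a merging edge is always encountered while $\mu_{G'}=\infty$ and $P(u,v)=\infty$, so both guards pass---which is, if anything, more detailed than the proof the paper itself gives.
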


The proof follows simply by the fact that we are breaking up cycles only and hence we are not changing the MST that is 
contained in the solution returned by the algorithm [\ref{algo:greedymecsv2}]. As a result we omit a detailed proof here. 


\section{Solutions for Unit Disk Graph} \label{AppC}

The following two figures show the results of running the exact MIP based solutions on the unweighted and weighted
unit disk graphs respectively. Figure [\ref{fig_uwd_mip}] shows the results for the unweighted graph and figure
[\ref{fig_wd_mip}] show the results for the weighted unit disk graph.

\begin{figure}[ht]
  \centering
\subfloat[$|E|=119$, $\mu=4.86$]{\label{fig_unit_disk_orig_1}\includegraphics[scale=0.1]{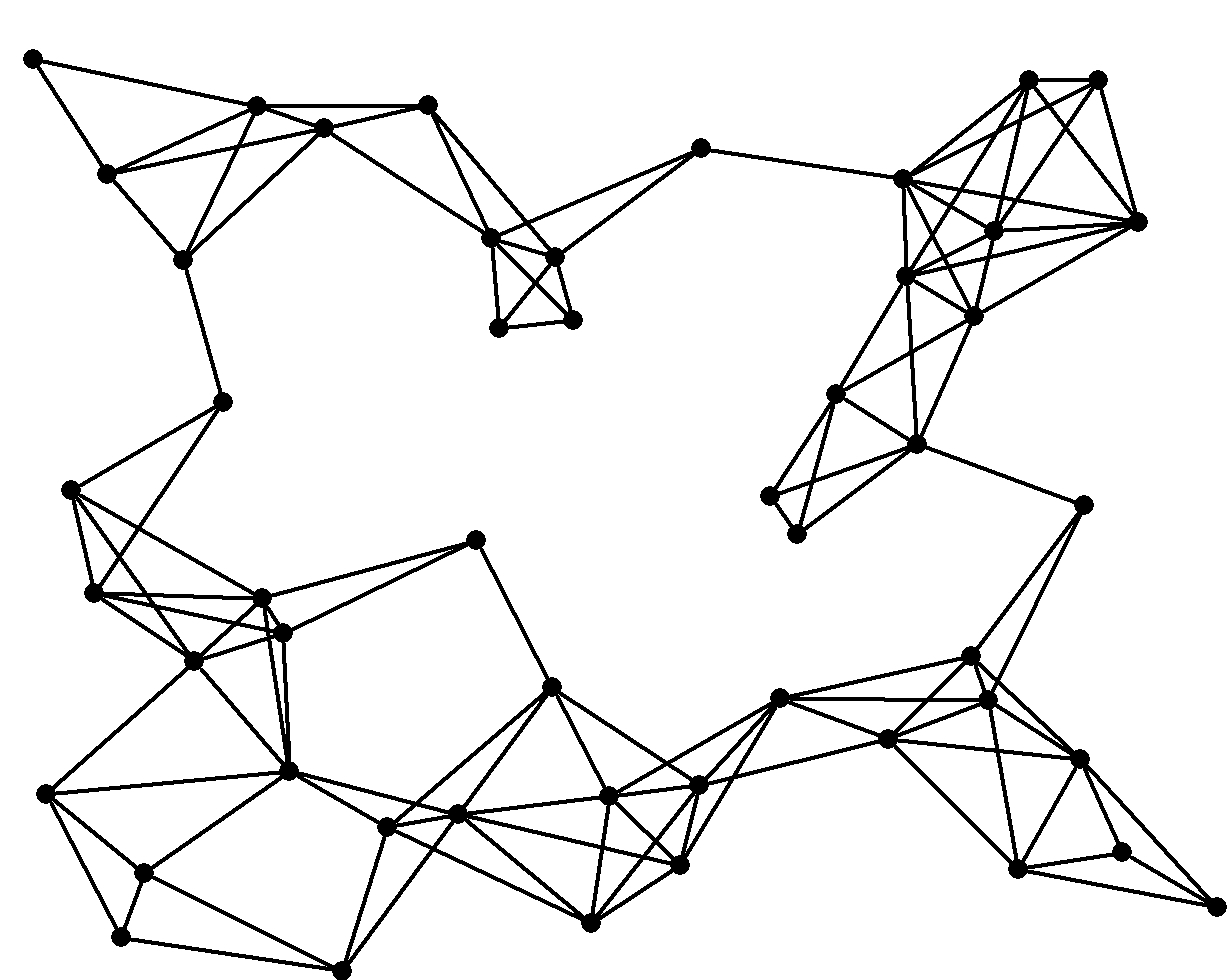}}
\subfloat[ $|E_s|=68$, $\mu_s=4.96$]{\label{fig_unit_disk_496_1}\includegraphics[scale=0.1]{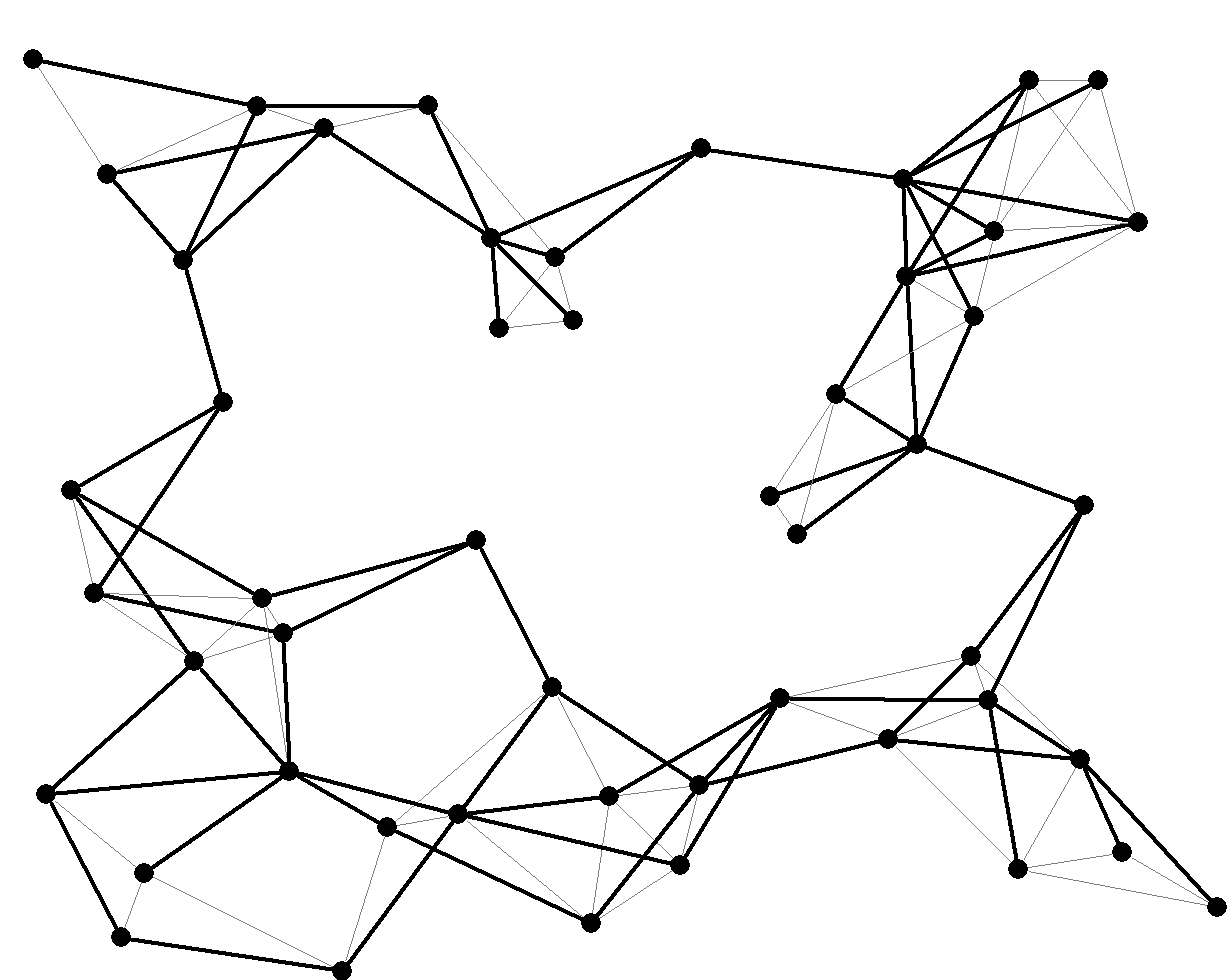}}\\
\subfloat[ $|E_s|=62$, $\mu_s=5.06$]{\label{fig_unit_disk_506_1}\includegraphics[scale=0.1]{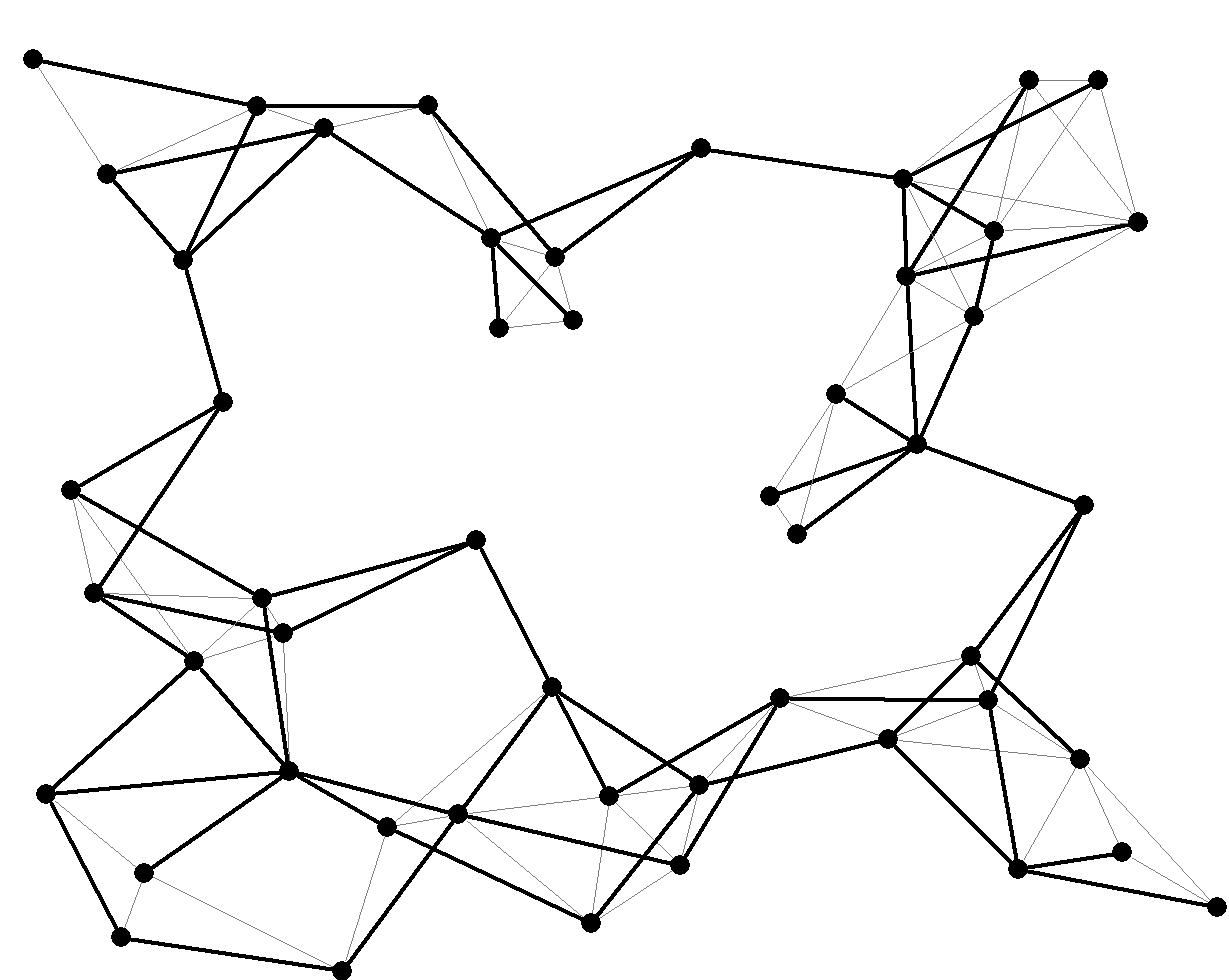}}
\subfloat[ $|E_s|=58$, $\mu_s=5.16$]{\label{fig_unit_disk_516_1}\includegraphics[scale=0.1]{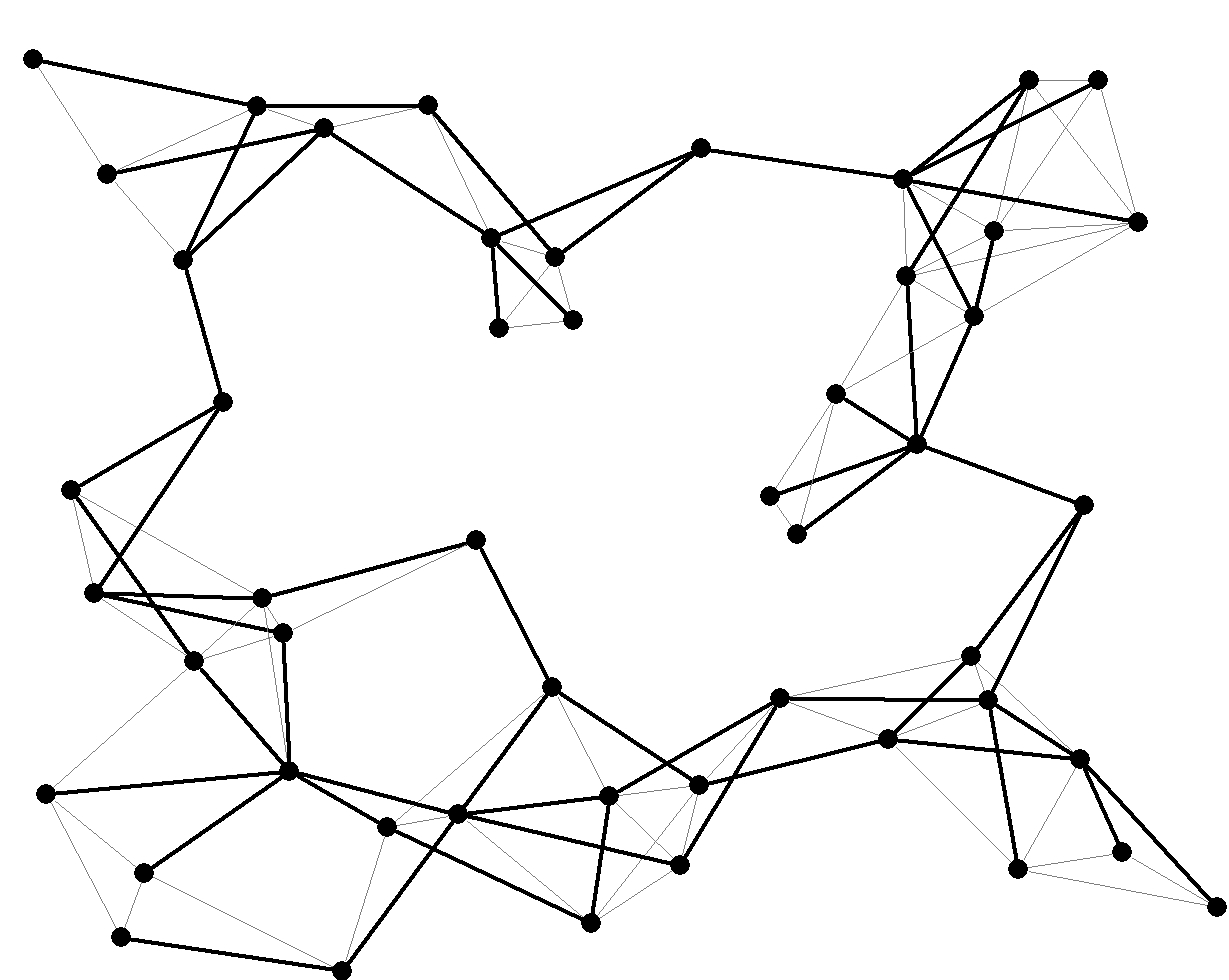}}
  \caption{Illustration of the \textbf{Unit Disk} randomly generated network instance ($|V|=50, |E|=119, D=11$) with average distance $\mu=4.86$ 
  and the optimal spanning subgraph (thick edges) with average distance (b) $\mu_s=\mu+0.1$, (c) $\mu_s=\mu+0.2$, (d) $\mu_s=\mu+0.3$.}
    \label{fig_uwd_mip}
\end{figure}

\begin{figure}[ht]
  \centering
\subfloat[$|E_w|=309$, $\mu=6.1$]{\label{fig_unit_disk_orig_2}\includegraphics[scale=0.1]{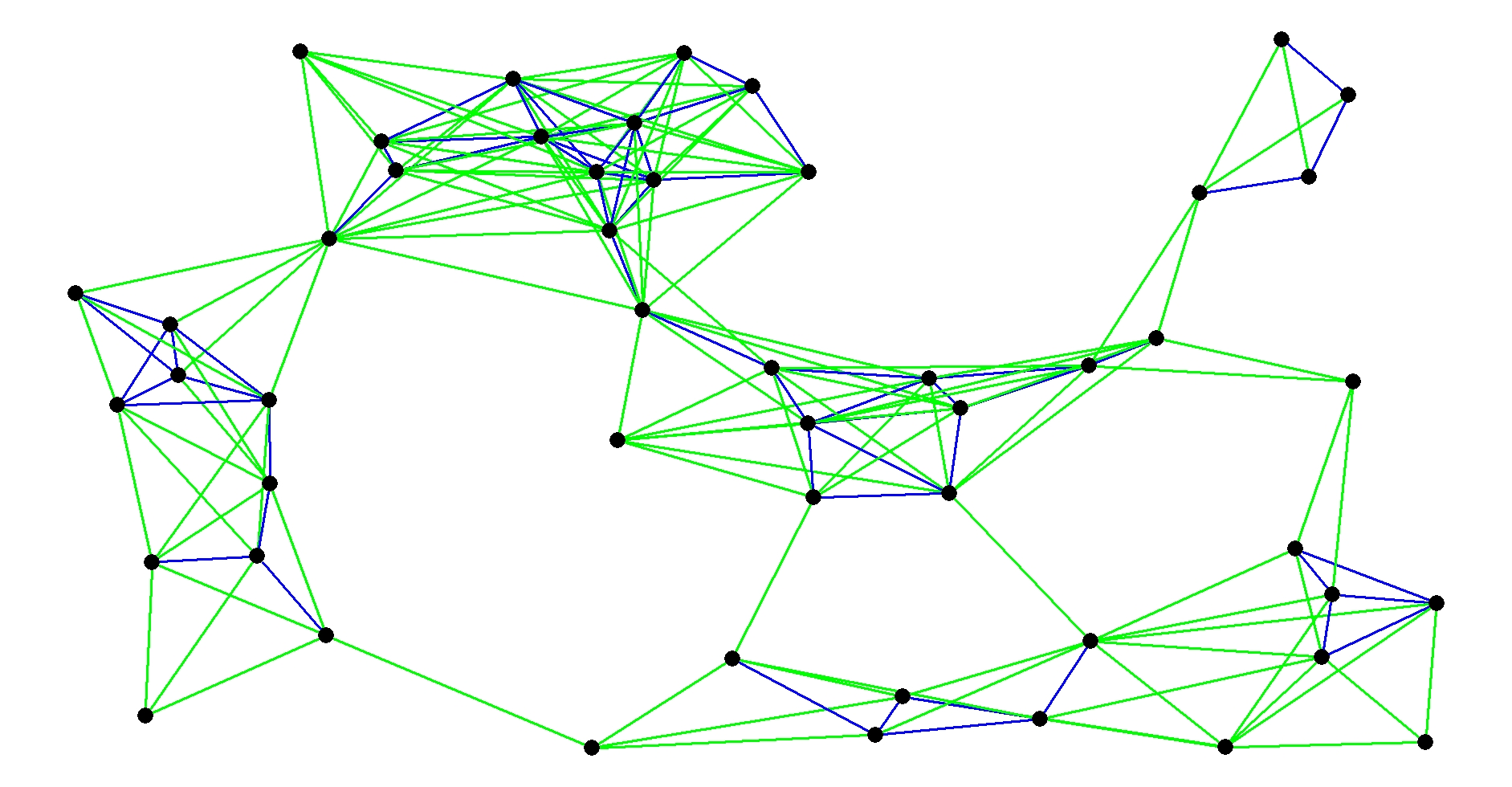}}
\subfloat[ $|E_s|=119$, $\mu_s=6.2$]{\label{fig_unit_disk_496_2}\includegraphics[scale=0.1]{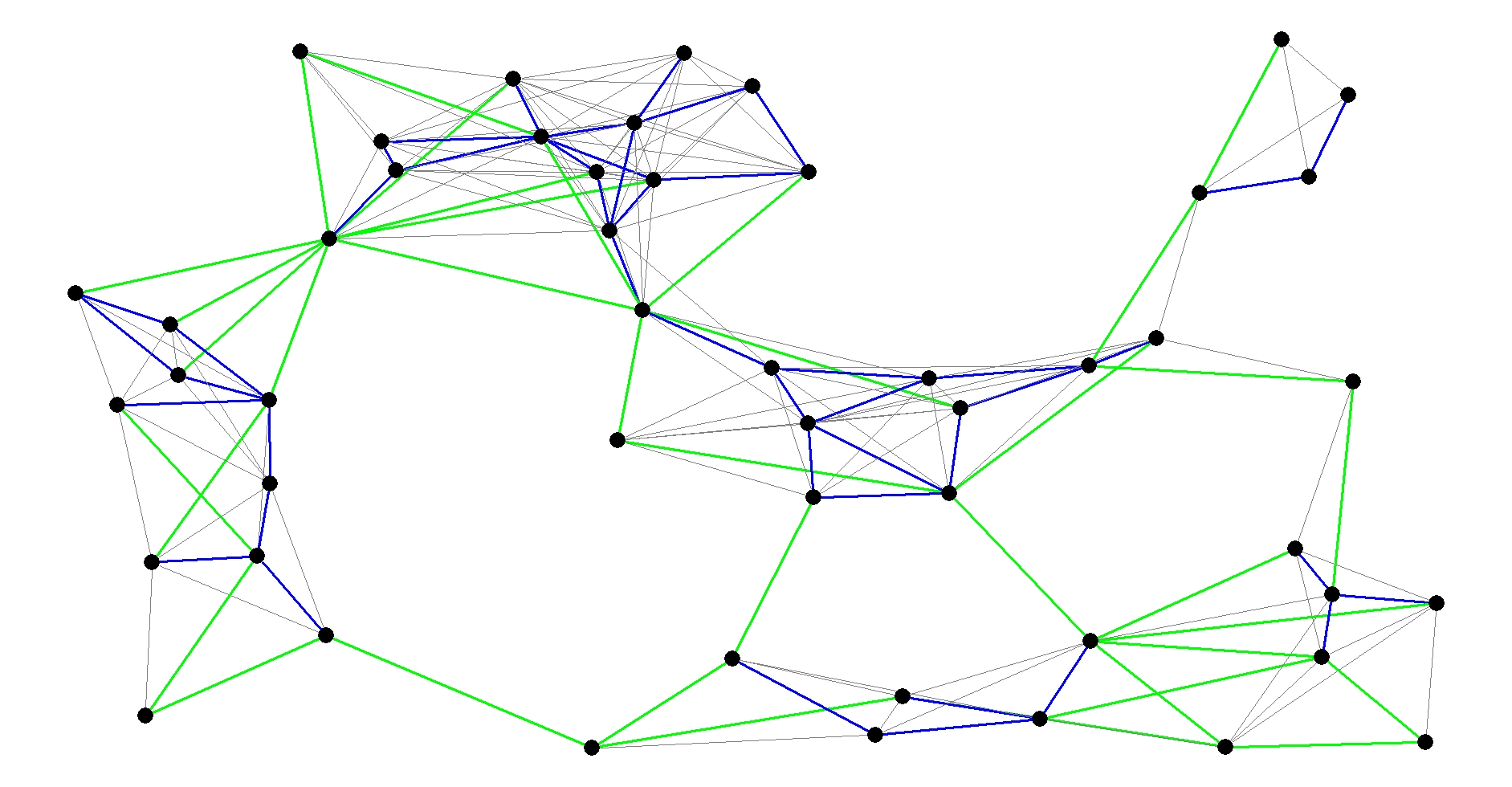}}\\
\subfloat[ $|E_s|=102$, $\mu_s=6.3$]{\label{fig_unit_disk_506_2}\includegraphics[scale=0.1]{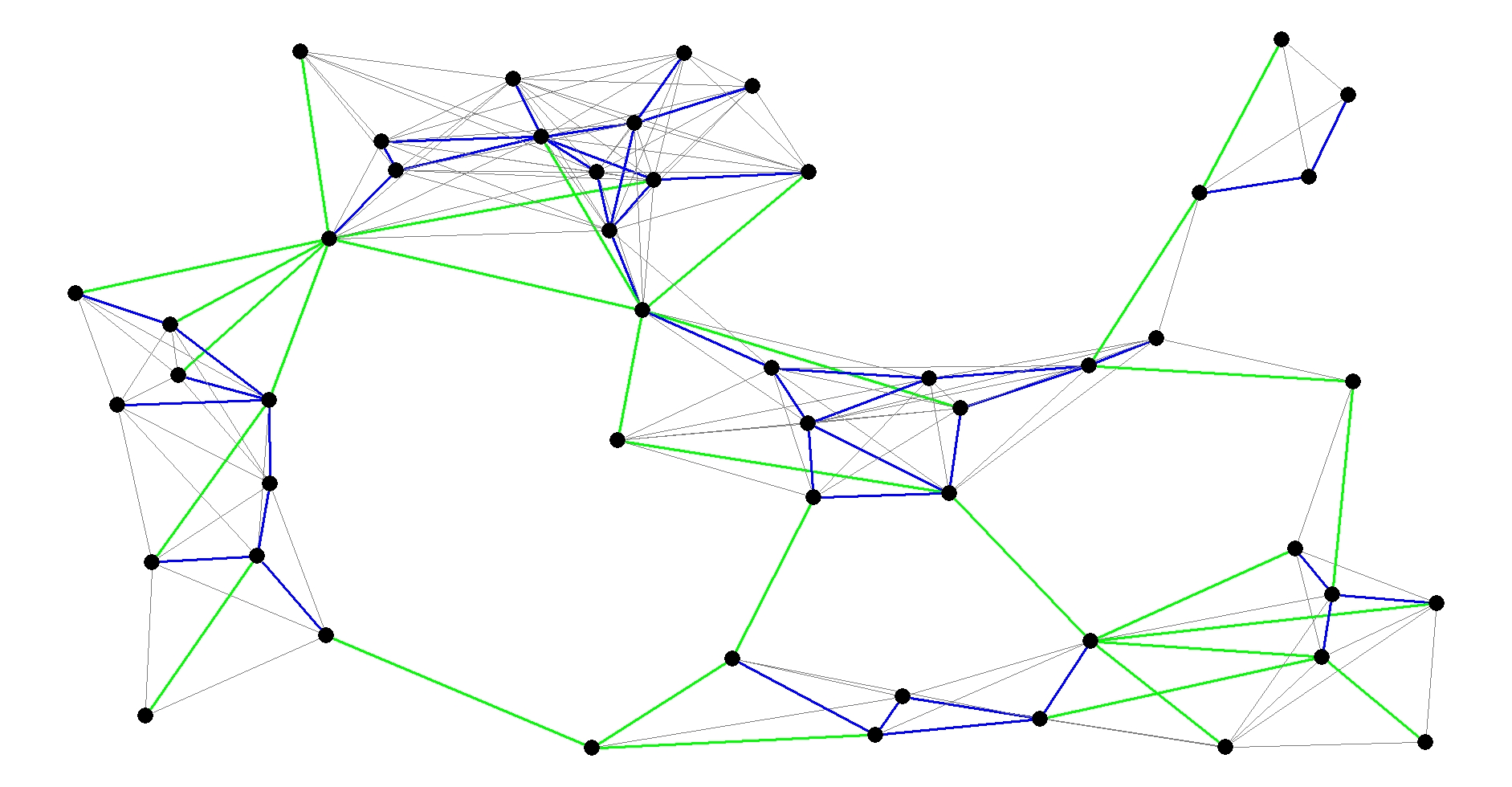}}
\subfloat[ $|E_s|=93$, $\mu_s=6.4$]{\label{fig_unit_disk_516_2}\includegraphics[scale=0.1]{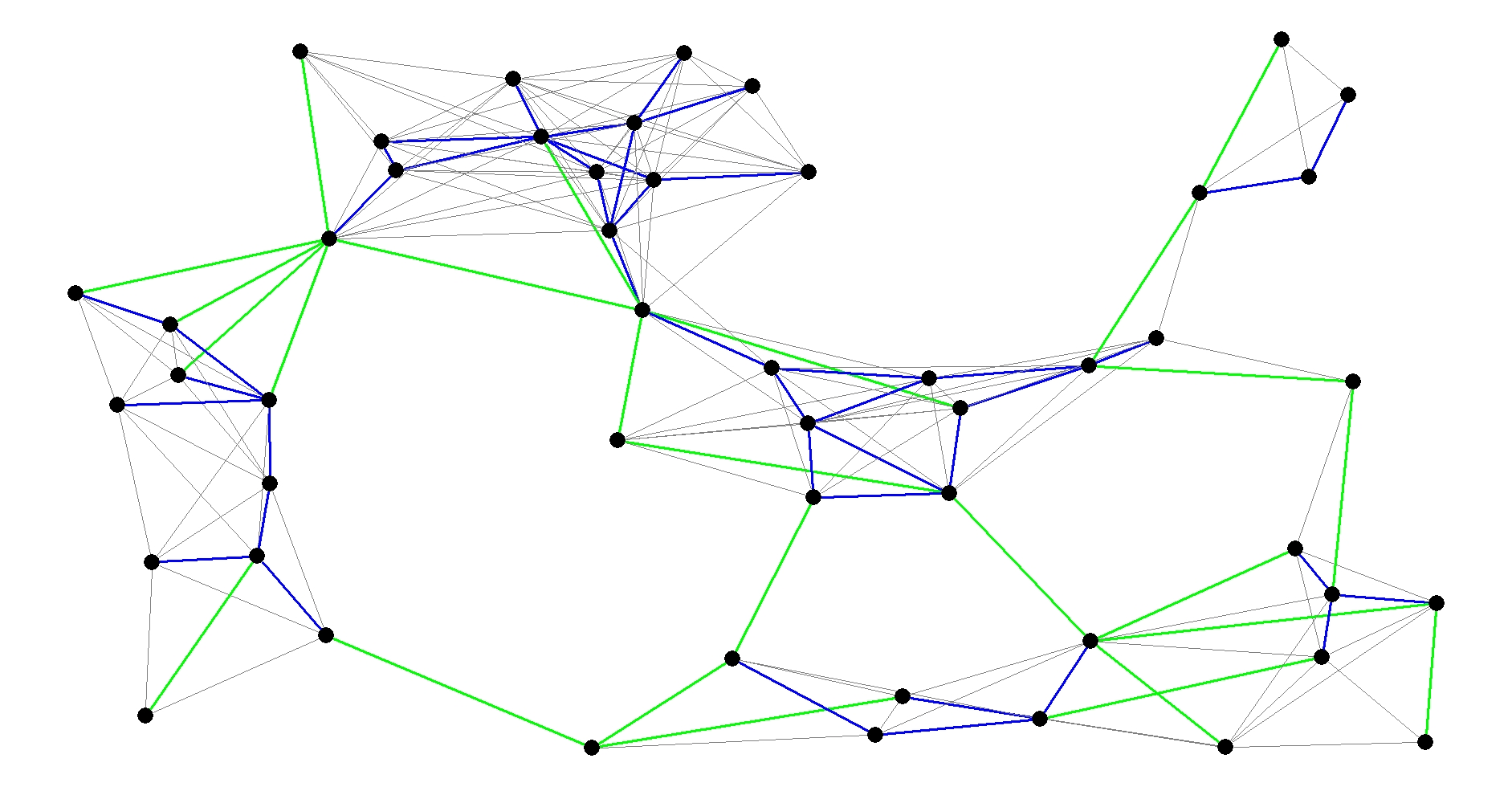}}
  \caption{Illustration of the weighted \textbf{Unit Disk} randomly generated network instance on a 100x100 plane ($|V|=50, |E_w|=309, D=15$) 
  with average distance $\mu=6.1$ and the optimal spanning subgraph (thick edges) with average distance (b) $\mu_s=\mu+0.1$, 
  (c) $\mu_s=\mu+0.2$, (d) $\mu_s=\mu+0.3$. We set $a_{ij}=1$ if euclidean distance between $i$ and $j$ is less than $12.5$ (blue edge), 
  and $a_{ij}=1$ if euclidean distance between $i$ and $j$ is less than $25$ and greater than 12.5 (green edge).}
  \label{fig_wd_mip}
\end{figure}



\end{document}